\newtheorem{theorem}{Theorem}[section]
\newtheorem{proposition}[theorem]{Proposition}
\newtheorem{lemma}[theorem]{Lemma}
\newtheorem{corollary}[theorem]{Corollary}
\theoremstyle{definition}
\newtheorem{definition}[theorem]{Definition}
\theoremstyle{remark}
\newtheorem{remark}[theorem]{Remark}
\numberwithin{equation}{section}
\newcommand{\be}{\begin{equation}}
\newcommand{\ee}{\end{equation}}
\newcommand{\bbC}{{\mathbb C}}
\newcommand{\bbN}{{\mathbb N}}
\newcommand{\bbZ}{{\mathbb Z}}
\newcommand{\bbR}{{\mathbb R}}
\newcommand{\bbS}{{\mathbb S}}
\newcommand{\calQ}{{\mathcal Q}}
\newcommand{\calK}{{\mathcal K}}
\newcommand{\calM}{{\mathcal M}}
\newcommand{\calP}{{\mathcal P}}
\newcommand{\calC}{{\mathcal C}}
\newcommand{\calS}{{\mathcal S}}
\newcommand{\calO}{{\mathcal O}}
\newcommand{\vp}{\varphi}
\newcommand{\R}{\mathbb R}
\newcommand{\norm}[1]{\lVert#1\rVert}
\newcommand{\w}{ P }
\newcommand{\uv}{ \mathcal{R} }
\newcommand{\U}[1]{e^{\frac{i\widetilde{W}_{\leq #1}}{\hbar}}}
\newcommand{\Uinv}[1]{e^{\frac{-i\widetilde{W}_{\leq #1}}{\hbar}}}
\newcommand{\W}[1]{\widetilde{W}_{\leq #1}}
\newcommand{\bz}{\bar{z}}
\newcommand{\normleft}{\left|\left|}
\newcommand{\normright}{\right|\right|}
\newcommand{\mul }{\langle\mu,\nu|}
\newcommand{\mur}{|\mu,\nu\rangle}
\newcommand{\opH}{H(x,\hbar D_x)}
\newcommand{\tendvers}[2]{\underset{#1\rightarrow #2}{\longrightarrow}}
\newcommand{\bcb}{\begin{color}{blue}}
\newcommand{\bcr}{\begin{color}{red}}
\newcommand{\ecc}{\end{color}}
\newcommand{\tcr}{}
\newcommand{\tcrr}{}
\newcommand{\tcb}{}
\title[recovering Hamiltonians]{Recovering the Hamiltonian from spectral data}
\author[C. H\'eriveaux and T. Paul ]{C. H\'eriveaux and T. Paul}
\address{CMLS \'Ecole polytechnique, 91 128 Palaiseau cedex}
\email{cyrille.heriveaux@math.polytechnique.fr}
\address{CNRS and CMLS \'Ecole polytechnique, 91 128 Palaiseau cedex}
\email{paul@math.polytechnique.fr}
\begin{document}

%\date{}
\maketitle

%\tableofcontents

\begin{abstract}
We show that the contributions to the Gutzwiller formula with observable associated to the iterates 
of a given elliptic nondegenerate periodic trajectory $\gamma$ and to certain families of observables localized near $\gamma$ determine the quantum Hamiltonian in a formal neighborhood of the trajectory $\gamma$, that is the full Taylor expansion of its total symbol near $\gamma$. We also treat the ``bottom of a well" case both for general and Schrödinger operators, and give some analog  classical results.

\end{abstract}

\tableofcontents

\section{Introduction and main results}
It is well known that spectral properties of semiclassical Hamiltonians and dynamical properties of their principal symbols are linked.
Even when there is no precise information ``eigenvalue by eigenvalue" of the spectrum, the so-called Gutzwiller trace formula provide information on averages of the spectrum at scale of the Planck constant.
More precisely, let $H(x,\hbar D_x)$ be a self-adjoint
semiclassical elliptic pseudodifferential operator on a compact manifold $X$ of dimension $n+1$, whose symbol $H(x,\xi)$ is proper (as a map from $T^*X$ into $\bbR$). 
Let E be a regular value of $H$ and $\gamma$ a non-degenerate periodic trajectory of \tcr{primitive} period 
$T_\gamma$ lying on the energy surface $H=E$.

\tcr{Consider the Gutzwiller trace (see \cite{gut})
\be\label{bom}
\text{Tr}\left(\psi_r\left(\frac{H(x,\hbar D_x)-E}\hbar\right)\right)=\sum_i \psi_r\left(\frac{E-E_i}\hbar\right)
\ee
where for $r\in\bbZ^*$, $\psi_r$ is a $C^\infty$ function whose Fourier transform is compactly supported with 
support in a small enough neighborhood of $rT_\gamma$ and is identically one in a 
still smaller neighborhood containing $rT_\gamma$. As shown in \cite{pu1}, \cite{pu2} (\ref{bom}) has an asymptotic 
expansion
\be\label{boj}
e^{i\frac{S_\gamma}\hbar + \sigma_\gamma}\sum_{k=0}^\infty a_k^r\hbar^k
\ee
}

In \cite{vgtp} was shown how to compute the terms of this expansion to 
\underline{all}\rm \ orders in terms of a microlocal Birkhoff canonical form for $H$ in a formal neighborhood of $\gamma$, and that 
the family of constants $(a_k^r)_{(k,r)\in\bbN\times\bbZ^*}$ determine the microlocal 
Birkhoff canonical form for $H$ in a formal neighborhood of $\gamma$ (and hence, a 
fortiori, determine the classical Birkhoff canonical form).
When it is known ``a priori" that $H(x,\hbar D_x)$ is a Schr\"odinger operator, it is known that the normal form near the bottom of a well determines part of the potential $V$ \cite{gur}. But in the general case the Gutzwiller formula will determine only the normal form of the Hamiltonian, that is to say $H(x,\hbar D_x)$  only modulo unitary operators, and its principal symbol only modulo symplectomorphisms. Of course it cannot determine more, as the spectrum, and a fortiori the trace, is insensitive to unitary conjugation. 
The aim of this paper is to address the question of determining the \textit{true} Hamiltonian from more precise spectral data, namely from the Gutzwiller trace formula with observables.

It is well know that, for any pseudodifferential operator $O(x,\hbar D_x)$ of symbol $\mathcal O(x,\xi)$, there is a  result equivalent  to \eqref{boj} for the following quantity 
\be\label{gutzo}
\text{Tr}\left(O(x,\hbar D_x)\psi\left(\frac{H(x,\hbar D_x)-E}\hbar\right)\right)=\sum_i \langle\vp_j,O(x,\hbar D_x)\vp_j\rangle\psi\left(\frac{E-E_i}\hbar\right),
\ee
(here $\vp_j$ is meant as the eigenvector of eigenvalue $E_j$)
under the form of an asymptotic expansion of the form
\be\label{Sbojo}
e^{i\frac{S_\gamma}\hbar + \sigma_\gamma}\sum_{k=0}^\infty a_k^r(\mathcal O)\hbar^k
\ee
where $a_k^r$ are distributions supported on $\gamma$.

We will show in the present paper that the knowledge of the coefficients $a_k^\gamma(O)$ for $O$ belonging to some family of  observables  localized near $\gamma$ is enough to determine the full Taylor expansion of the total symbol of $H(x,\hbar D_x)$ near $\gamma$,  in other words $H(x,\hbar D_x)$ microlocally in a formal neighborhood of $\gamma$, when $\gamma$ is non-degenerate elliptic 
(including the case where $\gamma$ is reduced to a point (bottom of a well)). 
\tcb{
Let us first remark that 
the trace formula with any  observable microlocalized in a small enough neighborhood of $\gamma$ determines obviously its primitive period $T_\gamma$. We will assume that 
any multiple of $T_\gamma$ is isolated in the set of the periods of all the periodic trajectories on the same energy shell (let us remark that in case this condition is not fulfilled, 
our results remains valid by taking observables microlocalized in a neighborhood of the non-degenerate elliptic $\gamma$).
Moreover it is known (\cite{fried,gu}) that the coefficients of  trace formula determines the Poincaré angles modulo $2\pi\bbZ$ and we prove in Appendix \ref{deuxpi} 
that, in the case where $\gamma$ is not reduced to one point, any realization  of the Poincaré angles as real numbers leads to a different Birkhoff normal form but give an explicit symplectomorphism that conjugates one to another: hence, our reconstruction of the ``true" Hamiltonian is  independent of the choice of the realization. We also show that in the ``bottom of a well" case, the $\theta_i$s are determined by the spectrum.}

\tcb{
Therefore the only knowledge we will require will be the fact that there exists a geometric periodic trajectory $\gamma$, possibly of dimension zero, which is elliptic non-degenerate (see definition below) and whose set of periods in isolated in the set of periods of the same energy shell.}

We will be concerned by three cases:
\begin{enumerate}
\item\label{1} $\gamma$ is a curve
\item\label{2} the general ``bottom of a well" case ($\gamma$  reduced to a point)
\item\label{3} the ``bottom of a well" case when the Hamiltonian is a Schr\"odinger operator.
\end{enumerate}
Our results will also be of three different kinds :
\begin{itemize}
\item[a.]\label{a} The knowledge of the coefficients of the trace formula for \eqref{1}, or of some of the diagonal matrix elements (expectation values) between eigenvectors of the Hamiltonian for \eqref{2},\eqref{3}, for a  family of observables satisfying some algebraic properties on $\gamma$ determine some Fermi coordinates (see definition below). 
It is the content of Theorems \ref{mainFermi}, \ref{fermicorbot}, \ref{hope2fermi}.
\item[b.] The knowledge of the coefficients of the trace formula for \eqref{1}, or of some of the diagonal matrix elements (expectation values) between eigenvectors of the Hamiltonian for \eqref{2}-\eqref{3}, for another  family of observables, expressed on any (not necessarily the one determined by a.) Fermi system of coordinates, determine the full Taylor expansion of the total symbol of the Hamiltonian expressed on these Fermi coordinates (Theorems \ref{main}, \ref{corbot}, \ref{hope2}).
\item[c.] The combination of the two preceding cases, where the family of observables defined in a. drives the knowledge of the full Hamiltonian. More precisely, the knowledge of the quantities expressed in a. determines a family of observables, which is precisely the one defined in b. expressed in the Fermi system determined in a., the trace coefficients or some of the diagonal matrix elements of which determine the full Taylor expansion of the Hamiltonian on a determined system of coordinates (Corollaries \ref{toutper}, \ref{toutfond}, \ref{toutsch}).
\end{itemize}

Finally we obtain analog classical results as byproduct of the quantum ones in Section \ref{class}.

\begin{definition}\label{nondeg}
A periodic trajectory of the Hamiltonian flow generated by $H(x, \xi)$ is said to be non-degenerate elliptic if  its linearized Poincar\'e map has eigenvalues $(e^{\pm i\theta_i})_{1\leq i\leq n}$, $\theta_j\in\R$, and the rotation angles $\theta_i$ ($1\leq i\leq n$) and $\pi$ are independent over $\mathbb Q$.
\end{definition}

\begin{definition}[Fermi coordinates]\label{def:fermicoord}
We will denote by ``Fermi coordinates" any system of local coordinates of $T^*\mathcal M$  near $\gamma$, $(x,t,\xi,\tau)\in T^*(\bbR^n\times\bbS^1)$, such that $\gamma=\{x=\xi=\tau=0\}$ and on which the principal symbol $H_p$ of $H(x,\hbar D_x)$ can be written \tcb{for any chosen realization of the Poincaré angles $\theta_i\in\bbR$} as: 

\be
H_p (x,t,\xi,\tau)=H_0(x,t,\xi,\tau)+H_2
\ee
where 
\be\label{H_0}
H_0(x,t,\xi,\tau)=E+\sum_{i=1}^n\theta_i\frac{x_i^2+\xi_i^2}{2}+\tau
\ee
and 
\be\label{annulordre3}
H_2=O\left((x^2+\xi^2+\vert\tau\vert)^{\frac{3}{2}}\right)
\ee
\end{definition}

The existence of such local coordinates, guaranteed by the Weinstein tubular neighborhood Theorem (\cite{wei}), was proved in \cite{gu,vgtp,ze1}
 under the hypothesis of non-degeneracy mentioned earlier. However the construction of Fermi coordinates involves the knowledge of \tcr{the quadratic part of} $H_p$ in a neighborhood of $\gamma$. Our first result shows that a system of Fermi coordinates can be determined by $\gamma$ only at the classical level and some quantum spectral quantities. 
(\tcr{constructed out of a system of local coordinates near $\gamma$ and some quantum spectral quantities.}) 
 
 \begin{theorem}\label{mainFermi}

Let $\w^k_p, k=0,1,\dots ,2n^2+n, p\in\bbZ,$  be any pseudodifferential operators whose respective principal symbols $\mathcal\w^k_p$ satisfy in a local symplectic system of coordinates $(x,t,\xi,\tau)\in  T^*(\R^n\times \bbS^1)$ such that $\gamma=\{x=\xi=\tau=0\}$:
\begin{equation}\mathcal\w^0_p(x,t,\xi,\tau)=e^{-2i\pi pt}\tau \ \text{ and }\ \mathcal\w^k_p(x,t,\xi,\tau)=e^{-2i\pi pt}\uv^k(x,\xi),\ k=1,\dots, 2n^2+n\end{equation}
 with the property that 
\textbf{$\uv^k(0)=\nabla\uv^k(0)=0$ and the  Hessians  $d^2\uv^k(0)$ are linearly independent}.

An example of such symbols is given by the family,
\be\label{defobq}
\left\{\begin{array}{rcl}
\mathcal Q^1_{ijp}(x,t,\xi,\tau)&=&e^{-2i\pi pt}x_i\xi_j\\
\mathcal Q^2_{ijp}(x,t,\xi,\tau)&=&e^{-2i\pi pt}x_ix_j\\
\mathcal Q^3_{ijp}(x,t,\xi,\tau)&=&e^{-2i\pi pt}\xi_i\xi_j\\
\calQ_p(x,t,\xi,\tau)&=&e^{-2i\pi pt}\tau
\end{array}\right.
\ee
Then, the knowledge of the coefficients $(a_1^l(\w_p^k))_{0\leq k\leq 2n^2+n}$ in \eqref{gutzo}-\eqref{Sbojo}  
determines (in a constructive way) an explicit  system of Fermi coordinates near $\gamma$.

 \end{theorem}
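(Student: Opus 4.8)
The plan is to extract, from the two lowest Gutzwiller coefficients $a_0^l$ and $a_1^l(\w_p^k)$, enough linear data to pin down the quadratic part of $H_p$ near $\gamma$, and then to observe that the quadratic part alone determines a Fermi system via the Weinstein-type construction already cited. Concretely, one starts from a fixed but arbitrary local symplectic chart $(x,t,\xi,\tau)$ with $\gamma=\{x=\xi=\tau=0\}$ (whose existence is Weinstein's theorem). In this chart the unknown is the $2$-jet of $H_p$ along $\gamma$: since $H_p=E$ on $\gamma$ and $\gamma$ is a trajectory, $dH_p$ along $\gamma$ is forced (it must be $dt$ up to the chart normalization, i.e. $H_p = E + \tau + (\text{quadratic in }x,\xi,\tau) + O(3)$ after adjusting $t$), so the genuine unknowns are the coefficients of a $t$-dependent quadratic form $Q(t)$ in the transverse variables $(x,\xi)$ together with the cross terms with $\tau$. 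Fourier expanding in $t$, these are finitely many numbers per Fourier mode $p$: the symmetric $2n\times 2n$ matrix of the $(x,\xi)$-quadratic form has $2n^2+n$ independent entries, which is exactly why the theorem uses $2n^2+n$ observables $\w^k_p$ of transverse type plus the one observable $\w^0_p = e^{-2i\pi pt}\tau$ to capture the $\tau$-linear Fourier modes.

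The key computational step is the first-order stationary-phase / transport analysis underlying \eqref{Sbojo}. One writes $a_1^l(\w^k_p)$ as an integral over (the $l$-fold iterate of) $\gamma$ of the principal symbol $\mathcal\w^k_p$ of the observable, weighted by a universal factor coming from the leading transport equation, plus a subprincipal correction — but to leading order the quantity $a_1^l(\w^k_p)$ is an affine-linear functional of the $2$-jet of $H_p$ applied to $\mathcal\w^k_p$. The crucial point is that the dependence enters through the Hessian $d^2\mathcal\w^k_p(0)$ paired against the unknown quadratic data, and the hypothesis that the Hessians $d^2\uv^k(0)$ are \emph{linearly independent} (with $2n^2+n$ of them, hence a basis of the space of symmetric $2n\times 2n$ matrices) is exactly what makes this finite linear system invertible, mode by mode in $p$. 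So the step is: (i) identify from the known results of \cite{vgtp} the precise affine-linear map $\text{(2-jet of }H_p) \mapsto (a_1^l(\w^k_p))$; (ii) check its linear part is injective on the relevant finite-dimensional space, using the linear independence of the Hessians and a nonvanishing of the universal transport weight (the latter follows from ellipticity/non-degeneracy: the Poincaré map has no eigenvalue $1$, so the relevant Van Vleck determinant does not blow up); (iii) solve for the $2$-jet of $H_p$.

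Once the full $2$-jet of $H_p$ along $\gamma$ is known in the chosen chart, the construction of Fermi coordinates is purely classical: diagonalize the transverse quadratic form (its eigenvalues are the $\theta_i$, determined up to the already-understood $2\pi\bbZ$ ambiguity, cf. the remarks before Definition~\ref{def:fermicoord} and Appendix~\ref{deuxpi}), perform the linear symplectic change of variables in $(x,\xi)$ diagonalizing it and the symplectic correction absorbing the $\tau$-cross terms and the $t$-dependence of $Q(t)$ — this is exactly the normalization \eqref{H_0}, \eqref{annulordre3} in Definition~\ref{def:fermicoord}, and it is carried out explicitly in \cite{gu,vgtp,ze1}. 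Tracking through that construction with the now-\emph{known} $2$-jet yields an explicit Fermi system, which proves the theorem.

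I expect the main obstacle to be step (i)–(ii): namely, writing down cleanly the first Gutzwiller coefficient with observable $a_1^l(\mathcal O)$ as an \emph{explicit} affine-linear functional of the $2$-jet of $H_p$ (separating the contribution of $d^2\mathcal O(0)$ from that of the first subprincipal symbol of $O$, which is irrelevant here since we only feed in principal symbols), and verifying that the universal transport weight is nonzero so that linear independence of the $d^2\uv^k(0)$ genuinely transfers to invertibility of the reconstruction map. This is where one must be careful: the $\tau$-observable $\w^0_p$ plays a slightly different role from the transverse ones (it probes the $\tau$-dependence, i.e. the "energy-derivative" direction, rather than the transverse Hessian), and one must confirm the combined $(2n^2+n+1)$-parameter system is exactly determined, not over- or under-determined, for each Fourier mode $p$.
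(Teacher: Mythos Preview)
Your high-level strategy is reasonable, but there is a genuine gap at the core of step (i)--(ii): the map
\[
\text{(2-jet of }H_p\text{ along }\gamma)\ \longmapsto\ a_1^l(\w^k_p)
\]
is \emph{not} affine-linear. What enters the leading trace coefficient is the propagation of the quadratic observable by the linearized flow along $\gamma$, i.e.\ the solution of the variational equation $\dot X=J\,Q(t)X$ with $Q(t)$ the transverse Hessian of $H_p$. This solution is a time-ordered exponential in $Q(t)$, so the dependence of $a_1^l$ on the unknown $Q(t)$ is highly nonlinear, and your proposed ``invert a finite linear system mode-by-mode in $p$'' step does not go through. What \emph{is} linear is the dependence on the Hessian $d^2\uv^k(0)$ of the \emph{observable}: one finds
\[
a_1^l(\w^k_p)\ \text{extracts the Fourier coefficients }c_p\bigl(S^{\sigma}_{i,2k-1}S^{\sigma}_{j,2k-1}+S^{\sigma}_{i,2k}S^{\sigma}_{j,2k}\bigr),
\]
where $S(t)$ is the (unknown) symplectic matrix of the Fermi change of variables. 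The linear independence hypothesis on the $d^2\uv^k(0)$ lets you recover these quadratic invariants of $S(t)$, not the entries of $Q(t)$.

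The paper therefore takes a different route from yours: it never reconstructs the $2$-jet of $H_p$. Instead it (a) shows that the trace coefficients determine the invariants $A_{i,j,k}(t)=S^{\sigma}_{i,2k-1}S^{\sigma}_{j,2k-1}+S^{\sigma}_{i,2k}S^{\sigma}_{j,2k}$, and (b) proves a purely linear-algebraic lemma (Lemma~\ref{lem:invariantS}) that from these invariants one can explicitly manufacture \emph{some} symplectic $S_0(t)$ with the same invariants, determined up to right-multiplication by a $t$-dependent block-diagonal rotation $U(t)$. This residual rotation is precisely what the $\tau$-observable $\w^0_p$ is for: its first trace coefficient picks out $\partial_{x_k}^2 q_S+\partial_{\xi_k}^2 q_S$, which via \eqref{eq:qu}--\eqref{eq:deterthetak} determines $\dot\theta_k(t)$ and hence $U(t)$ up to a \emph{constant} block rotation, which is harmless since it commutes with $H_0$. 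Your proposal does not identify this two-stage reconstruction (quadratic invariants $\Rightarrow$ $S_0$ $\Rightarrow$ fix $U$ via $\w^0_p$), and the linearization you rely on in its place is the step that fails.
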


\begin{theorem}\label{main}

 Let $\gamma$ be a non-degenerate elliptic periodic trajectory of the Hamiltonian flow generated by the principal symbol $H_p$ of  $H(x,\hbar D_x)$ on the energy shell $H_p^{-1}(E)$, and
let $(x,t,\xi,\tau)\in  \R^n\times \bbS^1\times\bbR^{n+1}
$ be a system of Fermi coordinates near $\gamma$.

\tcrr{For $(m,n,p)\in\bbN^{2n}\times\bbZ$, let  $O_{mnp}$, $O_p$ be  any pseudodifferential operator whose total Weyl  symbols (in this system of coordinates) $\mathcal{O}_{mnp}$, $\calO_p$ satisfy in a neighborhood of $\gamma$:  }
\begin{equation}\label{defo}
\left\{\begin{array}{rcl}
\mathcal{O}_{mnp}(x,t,\xi,\tau)&=&e^{-i2\pi pt}\prod\limits_{j=1}^n\left(\frac{x_j+i\xi_j}{\sqrt{2}}\right)^{m_j}\left(\frac{x_j-i\xi_j}{\sqrt{2}}\right)^{n_j}\\
&&+\sum\limits_{2l+N=|m|+|n|+1}O(\hbar^l(x^2+\xi^2+\vert\tau\vert)^{\frac{N}{2}})\\
\mathcal{O}_{p}(x,t,\xi,\tau)&=&e^{-i2\pi pt}\tau+\sum\limits_{2l+N=3}O(\hbar^l\left(x^2+\xi^2+\vert\tau\vert\right)^{\frac{N}{2}})
\end{array}\right.
\end{equation}

\tcrr{Then the knowledge of the coefficients $a^{l}_k(O_{mnp})$ and $a^{l}_k(O_{q})$  in \eqref{gutzo}-\eqref{Sbojo} for $k\leq N$ and  $m,n,p,q$ satisfying
\begin{enumerate}
\item $\vert m\vert+\vert n\vert\leq N$ \label{c1}
\item  $\forall j=1\dots n,\ m_j=0$\ \textbf{or} $ n_j =0$ \label{c2}
%\item $s=1$ if $m=n=0$, otherwise $s=0$ 
\item $p\in\bbZ$, $q\in\bbZ^*$ \label{c3}
\end{enumerate}}
determines the Taylor expansion near $\gamma$  up to order $M_1$ in $(x,\xi)$ and $M_2$ in $\tau$, of the total Weyl symbol,  in this  system of Fermi coordinates, of $H(x,\hbar D_x)$ up to order $l$ in $\hbar$ 
at the condition that $2l+M_1+2M_2\leq N$.
\end{theorem}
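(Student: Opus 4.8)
\emph{Strategy.} The plan is to conjugate $H(x,\hbar D_x)$ to its microlocal Birkhoff normal form, to follow the observables $O_{mnp}$ and $O_q$ through the conjugation, and to read the Taylor coefficients of the total Weyl symbol of $H$ in the given Fermi coordinates off the trace coefficients by an induction on a semiclassical weight. Recall from \cite{gu,vgtp} that, $\gamma$ being non-degenerate elliptic and the coordinates being Fermi ones — so that $H_p=H_0+H_2$ with $H_2$ as in \eqref{annulordre3} — there is a microlocal unitary $U$ near $\gamma$, built as a composition of conjugations $e^{iW_j/\hbar}$ whose generators $W_j$ solve cohomological equations $\{H_0,W_j\}=R_j$, such that $U^*H(x,\hbar D_x)U=N$, where $N$ is a function of the commuting operators $\hbar D_t$ and $\hat I_i=\tfrac12\bigl(x_i^2+(\hbar D_{x_i})^2\bigr)$, with $N=E+\hbar D_t+\sum_{i=1}^n\theta_i\hat I_i+\cdots$. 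By the result of \cite{vgtp} quoted in the introduction, the plain trace coefficients — which are among our data, since for $m=n=p=0$ one may take $O_{000}=\mathrm{Id}$ microlocally near $\gamma$ — determine $N$ completely. Conjugating,
\be
\text{Tr}\!\left(O\,\psi\!\left(\tfrac{H(x,\hbar D_x)-E}{\hbar}\right)\right)=\text{Tr}\!\left((U^*OU)\,\psi\!\left(\tfrac{N-E}{\hbar}\right)\right),
\ee
and since $\psi\bigl((N-E)/\hbar\bigr)$ is diagonal in the joint orthonormal basis $\{e^{2i\pi qt}\otimes h_\alpha\}_{q\in\bbZ,\,\alpha\in\bbN^n}$ ($h_\alpha$ the product Hermite functions), every $a_k^l(O)$ is a universal expression in the known $N$ and in the diagonal matrix elements of $U^*OU$ in this basis.

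\emph{Selection rule and the entry of the Taylor data of $H$.} Assign to a monomial $\hbar^a x^\alpha\xi^\beta\tau^c$ the weight $2a+|\alpha|+|\beta|+2c$. Conjugation by $e^{iW_j/\hbar}$ (with $W_j$ of weight $j+2\ge3$) strictly raises the weight, so the lowest-weight part of $U^*O_{mnp}U$ is still the monomial $e^{-2i\pi pt}\prod_j\bigl(\tfrac{x_j+i\xi_j}{\sqrt2}\bigr)^{m_j}\bigl(\tfrac{x_j-i\xi_j}{\sqrt2}\bigr)^{n_j}$. Its diagonal matrix elements vanish unless $p=0$ and $m_j=n_j$ for all $j$, which under hypothesis (\ref{c2}) forces $(m,n,p)=0$; hence for $(m,n,p)\neq0$ the first non-vanishing contribution to $a_k^l(O_{mnp})$ comes from the conjugation corrections. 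These corrections are assembled, through the $W_j$, from the \emph{non-normalized} Taylor coefficients of the Weyl symbol of $H$, and resonance in the diagonal isolates precisely the coefficient of the conjugate monomial $e^{2i\pi pt}\prod_j\bigl(\tfrac{x_j-i\xi_j}{\sqrt2}\bigr)^{n_j}\bigl(\tfrac{x_j+i\xi_j}{\sqrt2}\bigr)^{m_j}$, divided by the denominator $\sum_j(m_j-n_j)\theta_j-2\pi p$, which is a nonzero $\bbZ$-combination of $\pi$ and the $\theta_i$ by Definition~\ref{nondeg} because $(m-n,p)\neq0$. The same mechanism for $O_q$ with $q\in\bbZ^*$, and for $O_{00p}$ with $p\neq0$, produces the Taylor coefficients of $H$ that are linear in $\tau$ or oscillator-diagonal, with non-trivial $t$-frequency.

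\emph{Induction on the weight.} We argue by induction on the weight $w\le N$, the inductive hypothesis being that the Weyl symbol of $H$ is known modulo terms of weight $\ge w$ (equivalently, all $W_j$, hence $U$, are known up to weight $w$). The lower-weight part of $U^*OU$ is then known, so from the prescribed value of the appropriate $a_k^l(O_{mnp})$ or $a_k^l(O_q)$ — the ranges $k\le N$ and $2l+M_1+2M_2\le N$ ensuring enough trace coefficients are available — one subtracts the contributions built from already-determined Taylor data of $H$ and from $N$, leaving a known nonzero multiple of a single prescribed weight-$w$ Taylor coefficient of $H$; dividing yields it. Running over the $(m,n,p)$ and $q$ allowed by (\ref{c1})--(\ref{c3}) gives all the oscillator-off-diagonal and $t$-oscillating Taylor coefficients of $H$ at weight $w$, hence the weight-$w$ part of $U$; the remaining (oscillator- and $\tau$-diagonal, $t$-independent) coefficients of $H$ at weight $w$ are then recovered from the identity $H(x,\hbar D_x)=U\,N\,U^*$, both $N$ and the relevant part of $U$ now being known. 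Here one uses that every monomial in the $x_j,\xi_j$ decomposes as a product of the diagonal factors $\hat I_j$ and ``pure'' factors with $m_j=0$ or $n_j=0$, so that the family subject to (\ref{c1})--(\ref{c3}) indeed reaches every needed direction. Increasing $w$ up to $N$ and restricting to $\hbar^l x^\alpha\xi^\beta\tau^c$ with $|\alpha|+|\beta|\le M_1$, $c\le M_2$, $2l+M_1+2M_2\le N$ gives the asserted reconstruction.

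\emph{Main obstacle.} The hard part is to make the second step quantitative: to prove, with exact control of symbol classes and powers of $\hbar$, that $a_k^l(O_{mnp})$ equals — modulo an explicit polynomial in strictly-lower-weight Taylor coefficients of $H$ and in the entries of $N$ — a \emph{nonzero}, explicitly computable multiple of one prescribed Taylor coefficient of the Weyl symbol of $H$. This requires a careful Egorov/quantum-Birkhoff bookkeeping of how the cohomological equations $\{H_0,W_j\}=R_j$ transmit the Taylor data of $H$ into $U^*O_{mnp}U$ and then into the trace, together with the verification that each small denominator arising is a nonzero integer combination of $\pi$ and the $\theta_i$ — which is exactly the point where non-degenerate ellipticity is used. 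The combinatorial verification that the prescribed families exhaust all directions is then routine.
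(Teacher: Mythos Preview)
Your overall architecture coincides with the paper's: reduce to Fermi coordinates, build the quantum Birkhoff normal form $N$ via a unitary $U=e^{iW/\hbar}$, use that $N$ is determined by the plain trace (\cite{vgtp}), and then recover the generators $W_q$ (hence $H$) by reading off what conjugation does to the observables $O_{mnp}$, $O_q$. However, two of the steps you identify as ``routine'' or ``bookkeeping'' are in fact the substantive content of the proof, and your description of them is not correct as stated.

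\emph{First gap: from trace coefficients to matrix elements.} You write that ``every $a_k^l(O)$ is a universal expression in the known $N$ and in the diagonal matrix elements of $U^*OU$''. This is true, but you need the converse: the $a_k^l(O)$ must \emph{determine} those diagonal matrix elements $\langle\mu,\nu|U^*OU|\mu,\nu\rangle$ as functions of $((\mu+\tfrac12)\hbar,\nu\hbar,\hbar)$. The trace is a sum over the full lattice $(\mu,\nu)\in\bbN^n\times\bbZ$, weighted by $\hat\psi_l$-factors; extracting the Taylor coefficients of the underlying smooth function $f$ from these sums is not automatic. The paper carries this out in its Proposition~\ref{etape1}: after Poisson summation the problem becomes one of separating monomials $(\mu+\tfrac12)^k\nu^m$ inside expressions built from $g(t,\theta)=\prod_i e^{it\theta_i/2}/(1-e^{it\theta_i})$ evaluated at $t=l$, and the key tool is Kronecker's theorem (density of $\{(e^{il\theta_1},\dots,e^{il\theta_n}):l\in\bbZ\}$ in $(\bbS^1)^n$), which lets one send the $x_i=e^{il\theta_i/2}$ to $1$ along prescribed relative rates. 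This step is absent from your proposal.

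\emph{Second gap: one observable does not isolate one coefficient.} You assert that for $(m,n,p)\neq 0$ ``resonance in the diagonal isolates precisely the coefficient of the conjugate monomial'', so that after subtraction one is left with ``a known nonzero multiple of a single prescribed weight-$w$ Taylor coefficient of $H$''. This is not so. At weight $N$, the diagonal part of $\frac{i}{\hbar}[W_N,O_{mnp}]$ receives contributions from \emph{every} $(j,k,s)$ with $|j|+|k|+2s=N$ and $j+m=k+n$ (see the paper's Lemma~\ref{equivalent}); the matrix element is a genuine linear combination
\[
\sum_{\substack{|j|+|k|+2s=N\\ j+m=k+n}} i\,\alpha_{0jksp}\,g_{jks}\big((\mu+\tfrac12)\hbar,\nu\hbar\big)
\]
with $g_{jks}$ an explicit monomial-like function of $(\mu\hbar,\nu\hbar)$. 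To disentangle the unknowns $\alpha_{0jksp}$ the paper chooses sequences $\mu_1(\hbar),\dots,\mu_n(\hbar),\nu(\hbar)$ with prescribed relative growth (e.g.\ $\nu^{(N-2)/(N-1)}\ll\mu_1\ll\cdots\ll\mu_n\ll\nu\ll\hbar^{-1/3}$), so that the $g_{jks}$ are totally ordered for $\ll$ and one can peel off the coefficients one by one by a lexicographic induction. The analogous separation of the $\alpha_{0jjsq}$ (the $j=k$ terms with nonzero Fourier mode) uses $O_q$. Your small-denominator remark is beside the point here: the denominators $\sum_i(m_i-n_i)\theta_i-2\pi p$ live inside the \emph{construction} of $W$, not in the recovery of its coefficients from matrix elements; what actually has to be checked is the asymptotic independence of the functions $g_{jks}$, and that is what the ordering argument supplies.

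In short, your strategy is the right one, but the two inversion steps (trace $\Rightarrow$ matrix elements via Kronecker, and matrix elements $\Rightarrow$ individual $\alpha$'s via asymptotic ordering in $(\mu,\nu)$) are the heart of the proof and cannot be absorbed into ``bookkeeping''.
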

 
Concatenating the two preceding results we get the coordinate free statement:
\begin{corollary}\label{toutper}

Let $ \w_p^k$ be  as in Theorem \ref{mainFermi}. Then the knowledge of the coefficients 
$a_1^l(Q^k_{ijp})$, $a_1^l(Q_p)$   for $ p\in\bbZ$, $l\in\bbZ, 1\leq i,j\leq n, k\in\{1,2,3\}$ determine  observables $O_{mnp}$, $O_q$
out of which the coefficients $a^{l}_k(O_{mnp})$ and $a^{l}_k(O_{q})$, for $k\leq N$ and  $m,n,p,q$ satisfying conditions $(1),(2),(3)$ in Theorem \ref{main},
 determine 
 modulo a function vanishing to infinite order on $\gamma$, the full symbol of $H(x,\hbar D_x)$, in a determined system of local coordinates near $\gamma$.

\end{corollary}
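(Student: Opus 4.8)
The plan is to obtain the corollary by concatenating Theorem~\ref{mainFermi} and Theorem~\ref{main}, after first checking that the explicit family \eqref{defobq} is an admissible choice of the operators $\w_p^k$ in Theorem~\ref{mainFermi}. For that admissibility I would note that each of the quadratic symbols $x_i\xi_j$, $x_ix_j$, $\xi_i\xi_j$ vanishes together with its differential at $\gamma=\{x=\xi=\tau=0\}$, and that the Hessians of these monomials span the space of symmetric quadratic forms on $\bbR^{2n}\cong\{(x,\xi)\}$: the $x_ix_j$'s and the $\xi_i\xi_j$'s supply the two diagonal blocks and the $x_i\xi_j$'s the off-diagonal block, giving $2n(2n+1)/2=2n^2+n$ linearly independent Hessians. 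Letting a basis among them play the role of the Hessians $d^2\uv^k(0)$ and letting $\calQ_p$, of principal symbol $e^{-2i\pi pt}\tau$, play the role of $\w_p^0$, the family satisfies the hypotheses of Theorem~\ref{mainFermi}; hence the data $\{a_1^l(\calQ^k_{ijp}),\,a_1^l(\calQ_p)\}$, for $l\in\bbZ$, $p\in\bbZ$, $1\le i,j\le n$, $k\in\{1,2,3\}$, constructively determine a system of Fermi coordinates $(x,t,\xi,\tau)$ near $\gamma$. This is the ``determined'' chart of the statement, and from now on I work in it.

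Next I would exhibit concrete pseudodifferential operators realizing \eqref{defo} in this chart. It suffices to take $O_{mnp}$ to be the Weyl quantization, with respect to the Fermi coordinates just constructed, of $e^{-i2\pi pt}\prod_{j=1}^n\bigl((x_j+i\xi_j)/\sqrt2\bigr)^{m_j}\bigl((x_j-i\xi_j)/\sqrt2\bigr)^{n_j}$, and $O_q$ the Weyl quantization of $e^{-i2\pi qt}\tau$; any other lower-order completion permitted by \eqref{defo} would serve equally well, since Theorem~\ref{main} holds for \emph{any} pseudodifferential operator with the prescribed Weyl symbols. The point to emphasize is that this step is legitimate precisely because the previous one produced the chart \emph{constructively}: once the chart is fixed the operators $O_{mnp}$ and $O_q$ are well-defined objects, so the numbers $a^l_k(O_{mnp})$ and $a^l_k(O_q)$ occurring in \eqref{gutzo}--\eqref{Sbojo} are genuinely determined by the spectral data together with the first batch $\{a_1^l(\calQ^k_{ijp}),a_1^l(\calQ_p)\}$; there is no circularity.

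Finally I would apply Theorem~\ref{main} in this Fermi chart to the observables just built. For each $N$, the coefficients $a^l_k(O_{mnp})$ and $a^l_k(O_q)$ with $k\le N$ and $(m,n,p,q)$ ranging over the admissible set defined by conditions $(1),(2),(3)$ of Theorem~\ref{main} determine the Taylor expansion, near $\gamma$, of the total Weyl symbol of $H(x,\hbar D_x)$ in the chart, to order $M_1$ in $(x,\xi)$ and $M_2$ in $\tau$, and to order $l$ in $\hbar$, for all $(M_1,M_2,l)$ with $2l+M_1+2M_2\le N$. Since the spectral data supply these coefficients for every $N$, and since any prescribed $(M_1,M_2,l)$ is reached by choosing $N\ge 2l+M_1+2M_2$, every Taylor coefficient in $(x,\xi,\tau)$ of every order in $\hbar$ of the Weyl symbol of $H(x,\hbar D_x)$ in this chart is determined. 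Equivalently, $H(x,\hbar D_x)$ is determined, in the chart produced at the first step, modulo a symbol vanishing to infinite order on $\gamma$, which is the claim.

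I expect the only genuinely delicate point to be the bookkeeping flagged at the second step: one must make sure that ``knowing the $O_{mnp}$'' is an \emph{output} of the construction of the Fermi chart and not a tacit input, so that the second batch of trace coefficients is indeed read off from the same spectral data that supplied the first batch. Everything else is a straightforward concatenation of Theorems~\ref{mainFermi} and~\ref{main}, the inequality $2l+M_1+2M_2\le N$ being exactly what upgrades the finite-order conclusions of Theorem~\ref{main} to the full formal Taylor expansion of the symbol of $H(x,\hbar D_x)$ at $\gamma$.
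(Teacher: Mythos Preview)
Your proposal is correct and follows exactly the approach the paper intends: the corollary is presented there without a separate proof, merely as the concatenation of Theorems~\ref{mainFermi} and~\ref{main}, and your write-up spells out that concatenation carefully (including the admissibility of the family~\eqref{defobq}, which the paper simply asserts as ``an example of such symbols'').
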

\begin{remark}
It is easy to see that Condition $2$ implies that the number of observables in the transverse to $\gamma$ directions (for each Fourier coefficient in $t$) needed for determining $H(x,\hbar D_x)$ up to order $N$ is a polynomial function of $N$ of degree $n$, % equivalent to $N^n$ up to a multiplicative constant,
 while the number of all polynomial functions in $(x,\xi,\tau)$ of order $N$ is a polynomial in $N$ of higher degree $2n+1$. The fact that not all observables are needed can be understood by the fact that we know that the Hamiltonian we are looking for is  conjugated to the normal form by a unitary operator and not by any operator (see the discussion after Theorem \ref{flatmain}). At the classical level this is a trace of the fact that we are looking for a symplectomorphism, and not any diffeomorphism (see section \ref{class}).
\end{remark}
\begin{remark}
The asymptotic expansion of the trace \eqref{gutzo} involves only the microlocalization of $H(x,\hbar D_x)$ in a formal neighborhood of $\gamma$. Therefore there is no hope to recover from spectral data more precise information that the Taylor expansion of its symbol near $\gamma$. The rest of the symbol concerns spectral data of order $\hbar^\infty$.
\end{remark}

Let us now consider the case where $\gamma$ is reduced to one point, namely the  ``bottom of a well" case. 
Let us assume that the principal symbol $H_p$ of $H(x,\hbar D_x)$ has a global minimum at $z_0\in T^*\mathcal M$, and let $d^2H_p(z_0)$ be the Hessian of $H$ at $z_0$. Let us define the matrix $\Omega$ defined by  $d^2H_p(z_0)(\cdot, \cdot)=:\omega_{z_0}(\cdot, \Omega^{-1}\cdot)$ where $\omega_{z_0}(\cdot,\cdot)$ is the canonical symplectic form of $T^*\mathcal M$ at $z_0$. The eigenvalues of $\Omega$ are purely imaginary, let us denote them by $\pm i\theta_j$ with $\theta_j>0, \ j=1\dots n$. Let us assume moreover that $\theta_j, j=1\dots n$ are rationally independent.

 \begin{definition}
By extension of definition \ref{def:fermicoord}, we will also denote by Fermi coordinates any system of Darboux coordinates $(x,\xi)\in T^*\bbR^n$ centered at $z_0$  such that: 
 \begin{equation} H_p(x,\xi)=H_p(z_0)+\sum_{i=1}^n \theta_i \frac{x_i^2 +\xi_i^2}{2}+O((x,\xi)^3).
 \end{equation}
\end{definition}

The existence of such local coordinates will be proved in section \ref{bot}, once again by using the knowledge of \tcr{the quadratic part of} $H_p$ near $z_0$. Our next result shows  that one can explicitly construct Fermi coordinates out of the knowledge of some quantum spectral quantities. 

\begin{theorem}\label{fermicorbot}
Let $\w^k, k=1\dots 2n^2+n$ be \textbf{any} pseudodifferential operators whose principal symbols $ \mathcal\w^k$ is such that \textbf{$\mathcal\w^k(z_0)=\nabla\mathcal\w^k(z_0)=0$ and the  Hessians  $d^2\mathcal\w^k(z_0)$ are linearly independent}.

An example of such symbols is given by the family,  $1\leq i,j\leq n$, $k\in\{1,2,3\}$,
\be\label{defobq}
\left\{\begin{array}{rcl}
\mathcal Q^1_{ij}(x,\xi)&=&x_i\xi_j\\
\mathcal Q^2_{ij}(x,\xi)&=&x_ix_j\\
\mathcal Q^3_{ij}(x,\xi)&=&\xi_i\xi_j
\end{array}\right.
\ee
in any system $(x,\xi)\in T^*\R^n$ of Darboux coordinates centered at $z_0$.

Then, for any $\epsilon=\epsilon(\hbar)>0,\ \hbar=o(\epsilon(\hbar))$ (e.g.\ $\epsilon=\hbar^{1-\eta},\eta>0$), the knowledge of the spectrum of $H(x,\hbar D_x)$ in $[H_p(z_0),H_p(z_0)+\epsilon]$ 
and the diagonal matrix elements of $\w^k$ 
between the 
corresponding 
eigenvectors of $H(x,\hbar D_x)$
determines (in a constructive way) an explicit  system of Fermi coordinates.

 \end{theorem}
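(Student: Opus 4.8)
The plan is to reduce the ``bottom of a well'' statement to the structure already present in the trace-formula analysis, replacing the trace over the iterates of $\gamma$ by the expectation values $\langle\varphi_j, \w^k \varphi_j\rangle$ for the eigenvectors $\varphi_j$ whose eigenvalues lie in the window $[H_p(z_0), H_p(z_0)+\epsilon]$. First I would recall, from \cite{pu1,pu2,vgtp} and the microlocal Birkhoff normal form near a nondegenerate elliptic fixed point, that there is a metaplectic (hence unitary) operator $U$ conjugating $H(x,\hbar D_x)$, microlocally near $z_0$ and modulo $O(\hbar^\infty)$, to a function of the harmonic oscillators $I_i = \frac{1}{2}((x_i^{\mathrm{Weyl}})^2+(\hbar D_{x_i})^2)$; in particular the low-lying eigenvectors $\varphi_j$ are, up to $O(\hbar^\infty)$ and up to this unitary, the Hermite functions $h_\alpha$ with $|\alpha|\theta \lesssim \epsilon/\hbar$ unbounded (using $\hbar = o(\epsilon)$ so that infinitely many oscillator levels enter the window as $\hbar\to0$). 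The key computational input is that for a symbol $\uv(x,\xi)$ vanishing to second order at $z_0$, the leading $\hbar$-behaviour of $\langle h_\alpha, \uv(x,\hbar D_x) h_\alpha\rangle$ is $\hbar$ times a quadratic form in $\alpha$ whose coefficients are linear in the entries of the Hessian $d^2\uv(z_0)$ — precisely, $\langle h_\alpha, I_i h_\alpha\rangle = \hbar(\alpha_i+\tfrac12)$, $\langle h_\alpha, x_i\hbar D_{x_i} h_\alpha\rangle = 0$, and more generally the quadratic monomials $x_ix_j, \xi_i\xi_j, x_i\xi_j$ ($i\ne j$) have vanishing diagonal Hermite matrix elements while $x_i^2, \xi_i^2$ contribute $\hbar(\alpha_i+\tfrac12)$. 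Thus, reading off these leading coefficients across a growing range of $\alpha$ recovers the full Hessian $d^2\uv(z_0)$ in the \emph{normal-form} coordinates.

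The second step transports this to an arbitrary Darboux system $(x,\xi)$ centered at $z_0$. Writing $\Phi$ for the (unknown) linear symplectomorphism such that the normal-form coordinates are $\Phi^{-1}$ applied to $(x,\xi)$, the matrix elements $\langle\varphi_j,\w^k\varphi_j\rangle$ give us, for each $k$, the Hessian $d^2\mathcal\w^k(z_0)$ \emph{as seen in normal-form coordinates}, i.e.\ $\Phi^\top (d^2\mathcal\w^k(z_0))\Phi$ evaluated against the oscillator eigenvalues — more usefully, since the Hermite functions only see the symmetric bilinear form restricted to the diagonal, we actually recover the numbers $\mathrm{tr}\big(\Lambda_\alpha \cdot \Phi^\top d^2\mathcal\w^k(z_0)\Phi\big)$ where $\Lambda_\alpha = \mathrm{diag}(\alpha_i+\tfrac12)$ in the $(x_i^2+\xi_i^2)$-block decomposition. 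Because the $d^2\mathcal\w^k(z_0)$ are linearly independent and there are $2n^2+n = \dim(\text{symmetric }2n\times2n\text{ matrices})$ of them, they form a basis of that space; hence from the $2n^2+n$ scalar families we can solve a linear system to determine the pairing of $\Phi^\top(\cdot)\Phi$ with every symmetric matrix, i.e.\ we determine the conjugacy class data well enough to reconstruct $\Phi$ (equivalently, the quadratic part $d^2H_p(z_0)$ in the given Darboux coordinates, since $d^2 H_p(z_0)$ in normal form is the fixed diagonal $\sum\theta_i(x_i^2+\xi_i^2)/2$). Once $d^2H_p(z_0)$ is known explicitly in the working coordinates, the Fermi coordinates are constructed by the usual Williamson/symplectic-diagonalization algorithm (as in \cite{gu,vgtp,ze1}), which is explicit and constructive — this is exactly the ``knowledge of the quadratic part of $H_p$'' that the theorem statement promises to bypass.

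The main obstacle I anticipate is \emph{extracting} the clean leading-order term from the raw spectral data: one must show that the sub-leading terms in the semiclassical expansion of $\langle\varphi_j,\w^k\varphi_j\rangle$ — which involve the full symbol of $\w^k$ beyond its Hessian and the higher Birkhoff data of $H$ — do not contaminate the coefficient one reads off, and that the $O(\hbar^\infty)$ error from truncating the normal form is genuinely negligible for the relevant eigenvectors even though the eigenvalue window grows ($\epsilon = \hbar^{1-\eta}$). This requires a uniform-in-$\alpha$ control of the normal-form approximation for $|\alpha| = O(\epsilon/\hbar) = O(\hbar^{-\eta})$, which is standard but must be stated carefully; the hypothesis $\hbar = o(\epsilon)$ is exactly what guarantees enough levels to invert the quadratic-form-in-$\alpha$ map while $\epsilon$ stays small enough that the quartic and higher Birkhoff corrections remain lower order. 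A secondary point, already handled in the cited literature and in Appendix \ref{deuxpi}, is that the $\theta_i$ themselves are recoverable from the spectrum in the window, so no external knowledge of the linearized flow is needed. The rest — setting up and solving the linear algebra for the $2n^2+n$ Hessians, and running the symplectic diagonalization — is routine and constructive, giving the explicit Fermi system.
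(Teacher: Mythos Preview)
Your approach is correct and matches the paper's strategy closely: both reduce to computing the leading coefficients of $\langle\varphi_\mu,\w^k\varphi_\mu\rangle$ via the metaplectic conjugation to harmonic-oscillator eigenvectors, observe that only the diagonal $z_k\bar z_k$ parts survive, and then exploit that the Hessians $d^2\mathcal\w^k(z_0)$ form a basis of symmetric $2n\times 2n$ matrices to pass to the explicit family $\mathcal Q^s_{ij}$. The data thus extracted is exactly the paper's family $S_{i,2k-1}S_{j,2k-1}+S_{i,2k}S_{j,2k}$ (equivalently, the rank-two matrices $\Phi\Pi_k\Phi^\top$ in your notation).

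The one genuine packaging difference is in the final reconstruction. You propose to combine the $\Phi\Pi_k\Phi^\top$ with the $\theta_k$'s (recovered from the spectrum via Proposition~\ref{poincbot}) to rebuild $d^2H_p(z_0)$ in the given Darboux chart, and then run Williamson's symplectic diagonalization. The paper instead bypasses $d^2H_p(z_0)$ entirely and proves a bespoke linear-algebra lemma (Lemma~\ref{lem:invariantS}) which takes the scalars $\langle L_{S,i,k},L_{S,j,k}\rangle$ directly and constructs an explicit symplectic $S$, handling the $SO(2)^n$ ambiguity by hand. Both routes are valid and constructive; yours is more conceptual (it makes clear that the spectral data pins down exactly the quadratic part of $H_p$), while the paper's lemma is more self-contained and avoids the detour through $J$-conjugation needed to pass from $\Phi D\Phi^\top$ to $(\Phi^\top)^{-1}D\Phi^{-1}$. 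One small slip in your write-up: the sentence ``recovers the full Hessian $d^2\uv(z_0)$ in the normal-form coordinates'' overstates what the diagonal matrix elements see---as you correctly note two lines later, only the $n$ block-traces $\mathrm{tr}\big(\Pi_k\Phi^\top d^2\mathcal\w^m(z_0)\Phi\big)$ are accessible, not the full conjugated Hessian.
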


\begin{theorem}\label{corbot}

For $(m,n)\in\bbN^{2n}$, let   $O_{mn}$ be any pseudodifferential operator whose total Weyl symbol $\mathcal O_{mn}$ satisfy in a neighborhood of  $z_0$:
\be\label{defob}\begin{split}
\mathcal{O}_{mn}(x,\xi)=&\prod_{j=1}^n\left(\frac{x_j+i\xi_j}{\sqrt{2}}\right)^{m_j}\left(\frac{x_j-i\xi_j}{\sqrt{2}}\right)^{n_j}
+\sum\limits_{\substack{2l+N=\\|m|+|n|+1}}O\left(\hbar^l\left(x^2+\xi^2\right)^{\frac{N}{2}}\right)
\end{split}
\ee
in a system $(x,\xi)\in T^*\R^n$ of Fermi coordinates centered at $z_0$.

Then  the knowledge of the spectrum of $H(x,\hbar D_x)$ in $[H_p(z_0),H_p(z_0)+\epsilon]$ with $\hbar^{1-\alpha}=O(\epsilon)$ for some $\alpha>0$, and the diagonal matrix elements of 
$O_{mn}$ 
between the 
corresponding 
eigenvectors of $H(x,\hbar D_x)$,

 for:

\begin{enumerate}
\item $\vert m\vert+\vert n\vert\leq N$
\item  $\forall j=1\dots n,\ m_j=0$\ \textbf{or} $ n_j =0$, 
\end{enumerate}
determines the Taylor expansion up to order $N$ of the full symbol  of $H(x,\hbar D_x)$ at $z_0$  in the coordinates $(x,\xi)$.
\end{theorem}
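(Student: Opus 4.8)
The plan is to transcribe the ``bottom of a well'' situation into the same Birkhoff normal form machinery that underlies Theorem \ref{main}, replacing the Gutzwiller trace coefficients $a_k^l(O)$ by the diagonal matrix elements $\langle\vp_j,O\vp_j\rangle$ for eigenvectors $\vp_j$ with eigenvalue in $[H_p(z_0),H_p(z_0)+\epsilon]$. First I would invoke the quantum Birkhoff normal form near a nondegenerate elliptic minimum: there is a microlocal unitary $U$ and a semiclassical symbol $K(I_1,\dots,I_n;\hbar)\sim\sum_{2l+|{\alpha}|\le \cdot}c_{\alpha l}\hbar^l I^{\alpha}$ in the action variables $I_j=\frac{x_j^2+\xi_j^2}{2}$ such that $U^*H(x,\hbar D_x)U$ is, modulo $O(\hbar^\infty)$ near $z_0$, the Weyl quantization of a function of the harmonic oscillators $\widehat I_j$, with leading part $H_p(z_0)+\sum_j\theta_j\widehat I_j$. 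The eigenvectors in the energy window are then approximately the Hermite basis $\vp_\mu\simeq U\,h_\mu$, $\mu\in\bbN^n$, $|\mu|\lesssim \epsilon/\hbar$, and $\langle h_\mu,\,\mathrm{Op}^W(\cdot)\,h_\nu\rangle$ is computable by the standard creation/annihilation calculus.

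The core computation is then: for the observable $O_{mn}$ of \eqref{defob} with $m_j=0$ or $n_j=0$ for every $j$, the conjugated operator $U^*O_{mn}U$ has principal part $\prod_j\big(\tfrac{x_j+i\xi_j}{\sqrt2}\big)^{m_j}\big(\tfrac{x_j-i\xi_j}{\sqrt2}\big)^{n_j}$ plus lower-order (in distance to $z_0$) corrections, and its diagonal matrix element $\langle\vp_\mu,O_{mn}\vp_\mu\rangle$ is, to leading order, nonzero only when $m=n$ (diagonal monomials survive), in which case it equals a universal polynomial in $(\mu_j\hbar)$; the off-diagonal monomials contribute only through the subprincipal/conjugation corrections. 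Expanding $\langle\vp_\mu,O_{mn}\vp_\mu\rangle$ in powers of $\hbar$ and in the ``rescaled actions'' $\mu_j\hbar\in[0,\epsilon]$ (this is where $\hbar=o(\epsilon)$, i.e.\ $\hbar^{1-\alpha}=O(\epsilon)$, is used: it guarantees enough values of $\mu$ so that a polynomial of given degree in the actions is determined by its values on the available lattice points), one reads off, order by order in $N$, first the normal-form coefficients $c_{\alpha l}$ from the $m=n$ data, and then — exactly as in the proof of Theorem \ref{main}, proceeding by induction on $|m|+|n|$ — the Taylor coefficients of the conjugating symbol and hence of $H$ itself from the $m\ne n$ data. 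Conditions (1) and (2) are precisely what make this induction closed: at each total degree, the monomials with $m_j=0$ or $n_j=0$ form a spanning set sufficient to pin down the new Taylor coefficients because $H$ is known a priori to be unitarily conjugate to its normal form, not merely conjugate by an arbitrary invertible operator.

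I expect the main obstacle to be the bookkeeping around the \emph{finite} energy window: unlike the periodic-orbit case where the trace coefficients $a_k^l(O)$ are honest $\hbar$-asymptotic invariants, here one must control the error in the approximation $\vp_\mu\simeq U h_\mu$ and in the two-sided expansion in $(\hbar,\mu\hbar)$ uniformly for $|\mu|\hbar\le\epsilon$, and show that the number of lattice points $\mu$ with $\mu\hbar\in[0,\epsilon]$ suffices to interpolate polynomials of degree $O(N)$ in the actions — which forces the quantitative hypothesis $\hbar^{1-\alpha}=O(\epsilon)$. Once this uniform control is in place (it is the content of, e.g., the analysis in \cite{gur} and the references in \cite{vgtp}), the algebraic extraction is identical in structure to Theorem \ref{main} with the Fourier variable $t$ and the extra coordinate $\tau$ suppressed, and the remaining steps are the routine creation/annihilation computations of Hermite matrix elements, which I will not carry out in detail.
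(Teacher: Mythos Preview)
Your proposal is correct and follows essentially the same route as the paper: the authors omit the proof of Theorem \ref{corbot}, saying only that it is ``a line by line analog of Theorem \ref{main} after noticing that the knowledge of the spectrum near the bottom determines the left hand side of the trace formula''. Your sketch reconstructs exactly this analog---quantum Birkhoff normal form (Proposition \ref{W} without the $(t,\tau)$ variables), identification of the low-lying eigenvectors with $U|\mu\rangle$, and then the induction of Proposition \ref{propp} to recover $W_{\leq N}$ from the diagonal matrix elements.

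One small correction of attribution: the normal-form coefficients $c_{\alpha l}$ are not read off from ``the $m=n$ data'' (under condition (2), $m=n$ forces $m=n=0$, and $O_{00}$ carries no information). They are determined by the \emph{spectrum} in $[H_p(z_0),H_p(z_0)+\epsilon]$, which is part of the hypothesis---this is the content of \cite{vgtpau,gur}, exactly as invoked in the proof of Theorem \ref{hope2}. The observables $O_{mn}$ with $m\neq n$ then determine the off-diagonal part of the conjugating operator $W$, and in the bottom-of-a-well case there is no analog of the $O_q$ observables because, by the convention of Remark \ref{nonnul}, the diagonal terms of $W$ vanish identically (there being no $t$-Fourier modes to carry them). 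With that adjustment your outline matches the paper.
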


\begin{corollary}\label{toutfond}
The diagonal matrix elements of the operators $ \w^k$ as in Theorem \ref{fermicorbot} determine 
 observables $O_{mn}$ whose diagonal matrix elements as in Theorem  \ref{corbot} determine, modulo a function vanishing to infinite order at $z_0$, the full symbol of $H(x,\hbar D_x)$, in a determined system of local coordinates near $z_0$.
\end{corollary}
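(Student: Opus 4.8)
The plan is to deduce this corollary by concatenating Theorems~\ref{fermicorbot} and~\ref{corbot}, in the same way that Corollary~\ref{toutper} is deduced from Theorems~\ref{mainFermi} and~\ref{main} in the periodic case: the first theorem turns the matrix elements of the $\w^k$ into an explicit Fermi coordinate system, and in that system the second theorem turns the matrix elements of the $O_{mn}$ into the Taylor series at $z_0$ of the symbol of $H(x,\hbar D_x)$.

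First I would fix an auxiliary system of Darboux coordinates centered at $z_0$ and take the operators $\w^k$, $k=1,\dots,2n^2+n$, of Theorem~\ref{fermicorbot} --- for concreteness the operators with principal symbols $x_i\xi_j$, $x_ix_j$, $\xi_i\xi_j$, $1\le i,j\le n$ --- whose principal symbols vanish together with their gradient at $z_0$ and have linearly independent Hessians there. By Theorem~\ref{fermicorbot}, the spectrum of $H(x,\hbar D_x)$ in a window $[H_p(z_0),H_p(z_0)+\epsilon]$ with $\hbar=o(\epsilon)$, together with the diagonal matrix elements of the $\w^k$ between the corresponding eigenvectors, reconstructs in a constructive way an explicit symplectomorphism $\kappa$ from a neighborhood of $0\in T^*\R^n$ onto a neighborhood of $z_0$ realizing a system of Fermi coordinates (the angles $\theta_j$ being themselves determined by the spectrum, so that in this case no choice of realization is involved). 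Thus $\kappa$ is a determined object.

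With $\kappa$ now available explicitly, for each $(m,n)\in\bbN^{2n}$ I would build a pseudodifferential operator $O_{mn}$ on $\mathcal M$ whose total Weyl symbol, expressed in the chart $\kappa$, is the symbol prescribed by~\eqref{defob}; one may for instance take the Weyl quantization, in the chart $\kappa$, of a fixed cutoff (equal to $1$ near $0$) times the monomial appearing in~\eqref{defob}, the cutoff altering the operator only by one whose symbol vanishes to infinite order at $z_0$ and is therefore irrelevant for~\eqref{defob}. The point is that ``having a prescribed Weyl symbol in the system $\kappa$'' is an effective prescription precisely because $\kappa$ has been pinned down in the previous step; this is how the family of Theorem~\ref{fermicorbot} drives the construction of the family of Theorem~\ref{corbot}. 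Feeding these $O_{mn}$ into Theorem~\ref{corbot} in the coordinate system $\kappa$: for every integer $N$, the diagonal matrix elements of the $O_{mn}$ with $|m|+|n|\le N$ and (for all $j$) $m_j=0$ or $n_j=0$, together with the spectrum in $[H_p(z_0),H_p(z_0)+\epsilon]$, $\hbar^{1-\alpha}=O(\epsilon)$, determine the Taylor expansion up to order $N$ at $z_0$ of the full symbol of $H(x,\hbar D_x)$ in the coordinates $\kappa$. Letting $N\to\infty$ yields the entire Taylor series at $z_0$ of that symbol, i.e.\ the full symbol modulo a function vanishing to infinite order at $z_0$, in the determined system $\kappa$, which is exactly the assertion.

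Since the argument is essentially bookkeeping, the only point I expect to require attention is the coherence of the two inputs. Both theorems draw on the same type of data --- the low-lying spectrum in a shrinking window above $H_p(z_0)$ and the diagonal matrix elements between the corresponding eigenvectors --- so nothing beyond the stated information is ever used, and the Fermi system produced by Theorem~\ref{fermicorbot} is by construction an admissible input for Theorem~\ref{corbot}. The (minor) obstacle is to note that the latitude left by~\eqref{defob} in the definition of $O_{mn}$ is precisely the latitude absorbed by Theorem~\ref{corbot}, so that any concrete $O_{mn}$ built out of $\kappa$ --- and not some canonical one --- does the job; this is immediate, as Theorem~\ref{corbot} is stated for any pseudodifferential operator whose total Weyl symbol satisfies~\eqref{defob}.
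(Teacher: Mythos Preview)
Your proposal is correct and is exactly the intended argument: the paper states Corollary~\ref{toutfond} as the immediate concatenation of Theorems~\ref{fermicorbot} and~\ref{corbot}, in strict analogy with Corollary~\ref{toutper}, and gives no separate proof. Your bookkeeping remarks (determinacy of $\kappa$, construction of the $O_{mn}$ in that chart, and the latitude in~\eqref{defob}) cover precisely the points one needs to check.
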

\begin{remark} Although we will not prove it here, let us remark that Theorem \ref{corbot} (and also Theorem \ref{main}) is also valid in the framework of quantization of Kälherian  manifolds. 
\end{remark}

\vskip 0.5cm
In the case where $H(x,\hbar D_x)$ is a Schr\"odinger operator $-\hbar^2\Delta+V$, it is known, \cite{gur}, that the (actually classical) normal form determines  the Taylor expansion of the potential in the case where the latter is invariant, for each $i=1\dots n$, by the symmetry $x_i\to -x_i$. The same result holds without the symmetry assumption in the case $n=1$, with assumption $V'''(0)\neq 0$, as it has been shown in \cite{gcdv}.

Let now $H=-\hbar^2\Delta+V$ be a Schrödinger operator 
%on $L^2(\mathcal M)$, 
and $q_0$ be a global non-degenerate minimum of $V$. Let us assume that the square-roots $(\theta_i)_{1\leq i\leq n}$ of the eigenvalues of $d^2V(q_0)$
 are linearly independent over the rationals. In that precise case, we will denote by Fermi coordinates any system of Darboux coordinates $(x,\xi)\in T^*\bbR^n$, in which the (principal or total, both notions are equivalent here) symbol $H$ of our Schrödinger operator can be written as: 
 \begin{equation} 
 H(x,\xi)=V(q_0)+\sum_{i=1}^n \theta_i \frac{x_i^2 +\xi_i^2}{2}+R(x)
 \end{equation}
 where $R(x)=O(x^3)$. 
 The existence of such local coordinates will also be proved in section \ref{bot}, and Theorem \ref{hope2fermi} below proves that one can explicitly construct Fermi coordinates out of any system of local coordinates centered at $q_0$.

Theorem \ref{hope2} shows that the matrix elements of only a \textbf{finite number} of observables are necessary to recover the full Taylor expansion of the potential in the general case.

\begin{theorem}\label{hope2fermi}

Let $\w^k, k=1\dots\frac{n(n+1)}2$ be \textbf{any} pseudodifferential operators whose principal symbols are potentials $\mathcal\w^k$ such that
\textbf{$\mathcal\w^k(q_0)=\nabla\mathcal\w^k(q_0)=0$ and the  Hessians  $d^2\mathcal\w^k(q_0)$ are linearly independent}
(an example of such potentials is the family $\mathcal Q^2_{ij}(x)=x_ix_j$ in a 
local system of coordinates centered at $q_0$).

Then, for any $\epsilon=\epsilon(\hbar)>0,\hbar=o(\epsilon)$, the knowledge of the spectrum of $H(x,\hbar D_x)$ in $[V(q_0),V(q_0)+\epsilon]$ and the diagonal matrix elements  of $\w^k, k=1\dots\frac{n^2+n}2$ between the corresponding eigenvectors of $H(x,\hbar D_x)$ determines (in a constructive way) an explicit system of Fermi coordinates.

\end{theorem}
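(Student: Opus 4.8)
\medskip
\noindent\emph{Proof proposal.} The plan is to run the argument behind Theorem~\ref{fermicorbot}, taking advantage of the fact that for a Schr\"odinger operator the remainder $R$ in a Fermi system depends on the position variables only. This forces one to look for a Fermi system whose underlying change of coordinates is a linear \emph{point} transformation $(x,\xi)\mapsto(Ax,(A^{T})^{-1}\xi)$, so that the only object to be reconstructed is the matrix $A\in GL(n)$, and only up to the sign changes $(y_j,\eta_j)\mapsto(-y_j,-\eta_j)$ intrinsic to Fermi coordinates; this is why $\tfrac{n(n+1)}2$ observables (one per entry of a symmetric $n\times n$ matrix) suffice, instead of the $2n^2+n$ of Theorem~\ref{fermicorbot}. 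Concretely: if $O_0\in O(n)$ orthogonally diagonalizes $d^2V(q_0)$ (whose eigenvalues are the $\theta_i^{2}$, pairwise distinct because the $\theta_i$ are independent over $\mathbb Q$), then the point transformation attached to $A=\mathrm{diag}(\theta_i^{1/2})\,O_0$ (up to a constant fixed by the normalization of the kinetic symbol) puts $H$ into Fermi form, with $R(y)=O(|y|^3)$ the degree $\ge3$ part of $V(A^{-1}y)$, a function of $y$ alone (see Section~\ref{bot}); conversely the point transformations of this type bringing $H$ into Fermi form are exactly those attached to $\mathrm{diag}(\pm1)\,\mathrm{diag}(\theta_i^{1/2})\,O_0$, the corresponding Fermi systems being precisely those related by the symmetries above. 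Hence it is enough to reconstruct the columns $v_1,\dots,v_n$ of $A^{-1}$ up to sign.

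Next I would settle the spectral bookkeeping. For $\hbar=o(\epsilon)$ and $\hbar$ small, the eigenvalues of $H$ in $[V(q_0),V(q_0)+\epsilon]$ are simple and of the shape $E_\alpha=V(q_0)+\hbar\sum_i\theta_i(\alpha_i+\tfrac12)+o(\hbar)$, $\alpha\in\mathbb N^n$ (semiclassical analysis at a non-degenerate minimum — the quantum Birkhoff normal form — as in the proof of Theorem~\ref{corbot}); moreover $E_0=\inf\mathrm{spec}(H)$, so $V(q_0)=\lim_{\hbar\to0}E_0$, and, the $\theta_i$ being $\mathbb Q$-independent, the additive sub-semigroup $\Gamma=\{\sum_i\theta_i\alpha_i:\alpha\in\mathbb N^n\}$ of $\mathbb R_{\ge0}$ has $\theta_1,\dots,\theta_n$ as its minimal generators while $\alpha\mapsto\sum_i\theta_i\alpha_i$ is injective. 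Since for every fixed $R>0$ all $E_\alpha$ with $E_\alpha-E_0\le R\hbar$ lie in the window for $\hbar$ small and $(E_\alpha-E_0)/\hbar\to\sum_i\theta_i\alpha_i$ along them, the spectrum alone yields the numbers $\theta_i$ (hence $\mathrm{diag}(\theta_i^{1/2})$, but \emph{not} $O_0$, as $V$ is unknown) and the multi-index labelling of the eigenvectors — in particular the ground state $\varphi_0$ and the first excited states $\varphi_{e_1},\dots,\varphi_{e_n}$.

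Then I would compute the diagonal matrix elements. Fix a Fermi system of the point-transformation type, with matrix $\bar A$ and columns $\bar v_j$ of $\bar A^{-1}$, and let $\Sigma_k$ be the symmetric $n\times n$ matrix such that $x^{T}\Sigma_k x$ is the quadratic part of $\mathcal\w^k$ at $q_0$; the $\Sigma_k$ are linearly independent, hence a basis of $\mathrm{Sym}(n)$. In these coordinates $\mathcal\w^k$ equals the function of $y$ alone $y^{T}\widetilde\Sigma_k\,y+O(|y|^3)$ with $\widetilde\Sigma_k=(\bar A^{-1})^{T}\Sigma_k\bar A^{-1}$ — no $\eta$-dependence in the quadratic part, the structural source of the smaller count. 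As the quadratic part of $H$ is here the balanced oscillator $\sum_i\theta_i\frac{y_i^2+\eta_i^2}{2}$, for $|\alpha|\le1$ the eigenfunction $\varphi_\alpha$ equals, in the Fermi coordinates, the rescaled Hermite product $\hbar^{-n/4}\prod_i h_{\alpha_i}(y_i/\sqrt{\hbar})$ up to $O(\hbar^{1/2})$ in $L^2$; with $\langle\varphi_\alpha,y_ay_b\,\varphi_\alpha\rangle=\tfrac\hbar2(2\alpha_a+1)\delta_{ab}+o(\hbar)$ and the cancellation, in the difference below, of the bounded $\hbar$-corrections to the symbol of $\mathcal\w^k$ and of the cubic remainder, this gives
\be
\lim_{\hbar\to0}\frac1\hbar\Bigl(\langle\varphi_{e_j},\w^k\varphi_{e_j}\rangle-\langle\varphi_0,\w^k\varphi_0\rangle\Bigr)=(\widetilde\Sigma_k)_{jj}=\bar v_j^{T}\Sigma_k\,\bar v_j,\qquad 1\le j\le n.
\ee
Because the $\Sigma_k$ span $\mathrm{Sym}(n)$, these numbers determine, for each $j$, the linear form $\Sigma\mapsto\bar v_j^{T}\Sigma\,\bar v_j=\mathrm{tr}(\Sigma\,\bar v_j\bar v_j^{T})$ on $\mathrm{Sym}(n)$, hence — the trace pairing being non-degenerate — the rank-one matrix $\bar v_j\bar v_j^{T}$, hence $\bar v_j$ up to sign. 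Thus $\bar A^{-1}=[\bar v_1\,|\cdots|\,\bar v_n]$, and so $A:=(\bar A^{-1})^{-1}$, is recovered up to the allowed ambiguity, and I output the system $(x,\xi)\mapsto(Ax,(A^{T})^{-1}\xi)$; by the first paragraph every choice of signs gives a genuine Fermi system. The procedure is explicit: compute the limits, solve the (over-determined, automatically consistent) linear system for the $\bar v_j\bar v_j^{T}$, extract the $\bar v_j$, invert.

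The hard part will be the semiclassical input to the third step: one must know that the low-lying eigenfunctions of $H$ are, in a Fermi system, the rescaled Hermite products up to $o(1)$ in $L^2$ — so the errors stay $o(\hbar)$ after dividing by $\hbar$ — and control the subprincipal and cubic contributions finely enough for the stated cancellation in $\langle\varphi_{e_j},\cdot\,\rangle-\langle\varphi_0,\cdot\,\rangle$. This rests on the quantum Birkhoff normal form near a non-degenerate minimum, exactly as in the proof of Theorem~\ref{corbot}, but is needed here only for multi-indices of length $\le1$. The remaining, more routine, point is the spectral identification of $V(q_0)$, of the $\theta_i$ as the minimal generators of $\Gamma$ (where $\mathbb Q$-independence is essential) and of the labels of $\varphi_0$ and $\varphi_{e_j}$.
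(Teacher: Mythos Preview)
Your proposal is correct and follows essentially the same strategy as the paper's proof: reduce to a linear point transformation (orthogonal change of spatial variable composed with the diagonal scaling by $\theta_i^{1/2}$), then use the diagonal matrix elements of the quadratic-in-$x$ observables to recover the columns of the relevant matrix up to sign, invoking that column sign flips preserve the Fermi property. The paper separates the orthogonal part $U$ from the scaling $\phi_0$ and extracts the products $U_{ik}U_{jk}$ directly from the coefficients of $(\mu_k+\tfrac12)\hbar$ as $\mu$ varies over the whole spectral window (then appeals to Lemma~\ref{lem:diagortho}); you instead combine both into a single $A$, sample only the ground state and the first excited states, and phrase the reconstruction via the rank-one matrices $\bar v_j\bar v_j^{T}$ determined by the trace pairing on $\mathrm{Sym}(n)$. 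These are cosmetic variations of the same argument, and your version is arguably a bit cleaner in that it makes explicit why exactly $\tfrac{n(n+1)}2$ observables suffice.
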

\vskip 0.5cm
\begin{theorem}\label{hope2}
 Let $(x,\xi)\in T^*\R^n$ be a system of Fermi coordinates centered at $(q_0,0)$. 
 
Then the knowledge of the spectrum of $H(x,\hbar D_x)$ in $[V(q_0),V(q_0)+\epsilon]$ with $\hbar^{1-\alpha}=O(\epsilon)$ for some $\alpha>0$, and the diagonal matrix elements  of the $2^n-1$ observables 
$O_{m0},\ m=(m_1,\dots,m_n)\in\{0,1\}^n\setminus\{0\}$, defined in Theorem \ref{corbot},
between the corresponding eigenvectors of $H(x,\hbar D_x)$
determines the full Taylor expansion  of $V$ at $q_0$ in the coordinates $x$.

\end{theorem}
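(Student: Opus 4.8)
\medskip
\noindent\emph{Proof sketch.} The plan is to run the order-by-order reconstruction underlying Theorem \ref{corbot} while exploiting the rigidity of the Schr\"odinger case: in \emph{any} system of Fermi coordinates the total Weyl symbol of $-\hbar^2\Delta+V$ is $V(q_0)+\sum_{i=1}^n\theta_i\frac{x_i^2+\xi_i^2}{2}+R(x)$ with $R(x)=O(x^3)$ a function of the position variables alone, hence even in $\xi$; equivalently, writing $z_j=\frac{x_j+i\xi_j}{\sqrt 2}$ and $I_j=z_j\bar z_j=\frac{x_j^2+\xi_j^2}{2}$, the symbol is invariant under the componentwise conjugation $z_j\leftrightarrow\bar z_j$. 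Since moreover the subprincipal symbol of a Schr\"odinger operator vanishes and the operator is entirely determined by $V$, it suffices to recover the Taylor coefficients $(c_\alpha)$ of $R$ at $q_0$; the $\hbar$-corrections appearing in the quantum Birkhoff normal form and in the quantization of the normalizing transformation are then determined, and are separated off from the data by the same $\hbar$-asymptotic analysis of the spectrum in $[V(q_0),V(q_0)+\epsilon]$ as in Theorem \ref{corbot}. We argue by induction on the homogeneous degree $N$, assuming $R_3,\dots,R_{N-1}$ known.

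Knowing $R_3,\dots,R_{N-1}$, the classical transformation $\chi_{\le N-1}$ putting $H$ in Birkhoff normal form to order $N-1$ — a composition of time-one flows of Hamiltonians $W_3,\dots,W_{N-1}$, each $W_m$ obtained from $R_m$ and lower data by the homological equation $\{W_m,\sum_i\theta_i I_i\}=-(\text{non-resonant part at order }m)$ — is explicitly computable; rational independence of the $\theta_i$ makes $\{\,\cdot\,,\sum_i\theta_i I_i\}$ invertible on every non-resonant monomial and shows the resonant ones are exactly the $z^\beta\bar z^\beta$. The order-$N$ part of $\chi_{\le N-1}^*H$ is $R_N$ plus known lower-order contributions; its resonant part must equal the order-$N$ normal-form coefficient $\calN_N$, which the Bohr--Sommerfeld expansion of the known eigenvalues determines, so the resonant part of $R_N$ — the coefficients $c_{2\beta}$ with $|\beta|=N/2$ — is now known, while the next generator $W_N$, defined by $\{W_N,\sum_i\theta_i I_i\}=-(\text{non-resonant part of that order-}N\text{ term})$, gets expressed linearly through the still-unknown non-resonant part of $R_N$ (its own resonant part being conventional and dropping out below).

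The remaining unknowns, the coefficients $c_\alpha$ with $\alpha$ not of the form $2\beta$, equivalently the non-resonant part of $R_N$, are recovered from the observables. For non-empty $S\subseteq\{1,\dots,n\}$ let $O_S$ be the observable $O_{m0}$ of Theorem \ref{corbot} with $m=\mathbf 1_S$, of leading symbol $\prod_{j\in S}z_j$; conjugating by the unitary realizing the quantum normal form, $\langle\varphi_k,O_S\varphi_k\rangle$ equals — to leading order in $\hbar$, the lower-order parts of the symbol and the $\hbar$-corrections being treated as in Theorem \ref{corbot} — the value at $I_j=\hbar(k_j+\half)$ of the resonant part of $\chi^*\!\big(\prod_{j\in S}z_j\big)$, whose degree-$(N+|S|-2)$ component has the form $\Lambda_S(W_N)+(\text{known})$, with $\Lambda_S(W_N)$ the resonant part of $\{W_N,\prod_{j\in S}z_j\}$ (multiple brackets involving only known lower generators). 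Substituting the linear expression for the non-resonant part of $W_N$ turns these identities, as $S$ ranges over the $2^n-1$ non-empty subsets, into an explicit linear system for the non-resonant part of $R_N$. The main obstacle is to prove this system injective: the measured map $k\mapsto\langle\varphi_k,O_S\varphi_k\rangle$ determines a whole function of the $n$ variables $I_j=\hbar(k_j+\half)$, so the $2^n-1$ observables supply $\Theta(N^{n-1})$ scalar equations at order $N$, matching the $\Theta(N^{n-1})$ coefficients $c_\alpha$ left after the spectral input (for a general symbol one has instead $\Theta(N^{2n-1})$ unknowns, whence the polynomially growing family of Theorem \ref{corbot}, cf.\ the remark after Corollary \ref{toutper}); one has to check, using the $z\leftrightarrow\bar z$-symmetry of $R$ and the combinatorics of how $X_{W_N}$ pushes the holomorphic monomials $\prod_{j\in S}z_j$ into the resonant sector (those $S$ with $|S|$ even feeding the even-degree coefficients, the odd $S$ the odd-degree ones), that the resulting overdetermined system has trivial kernel. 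Granting this, $R_N$ is determined, the induction closes, and the full Taylor expansion of $V$ at $q_0$ in the coordinates $x$ is recovered; the finitely many observables used are the sub-family $\{O_{m0}:m\in\{0,1\}^n\setminus\{0\}\}$ of those of Theorem \ref{corbot}, and it is exactly the supplementary information carried by their matrix elements, beyond the spectrum, that lets one drop both the $x_i\mapsto-x_i$ symmetry hypothesis of \cite{gur} and the nondegeneracy hypothesis of \cite{gcdv}.
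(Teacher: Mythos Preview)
Your overall strategy matches the paper's: induction on the degree $N$, with the spectrum (via the Birkhoff normal form) supplying the ``even'' coefficients $c_{2\beta}$ of $R$ and the diagonal matrix elements of the $O_{m0}$ supplying the rest. The serious gap is the sentence ``one has to check \ldots\ that the resulting overdetermined system has trivial kernel. Granting this, $R_N$ is determined.'' That step is the entire content of the theorem beyond the machinery of Theorem~\ref{corbot}, and you have not done it; the dimension count and the even/odd parity remark are suggestive but do not constitute a proof of injectivity.

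The paper closes exactly this gap by making the system not merely injective but essentially diagonal, via two explicit identifications. First, writing $a_k=\frac{1}{k!}\partial^k R(0)$ and solving $\{H_0,I_k\}=x^k$ (mod resonant part) explicitly, one finds that the coefficient $\alpha_{l_k m_k}$ of the single monomial $z^{l_k}\bar z^{m_k}$ in the generator $F_{N+1}$, where $(l_k)_i=\lfloor k_i/2\rfloor$, $(m_k)_i=\lceil k_i/2\rceil$, satisfies
\[
\alpha_{l_k m_k}-\frac{a_k}{\sqrt{2}^{|k|}}\,\frac{\prod_j\binom{k_j}{\lfloor k_j/2\rfloor}}{\theta\cdot(l_k-m_k)}\ =\ (\text{function of }(a_{k'})_{|k'|\le N}\text{ only}),
\]
and the scalar in front of $a_k$ is manifestly nonzero. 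The map $k\mapsto(l_k,m_k)$ is a bijection from $\{k:|k|=N+1\}\setminus 2\bbN^n$ onto $\{(l,l+m):m\in\{0,1\}^n\setminus\{0\},\ 2|l|+|m|=N+1\}$. Second, for the observable with principal symbol $z^m$ one computes directly that the resonant part of $\{F_{N+1},z^m\}$ is $-\sum_{k-l=m}\alpha_{lk}\,|z|^{2l}\sum_i k_i m_i$, so the diagonal matrix element, read as a polynomial in the action variables $(\mu_j+\tfrac12)\hbar$, isolates each $\alpha_{l,l+m}$ at its own monomial $|\mu\hbar|^l$, with the nonzero weight $\sum_i(l_i+m_i)m_i\ge |m|>0$. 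Combining the two displays, the observable $O_{m0}$ recovers precisely those $a_k$ whose parity pattern (the set of odd components of $k$) equals $m$; letting $m$ run over $\{0,1\}^n\setminus\{0\}$ covers all non-even $k$. This is the missing argument you should supply in place of ``granting this.''
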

\vskip 0.5cm
\tcr{
\begin{corollary}\label{toutsch}
The diagonal matrix elements of the operators $\w^k$ 
%and $O_{m0}$ 
as in Theorems \ref{hope2fermi} determine $2^n-1$ observables $O_{m0}$ whose diagonal matrix elements as in Theorem  \ref{hope2} determine the potential $V$ up to a function vanishing to infinite order at $q_0$.
\end{corollary}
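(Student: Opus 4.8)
The plan is to concatenate Theorem \ref{hope2fermi} and Theorem \ref{hope2} exactly as the proof of Corollary \ref{toutfond} concatenates Theorems \ref{fermicorbot} and \ref{corbot}, being careful that the second-stage observables $O_{m0}$, $m\in\{0,1\}^n\setminus\{0\}$, are \emph{defined} with respect to a system of Fermi coordinates, so we must produce such a system constructively before we can even name them. First I would invoke Theorem \ref{hope2fermi}: choosing, e.g., the potentials $\mathcal Q^2_{ij}(x)=x_ix_j$ in an arbitrary fixed local system of coordinates centered at $q_0$, the diagonal matrix elements of the associated operators $\w^k$, $k=1,\dots,\frac{n(n+1)}2$, between the eigenvectors of $H(x,\hbar D_x)$ with eigenvalue in $[V(q_0),V(q_0)+\epsilon]$ (any $\epsilon=\epsilon(\hbar)$ with $\hbar=o(\epsilon)$) determine, in a constructive way, an explicit system of Fermi coordinates $(x,\xi)\in T^*\R^n$ centered at $(q_0,0)$ — explicit meaning we have the change of coordinates from the fixed initial system in closed form in terms of the computed matrix elements.

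Having this Fermi system in hand, the $2^n-1$ observables $O_{m0}$, $m\in\{0,1\}^n\setminus\{0\}$, are now well-defined pseudodifferential operators (their total Weyl symbols in \emph{these} coordinates are prescribed by \eqref{defob} with $n=0$), so the hypothesis of Theorem \ref{hope2} is met. Applying that theorem: the diagonal matrix elements of these $O_{m0}$ between the eigenvectors of $H$ with eigenvalue in $[V(q_0),V(q_0)+\epsilon]$, now with the slightly stronger window $\hbar^{1-\alpha}=O(\epsilon)$, determine the full Taylor expansion of $V$ at $q_0$ expressed in the coordinates $x$ of the Fermi system just constructed. Since the change of coordinates from the original fixed local system to the Fermi system has itself been determined in stage one, pulling back the Taylor expansion gives $V$ in the original coordinates up to a function vanishing to infinite order at $q_0$ (the flat part being invisible to the trace/matrix-element data, as noted in the remark following Theorem \ref{main}). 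Note that the window requirement in Theorem \ref{hope2} is more restrictive than in Theorem \ref{hope2fermi}; one simply works throughout with an $\epsilon$ satisfying the former, which is admissible for the latter as well, so the two stages can be run with a single common family of eigenvectors.

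The only point requiring care — and hence the main (minor) obstacle — is bookkeeping of the two coordinate systems: Theorem \ref{hope2fermi} outputs \emph{a} Fermi system, and Theorem \ref{hope2} is stated for observables defined relative to \emph{the} Fermi system it is handed; one must check that feeding the stage-one output into stage two is legitimate, i.e.\ that the $O_{m0}$ built on the constructed Fermi coordinates are genuine semiclassical pseudodifferential operators on $\mathcal M$ and that the diagonal matrix element data needed in stage two is of the same nature (matrix elements in the same low-energy eigenvectors) as that already used in stage one, so no new spectral input beyond a single common data set is invoked. This is exactly the structure of the proof of Corollary \ref{toutfond}, specialized to the Schr\"odinger case where the potential $V$ replaces the general total symbol; no new idea is needed beyond the two cited theorems.
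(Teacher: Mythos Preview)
Your proposal is correct and matches the paper's approach: the corollary is stated without an explicit proof, being an immediate concatenation of Theorems \ref{hope2fermi} and \ref{hope2} in exact parallel with Corollary \ref{toutfond}. Your attention to the compatibility of the spectral windows and to the fact that the $O_{m0}$ are only well-defined once the Fermi system from stage one has been produced is appropriate, but no further argument is needed.
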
}
\begin{remark}
Note that since we are dealing with observables localized near the bottoms of the wells, the hypothesis that $z_0$ in Theorems \ref{fermicorbot}-\ref{corbot} and $q_0$ in Theorems \ref{hope2fermi} and \ref{hope2} are \emph{global} minima can be released and the corresponding results can be formulated in a straightforward way.
\end{remark}

\vskip 0.5cm

The proof of Theorem \ref{main}  relies on two  results having their own interest per se: 
\textbf{Proposition \ref{etape1}} which shows that the coefficients of the trace formula determine the matrix elements $\langle\vp_j,O(x,\hbar D_x)\vp_j\rangle$ where $\vp_j$ are the eigenvectors of the normal form of the Hamiltonian, and \textbf{Proposition \ref{propp}} which states that the knowledge of the matrix elements of the conjugation of a given known selfadjoint operator by a 
unitary one determines, in a certain sense, the latter.

As a byproduct of Proposition \ref{propp} we obtain also a purely classical result, somehow analog of it: the averages on Birkhoff angles associated to Birkhoff coordinates of the same classical observables than the ones in Theorem \ref{main} determine the Taylor expansion of the (true) Hamiltonian. This is the content of 
 \textbf{Theorem \ref{classmain}} below.
\vskip 1cm
The paper is organized as follows. 
Section \ref{proof} is devoted to the proof of Theorems \ref{main}, \ref{corbot} and \ref{hope2}. In section \ref{bot}, we give an explicit construction of some Fermi coordinates out of any system of local coordinates in both the periodic and ``Bottom of the well" case: this is the content of Theorems \ref{mainFermi}, \ref{fermicorbot} and \ref{hope2fermi}. In Section \ref{class} we show the classical equivalent of our quantum formulation.

Through the whole paper, $\llbracket l,m\rrbracket,\ l< m,$
will stand for the set of integers $\{l,\dots,m\}$ \tcr{and we  we will   assume, without loss of generality, that the period of $\gamma$ is  equal to $1$}.
 \\

\section{Recovering the Hamiltonian in some given Fermi coordinates}\label{proof}

Let us start this section by observing that, 
by microlocalization  near $\gamma$,
it is enough, in order to prove Theorem \ref{main}, to prove  Theorem \ref{flatmain} below, which is nothing but the  same statement  expressed in a local Fermi system of coordinates.

The proof of Theorem \ref{flatmain} will  need  a construction of the quantum Birkhoff normal form, given in subsection \ref{subsec:QBNF}. The rest of the proof is then a consequence of Proposition \ref{etape1} (subsection \ref{subsec:etape1}) and Proposition \ref{propp} (subsection \ref{subsec:propp}). Subsection \ref{subsec:BW} contains the proof of the analogs of Theorem \ref{main} when $\gamma$ is reduced to a single point, both in the general and ``Schrödinger" cases (Theorems  \ref{corbot} and \ref{hope2}).

\begin{theorem}\label{flatmain}
Let $H(x,\hbar D_x)$ be a self-adjoint semiclassical elliptic pseudodifferential operator on $L^2(\R^n\times \bbS^1)$.
Let $(x,t,\xi,\tau)\in T^*(\R^n\times \bbS^1)$ be the canonical symplectic coordinates and let us  assume that $\gamma=\bbS^1=\{x=\xi=\tau=0\}$ is a  non degenerate elliptic periodic orbit of the Hamiltonian flow generated by the principal symbol $H_p$ of  $H(x,\hbar D_x)$ on the energy shell $H_p^{-1}(E)$.

Let us assume moreover that $H_p$  can be written in these coordinates as: 
\be
H_p (x,t,\xi,\tau)=H_0(x,t,\xi,\tau)+H_2
\ee

where \be\label{annulordre3}
H_2=O\left((x^2+\xi^2+\vert\tau\vert)^{\frac{3}{2}}\right)
\ee

And $H_0$ is equal to:  
\be\label{H_0}
H_0(x,t,\xi,\tau)=E+\sum_{i=1}^n\theta_i\frac{x_i^2+\xi_i^2}{2}+\tau
\ee

For $(m,n,p)\in\bbN^{2n}\times\bbZ\times$, let  $O_{mnp}$, $O_p$ be  any pseudodifferential operator whose total Weyl symbols $\mathcal{O}_{mnp}$, $\calO_p$ satisfy in a neighborhood of $\gamma$:  
\begin{equation}\label{defo}
\left\{\begin{array}{rcl}
\mathcal{O}_{mnp}(x,t,\xi,\tau)&=&e^{-i2\pi pt}\prod_{j=1}^n\left(\frac{x_j+i\xi_j}{\sqrt{2}}\right)^{m_j}\left(\frac{x_j-i\xi_j}{\sqrt{2}}\right)^{n_j}
\\&&+\sum\limits_{2l+N=|m|+|n|+1}O(\hbar^l\left(x^2+\xi^2+\vert\tau\vert\right)^{\frac{N}{2}})\\
\mathcal{O}_{p}(x,t,\xi,\tau)&=&e^{-i2\pi pt}\tau+\sum\limits_{2l+N=3}O(\hbar^l\left(x^2+\xi^2+\vert\tau\vert\right)^{\frac{N}{2}})
\end{array}\right.
\end{equation}

Then the knowledge of the coefficients $a^{l}_k(O_{mnp})$ and $a^{l}_k(O_{q})$  in \eqref{gutzo}-\eqref{Sbojo} for $k\leq N$ and  $m,n,p,q$ satisfying
\begin{enumerate}
\item $\vert m\vert+\vert n\vert\leq N$
\item  $\forall j=1\dots n,\ m_j=0$\ \textbf{or} $ n_j =0$ 
\item $p\in\bbZ$, $q\in\bbZ^*$
\end{enumerate}
determines the Taylor expansion near $\gamma$ of the full symbol (in the system of coordinates $(x,t,\xi,\tau)$) of $H(x,\hbar D_x)$ up to order $N$.
\end{theorem}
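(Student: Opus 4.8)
The plan is to reduce the reconstruction of the full symbol of $H(x,\hbar D_x)$ to a combination of two pieces: the quantum Birkhoff normal form construction, and the reading of matrix elements from the trace coefficients. First I would build, following \cite{vgtp}, the microlocal quantum Birkhoff normal form near $\gamma$: there is a semiclassical Fourier integral operator $U$ (microlocally unitary near $\gamma$) such that $U^* H(x,\hbar D_x) U = \hat N$, where $\hat N$ is a function of the elementary operators $\hat I_j = \tfrac12((\hbar D_{x_j})^2 + x_j^2 - \hbar)$ (or the corresponding creation/annihilation pairs), $\hbar D_t$, and $\hbar$, with full symbol $N(x,t,\xi,\tau;\hbar)$ an explicit formal series. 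The key structural point I would record is that $U$ is itself determined, order by order, by $H$ through the cohomological equations, and that the terms of $N$ to a given order depend only on finitely many Taylor coefficients of $H$ at $\gamma$; conversely, knowing all of $N$ and all of $U$ recovers all of $H$. So the whole problem splits as: (a) recover $N$, (b) recover $U$, each up to order $N$ in the transverse variables and order $l$ in $\hbar$ with $2l + M_1 + 2M_2 \le N$.

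For step (a), I would invoke \textbf{Proposition \ref{etape1}}: the coefficients $a^l_k(O)$ of the trace formula with observable $O$ determine the diagonal matrix elements $\langle \vp_j, \widetilde O \vp_j\rangle$ where $\vp_j$ are the eigenvectors of the \emph{normal form} $\hat N$ and $\widetilde O = U^* O U$. Taking $O = \mathbf{1}$ (or $O_p$ with the $e^{-2i\pi pt}\tau$ symbol, to resolve the $t$-Fourier modes) recovers the eigenvalues of $\hat N$ on the relevant band of quasimodes, i.e. $N(\tfrac{\hbar}{\,}\cdot) $ evaluated at the lattice points $(\hbar(k+\tfrac12), 2\pi p\hbar,\dots)$; since $N$ is a polynomial-in-actions formal series modulo $\hbar^\infty$, finitely many such values pin down its Taylor coefficients up to the required order. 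This is essentially the argument of \cite{vgtp} that the $(a^r_k)$ determine the Birkhoff normal form.

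For step (b), the recovery of $U$, I would use the observables $O_{mnp}$ and \textbf{Proposition \ref{propp}}. The point is that $\widetilde O_{mnp} := U^* O_{mnp} U$ is a known operator modulo the conjugation: its principal symbol is $e^{-2i\pi pt}\prod_j (a_j^\dagger)^{?}(a_j)^{?}/\sqrt 2^{\,\cdots}$ pushed forward by the symplectomorphism underlying $U$, with controlled lower-order corrections. Proposition \ref{propp} says precisely that knowing the diagonal matrix elements (in the normal-form eigenbasis) of $U^* O U$ for a rich enough family of \emph{explicitly known} $O$ determines $U$ itself in the relevant sense — here "rich enough" is exactly conditions (1), (2): the monomials $z^m \bar z^n$ with $m_j=0$ or $n_j=0$ span, under the diagonal-expectation pairing against Hermite states, enough information to reconstruct the full symbol of the conjugating operator transversally to $\gamma$ up to order $N$, while condition (1) and the $\hbar$-bookkeeping $2l+M_1+2M_2\le N$ keep everything within the orders the trace coefficients $a^l_k$, $k\le N$, actually control. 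Combining (a) and (b): $H = U \hat N U^*$, and both factors are now known to the required order, hence so is the full Weyl symbol of $H$ in the given Fermi coordinates up to order $N$.

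\textbf{The main obstacle.}
The delicate part is the bookkeeping in step (b): matching the "budget" of orders. One must check that the family in conditions (1)--(2), which is polynomially smaller in $N$ than the space of all transverse polynomials of degree $\le N$ (the remark after Corollary \ref{toutper} flags this), is nevertheless sufficient — this works because $U$ is \emph{unitary}, so its symbol is not arbitrary but constrained by $U^*U = \mathbf{1}$, cutting the effective number of unknowns down to what the restricted family sees. Making this precise is exactly Proposition \ref{propp}, so the real work is in that proposition; granting it, the only remaining care is the triangular/inductive structure: at each transverse order $N'\le N$ one first reads off the order-$N'$ part of $N$ from step (a), then the order-$N'$ part of $U$ from step (b) using matrix elements already determined at lower orders, and one must verify no circularity and that the $\hbar$-powers line up so that $a^l_k(O)$ for $k\le N$ indeed suffices. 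I would also need to handle the $t$-direction (the $\tau$ variable and the $e^{-2i\pi pt}$ Fourier modes): this is why $p\in\bbZ$ ranges freely and $q\in\bbZ^*$ for the $O_q$, and checking that all Fourier modes in $t$ of the symbol of $H$ are reached is a routine but necessary verification.
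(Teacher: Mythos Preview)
Your proposal is correct and follows essentially the same route as the paper: construct the quantum Birkhoff normal form (the paper's Proposition \ref{W}), use Proposition \ref{etape1} to pass from trace coefficients to diagonal matrix elements in the normal-form eigenbasis, and use Proposition \ref{propp} to recover the conjugating operator $W_{\leq N}$ (hence $H$) from those matrix elements, with the induction on transverse order and the unitarity constraint doing exactly the work you describe. The paper likewise states explicitly that Theorem \ref{flatmain} is a direct consequence of Propositions \ref{etape1} and \ref{propp}, and your identification of the order-by-order triangular structure and the role of conditions (1)--(3) matches the paper's execution.
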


The proof of Theorem \ref{flatmain} will be  divided into  three steps: first, we will prove in Proposition \ref{W}  the existence of the quantum Birkhoff normal form in a form  convenient for our computations, especially concerning the discussion of orders. 
In Proposition \ref{etape1}, we will show that the trace formula with any observable $O$ determines the matrix elements of 
$O$ in the eigenbasis of the normal form.
Finally, in Proposition \ref{propp}, we will show that these matrix elements 
determines $H(x,\hbar D_x)$ in a formal neighborhood of $x=\xi=\tau=0$, which will lead to Theorem \ref{flatmain}. \\

Let us first fix some notations and standard results.
For $i\in\llbracket1,n\rrbracket:=\{1,\dots,n\}$, we define  the following operators on $L^2(\bbR^n\times \bbS^1)$:
\begin{itemize}
\item $a_i=\frac 1 {\sqrt 2}(x_i+\hbar\partial_{x_i})$
\item $a_i^*=\frac 1 {\sqrt 2}(x_i-\hbar\partial_{x_i})$
\item $D_t=-i\hbar\partial_t$
\item $P_i:=\frac 12 \left(-\hbar\partial^2_{x_i}+x_i^2\right)=a_i^*a_i+\frac{\hbar}{2}$
\end{itemize}
For $\mu\in\bbN^n,\ \nu\in\bbZ$ we will denote by $\mur$ the common eigenvectors of  $P_1\dots P_n$ and $D_t$:
\be\label{pi}
P_i\mur=(\mu_i+\frac 12 )\hbar\mur \mbox{ and }D_t\mur=2\pi\hbar\nu \mur.
\ee 
These vectors are explicitly constructed as follows: 

\be\label{mur}
  |0,0\rangle(x,t):= \frac{1}{(\pi\hbar)^{\frac{n}{4}}}e^{\frac{-x^2}{2\hbar}},\ \ \ 
  \mur(x,t):=e^{i2\pi\nu t}\prod_{i=1}^n\frac{1}{\sqrt{\mu_i !\hbar^{\vert\mu\vert}}}a_i^{*\mu_i}|0,0\rangle(x,t)
\ee
We will also need the notation
\be\label{mu}
\vert\mu\rangle(x):=\vert\mu,0\rangle(x,0)
\ee
We will not need the explicit expressions of $\mur(x,t)$ and $\vert\mu\rangle(x)$ in terms of rescaled Hermite functions, but rather use the following identities: 
\be\label{prop}
  \begin{cases}
   a_i\mur=\sqrt{\mu_i\hbar}|\mu_1,\dots,\mu_{i-1},\mu_{i}-1,\mu_{i+1},\dots,\mu_n,\nu\rangle \\
a_i^*\mur=\sqrt{(\mu_i+1)\hbar}|\mu_1,\dots,\mu_{i-1},\mu_{i}+1,\mu_{i+1},\dots,\mu_n,\nu\rangle \\
[a_i,a_j^{*}]=\delta_{ij}\hbar,\ 
[a_i,a_j]=0.
 \end{cases}
\ee

We shall write $\vert\mu\vert:=\sum\limits_{i=1}^n\mu_i$,  $z_i=\frac{x_i+i\xi_i}{\sqrt{2}}$, $p_i=\frac{x_i^2+\xi_i^2}{2}$ and denote by
$\text{Op}^{W}(f)$  the pseudodifferential operator whose  total Weyl symbol is $f$. We have
\be\label{autres}
\text{Op}^{W}(z_i)=a_i,\ 
\text{Op}^{W}(\bz_i)=a^{*}_i,\ 
\text{Op}^{W}(z_i\bz_i)=P_i\mbox{ and }
\text{Op}^{W}(\tau)=D_t
\ee

Finally, we will denote by $a$, $a^*$ or $P$ the $n$-tuple of  operators $a_i$, $a_i^*$, $P_i$, $i\in\llbracket1,n\rrbracket$
and denote for $j$  $n$-tuple of nonnegative integers, $X^j=\prod\limits_{i=1}^n X_i^{j_i}$.
\ \\
\subsection{Construction of the Quantum Birkhoff normal form}\label{subsec:QBNF}

Our construction of the normal form, inspired by \cite{vgtp}, is the content of the following Proposition.

\begin{proposition}\label{W}
Let $H(x,\hbar D_x)$ be a self-adjoint semiclassical elliptic pseudodifferential operator on $L^2(\bbR^n\times\mathbb{S}^1)$, whose principal symbol is 

\be
H_p (x,t,\xi,\tau)=H_0(p,\tau)+H_2
\ee
where $H_0(p,\tau)=\sum_{i=1}^n\theta_ip_i+\tau$ and $H_2$ vanishes to the third order on $x=\xi=\tau=0$.\\

Then for any $N\geq3$, there exists  a self-adjoint semiclassical elliptic pseudodifferential operator $\widetilde{W}_{\leq N}$ and a smooth function $h(p_1,\dots,p_n,\tau,\hbar)$ satisfying microlocally in a neighborhood of $x=\xi=\tau=0$ the following statement:

\be\label{BNF}\begin{split}
&\forall M>0, \ \exists C_N=C_N(M)>0, \forall(\mu,\nu,\hbar)\in\bbN^n\times \bbZ\times [0,1[, |\mu\hbar|+|\nu\hbar|<M, \\ 
 &\normleft \left(\U{N}H\Uinv{N}-h(P_1,\dots,P_n,D_t,\hbar)\right)\mur\normright\leq C_N( |\mu\hbar|+|\nu\hbar|)^{\frac{N+1}{2}}
 \end{split}
\ee

The operators $\widetilde{W}_{\leq N}$ can be computed recursively in the form: 

\be
\widetilde{W}_{\leq N}=W_{\leq N}+(D_t^2+\sum_{i=1}^nP_i)^{N+1}
\ee
where
\be
\begin{cases}
W_{\leq N}=\sum_{3\leq q\leq N} W_q\\
W_q:=\sum\limits_{2p+|j|+|k|+2m=q} \alpha_{pjkm}(t)\hbar^p\text{Op}^{W}(z^j\bz^k)D_t^m
\end{cases}
\ee
with
 $\alpha_{pjkm}$ smooth and 
 $W_q$ is symmetric. 
\end{proposition}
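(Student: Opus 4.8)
The plan is to construct $\widetilde W_{\leq N}$ recursively in $N$, killing the obstruction to normal form one homogeneity degree at a time, exactly as in the classical Birkhoff procedure but working with the operators $a_i, a_i^*, D_t$ and tracking the bookkeeping of orders in $\hbar$ jointly with orders in $(x,\xi,\tau)$. The grading that makes everything work assigns weight $1$ to each of $a_i, a_i^*$, weight $2$ to $D_t$, and weight $2$ to $\hbar$; with this convention $H_0(P,D_t,\hbar)$ has a mix of weight-$2$ terms, $W_q$ as defined is homogeneous of weight $q$, and the commutator $[H_0, W_q]$ is again of weight $q$ (since $\ad_{H_0}$ preserves the grading: $[\,\cdot\,, \theta_i P_i]$ and $[\,\cdot\,, D_t]$ are weight-preserving). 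First I would set up the induction: suppose $\widetilde W_{\leq N-1}$ has been found so that $e^{i\widetilde W_{\leq N-1}/\hbar}H e^{-i\widetilde W_{\leq N-1}/\hbar}$ agrees with $h_{N-1}(P,D_t,\hbar)$ up to an error operator whose Weyl symbol vanishes to weight $N$; write that error's leading weight-$N$ part as $R_N = \sum \rho_{pjkm}(t)\hbar^p \Op^W(z^j\bz^k) D_t^m$. Conjugating further by $e^{iW_N/\hbar}$ changes the weight-$N$ part of the conjugated operator by $+\tfrac{i}{\hbar}[H_0, W_N]$, so we must solve the cohomological equation $\tfrac{i}{\hbar}[H_0, W_N] + R_N = (\text{diagonal part})$.

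The core computation is that $\tfrac{i}{\hbar}[H_0, \Op^W(z^j\bz^k)D_t^m]$ acts diagonally on the monomial basis: using $[a_i, a_j^*] = \delta_{ij}\hbar$ one gets $\tfrac{i}{\hbar}[\sum_i\theta_i P_i + D_t,\ \Op^W(z^j\bz^k)D_t^m] = \bigl(i\sum_i \theta_i(k_i - j_i) + (\text{term from }\partial_t)\bigr)\Op^W(z^j\bz^k)D_t^m$, where the $t$-dependence of $\rho_{pjkm}(t)$ (which we can Fourier-expand, $\rho = \sum_\nu \hat\rho_\nu e^{2\pi i\nu t}$) contributes $2\pi i\nu$ from $[D_t, \cdot]$. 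Hence each Fourier–monomial mode $e^{2\pi i\nu t} z^j\bz^k D_t^m \hbar^p$ gets multiplied by the scalar $\lambda = i\bigl(\sum_i\theta_i(k_i - j_i) + 2\pi\nu\bigr)$. By the non-degeneracy hypothesis (the $\theta_i$ and $\pi$ are independent over $\mathbb Q$), $\lambda = 0$ if and only if $j = k$ and $\nu = 0$; for those resonant modes we cannot remove the term and we absorb it into $h$ (as a function of the $P_i = \Op^W(z_i\bz_i)$ and $D_t$ — one checks $j=k$ forces $z^j\bz^k$ to be a polynomial in the $P_i$ up to lower-weight corrections, which get swept into later stages of the induction), while for all non-resonant modes we set $W_N$'s corresponding coefficient to $\rho/\lambda$. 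This defines $W_N$; symmetry (self-adjointness) of $W_N$ is arranged by taking the real/symmetric part, which is consistent because $R_N$ inherits self-adjointness from $H$ and from the previous steps. The extra summand $(D_t^2 + \sum_i P_i)^{N+1}$ added to form $\widetilde W_{\leq N}$ from $W_{\leq N}$ is there only to make $\widetilde W_{\leq N}$ a genuine elliptic pseudodifferential operator (it is of weight $2(N+1) > N$, hence does not affect anything through weight $N$); I would note this and move on.

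Finally I would assemble the estimate \eqref{BNF}: after conjugating by $\U{N}$, the Weyl symbol of $\U{N}H\Uinv{N} - h(P,D_t,\hbar)$ vanishes to weight $N+1$ on $x=\xi=\tau=0$, uniformly; applying such an operator to $\mur$ and using $\|a_i^{\#}\mur\| \lesssim (|\mu|\hbar)^{1/2}$, $\|D_t\mur\| = 2\pi\hbar|\nu|$ together with a Taylor-with-remainder / Calderón–Vaillancourt bound in the region $|\mu\hbar| + |\nu\hbar| < M$ yields the $( |\mu\hbar| + |\nu\hbar|)^{(N+1)/2}$ control, with $C_N$ depending on $M$ through the symbol seminorms of the remainder. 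I expect the main obstacle to be not any single step but the careful simultaneous bookkeeping of the two filtrations (powers of $\hbar$ and vanishing order in $(x,\xi,\tau)$) through the Baker–Campbell–Hausdorff expansion of the conjugation: one must check that every term generated by $e^{iW/\hbar}(\cdot)e^{-iW/\hbar} = \sum_k \tfrac{1}{k!}(\tfrac{i}{\hbar}\ad_W)^k(\cdot)$ either has weight $> N$ (negligible at this stage) or is linear in $W_N$ and hence handled by the cohomological equation — the point being that the $1/\hbar$ in the exponent is exactly compensated by the $\hbar$-weight-$2$ convention so that $\tfrac{i}{\hbar}\ad_{W_q}$ raises weight by $q-2 \geq 1$. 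Getting the uniformity of $C_N$ in $\mu,\nu$ (as opposed to a pointwise-in-$\hbar$ statement) also requires a little care with the remainder estimates, but is routine once the grading is set up correctly.
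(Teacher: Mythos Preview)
Your proposal is correct and follows essentially the same approach as the paper: the paper formalizes your weight filtration as the notion of ``polynomial operators of order $r$'' (PO($r$)), proves your key bookkeeping fact $\tfrac{i}{\hbar}\ad_{W_q}$ raises weight by $q-2$ as a separate lemma on commutators of PO($r$) with PO($r'$), and solves the same cohomological equation Fourier-mode-by-Fourier-mode in $t$ with the same non-resonance argument. The only cosmetic differences are that the paper conjugates by the single exponential $e^{iW_{\leq N}/\hbar}$ rather than composing $e^{iW_{\leq N-1}/\hbar}$ with $e^{iW_N/\hbar}$ (equivalent modulo higher weight, as you note), and a sign: the change at weight $N$ is $\tfrac{i}{\hbar}[W_N,H_0]$ rather than $\tfrac{i}{\hbar}[H_0,W_N]$.
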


\begin{remark}[\textsc{Important convention}]
We are only interested in recovering the Hamiltonian in a formal neighborhood of $\gamma$: every asymptotic expansion is meant microlocally and we will be rewriting equations such as \eqref{BNF} simply as: 
\be
 \normleft \left(\U{N}H\Uinv{N}-h(P_1,\dots,P_n,D_t,\hbar)\right)\mur\normright=O\left( |\mu\hbar|+|\nu\hbar|)^{\frac{N+1}{2}}\right)
\ee
By abuse of notation, we will identify the same way any operator with its version microlocalized near $\gamma$.
\end{remark}

\begin{remark}
We introduce $\W{N}$ in order to gain ellipticity and self-adjointness like it has been done in Lemma 4.5 of \cite{vgtp}.
\end{remark}
The proof of Proposition \ref{W} will need several preliminaries: 
\begin{definition}\label{opog}

We will say that a pseudodifferential operator $A$ on $L^2(\bbR^n\times \bbS^1)$ is ``polynomial of order $r\in\bbN$" (PO($r$)) if there exists $\alpha_{pjkm}\in C^\infty(\mathbb{S}^1,\bbC)$ such that:
\be\label{PO}
A=\sum_{2p+|j|+|k|+2m=r} \alpha_{pjkm}(t)\hbar^p\text{Op}^W(z^j\bz^k) D_t^m
\ee

\end{definition}

These operators have the following  properties.

\begin{proposition}\label{propreduc}
Let $A$ be a pseudodifferential operator on $L^2( \bbR ^n\times \bbS^1)$.  
Then, there exists a family of operators $A_r$, $r\in\bbN$  such that for any $i\in\bbN$, $A_r$ is PO($r$)
and 
\be\label{reducti}
\forall N\in\bbN, \left\|\left(A-\sum_{r=0}^NA_r\right)\mur\right\|=O\left(\left(|\mu\hbar|+|\nu\hbar|\right)^{\frac{N+1}2}\right)
\ee
\end{proposition}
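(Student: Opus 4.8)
The plan is to reduce the statement to an anisotropic (``weighted'') Taylor expansion of the total Weyl symbol of $A$ near $\gamma$, followed by a bookkeeping reorganisation of the Weyl quantisation of each homogeneous piece, and finally a mapping estimate on the family $\mur$ built directly out of the identities \eqref{prop}. Since, by the conventions adopted above, we work microlocally near $\gamma$, we may assume $A=\text{Op}^W(a)$ where $a$ is a total Weyl symbol supported in a fixed small neighbourhood of $\{x=\xi=\tau=0\}$, with an asymptotic expansion $a\sim\sum_{j\ge0}\hbar^j a_j(x,t,\xi,\tau)$; in particular $a$, and all the remainders below, are bounded together with all their derivatives. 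Assign to $x_i,\xi_i$ the weight $1$ and to $\tau$ and $\hbar$ the weight $2$: this is exactly the grading of Definition \ref{opog}, and by \eqref{prop} it is the one for which a factor of weight $w$ applied to $\mur$ produces a factor of size $(|\mu\hbar|+|\nu\hbar|+\hbar)^{w/2}$ (indeed $\|a_i\mur\|,\|a_i^*\mur\|\le(|\mu\hbar|+\hbar)^{1/2}$, $\|D_t^{m}\mur\|=|2\pi\nu\hbar|^{m}$, $\|\hbar^p\mur\|=\hbar^p$, and $D_t$ commutes with the $a_i,a_i^*$).

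First, for each $j=0,\dots,N$, Taylor-expand $a_j$ in the variables $(x,\xi,\tau)$ at the origin — leaving $t\in\bbS^1$ untouched — to degree $N$ in $(x,\xi)$ and $\lfloor N/2\rfloor$ in $\tau$, and collect, over all $j$, the monomials $\hbar^j x^\alpha\xi^\beta\tau^m$ of total weight $2j+|\alpha|+|\beta|+2m=:r$. For $r=0,\dots,N$ this defines a finite weighted-homogeneous polynomial $b_r(x,t,\xi,\tau;\hbar)$ with coefficients smooth and $1$-periodic in $t$, and — using Taylor with integral remainder together with the $\hbar$-expansion of $a$ truncated at order $N$, and $|x^\alpha\xi^\beta|\le(x^2+\xi^2)^{(|\alpha|+|\beta|)/2}$ — one gets on the (compact) support $a-\sum_{r=0}^N b_r=O\big((x^2+\xi^2+|\tau|+\hbar)^{(N+1)/2}\big)$. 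Set $A_r:=\text{Op}^W(b_r)$. To see $A_r$ is PO($r$), treat one monomial $\hbar^p c(t)x^\alpha\xi^\beta\tau^m$ with $2p+|\alpha|+|\beta|+2m=r$: substituting $x_i=(z_i+\bz_i)/\sqrt2$, $\xi_i=(z_i-\bz_i)/(i\sqrt2)$ makes $x^\alpha\xi^\beta$ a homogeneous degree-$(|\alpha|+|\beta|)$ polynomial in $(z,\bz)$, so by linearity of $\text{Op}^W$ and \eqref{autres}, $\text{Op}^W(x^\alpha\xi^\beta)$ is a fixed combination of the $\text{Op}^W(z^j\bz^k)$ with $|j|+|k|=|\alpha|+|\beta|$; pushing all $D_t$'s to the right gives $\text{Op}^W(c(t)\tau^m)=\sum_{l=0}^{m}\hbar^l e_l(t)D_t^{m-l}$ with $e_0=c$ and $e_l$ smooth and $1$-periodic; and since the $(x,\xi)$- and $(t,\tau)$-quantisations act on different factors of $L^2(\R^n)\otimes L^2(\bbS^1)$ they commute (so the Weyl symbol of the composition is the product, and each $e_l(t)$ may be moved to the left past $\text{Op}^W(z^j\bz^k)$). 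Every resulting term then has weight $2(p+l)+|j|+|k|+2(m-l)=r$, so $A_r$ is PO($r$).

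It remains to estimate $R_N:=A-\sum_{r=0}^N A_r=\text{Op}^W(g)$, $g:=a-\sum_{r\le N}b_r$, on $\mur$. By Taylor once more (with the $\hbar$-expansion), $g$ is a finite sum of terms $z^j\bz^k\tau^m\hbar^p\rho(x,t,\xi,\tau;\hbar)$ with $|j|+|k|+2m+2p=N+1$ and $\rho$ a bounded symbol supported near $\gamma$. Using $\text{Op}^W(\ell f)=\tfrac12\big(\text{Op}^W(\ell)\text{Op}^W(f)+\text{Op}^W(f)\text{Op}^W(\ell)\big)$, valid for any linear symbol $\ell$ (the Moyal bracket of a linear symbol with anything truncates), peel the factors $z_i,\bz_i,\tau$ off one at a time and write $\text{Op}^W(z^j\bz^k\tau^m\rho)$ as a finite sum of symmetrised products of $|j|+|k|+m$ operators among $a_i,a_i^*,D_t$ and the single bounded operator $\text{Op}^W(\rho)$. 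Commute $\text{Op}^W(\rho)$ to the far left: each commutator $[a_i,\text{Op}^W(\rho)]$, $[a_i^*,\text{Op}^W(\rho)]$, $[D_t,\text{Op}^W(\rho)]$ equals $\hbar$ times a bounded pseudodifferential operator — i.e.\ one trades a factor of weight $\le 2$ for the factor $\hbar$ of weight $2$, which never lowers the weight — so, by induction on the number of $a_i,a_i^*,D_t$, one is left to bound, applied to $\mur$, products of a bounded operator $B$ with at most $N+1$ of the operators $a_i,a_i^*,D_t$ of total weight $\ge N+1$; then $\|B\|$ is bounded (Calder\'on--Vaillancourt, or trivially since $\rho$ is compactly supported), and \eqref{prop} gives $\|R_N\mur\|=O\big((|\mu\hbar|+|\nu\hbar|+\hbar)^{(N+1)/2}\big)$, which is the asserted $O\big((|\mu\hbar|+|\nu\hbar|)^{(N+1)/2}\big)$ in the relevant range $|\mu|+|\nu|\ge1$ (for $\mu=\nu=0$ the lone vector $|0,0\rangle$ contributes only a harmless $O(\hbar^{(N+1)/2})$). \textbf{The main obstacle} is precisely this last step: organising the commutators generated when the bounded remainder symbol $\rho$ is separated from the ``vanishing'' factors, and checking that the bound is uniform in $(\mu,\nu,\hbar)$; the expansion and the identification of $\text{Op}^W(b_r)$ with a PO($r$) operator are, respectively, a standard anisotropic Taylor estimate and a routine computation in the Weyl calculus.
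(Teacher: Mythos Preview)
Your proof is correct and follows essentially the same route as the paper's: Taylor expand the total Weyl symbol in the weighted variables $(x,\xi,\tau,\hbar)$ (equivalently $(z,\bz,\tau,\hbar)$), quantise each homogeneous piece to obtain the PO($r$) operators $A_r$, and control the remainder. The paper's own argument is in fact much terser than yours --- it writes the expansion directly in $(z,\bz,\tau,\hbar)$ and, for the remainder estimate, simply records $\sum_{p\le(N+1)/2}\hbar^p\,O\big((|\mu\hbar|+|\nu\hbar|)^{(N+1)/2-p}\big)=O\big((|\mu\hbar|+|\nu\hbar|)^{(N+1)/2}\big)$ with a reference to \cite{vgtp}. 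Your peeling-off argument (using $\text{Op}^W(\ell f)=\tfrac12\{\text{Op}^W(\ell),\text{Op}^W(f)\}_+$ for linear $\ell$, then commuting the bounded factor to the left) is a legitimate and more self-contained way to justify that step, and your remark on the case $(\mu,\nu)=(0,0)$ is actually more precise than the paper, which tacitly uses $\hbar=O(|\mu\hbar|+|\nu\hbar|)$.
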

Let us define a notion of suitable asymptotic equivalence.
\begin{definition}\label{sim}
Let us introduce for any operator $A$ the notations $\lfloor A\rfloor_r$ et $\lfloor A\rfloor_{\leq N}$ which represent respectively the terms of order $r$ and of order smaller or equal to $N$ in the expansion \eqref{reducti}.\\
If $A$ and $B$ are two operators, we will write  $A\sim B$ if, for any $r\in\bbN$, $\lfloor A\rfloor_r=\lfloor B\rfloor_r$.\\
Also, if $(A_n)_{n\in\bbN}$ is a family of operators, we will write that: 
\be
A\sim \sum_{n=0}^{+\infty}A_n
\ee
if for any $N\in\bbN$, $\lfloor A_n\rfloor_{\leq N}$ is zero for $n$ sufficiently large, and the finite sum: 
\be
 \sum_{n=0}^{+\infty}\lfloor A_n\rfloor_{\leq N}=\lfloor A\rfloor_{\leq N}.
\ee
\end{definition}

\begin{proof}[Proof of proposition \ref{propreduc}]

 Let $a$ be the total Weyl symbol of $A$. Let us define the family $(\alpha_{pjkm})_{(p,m,j,k)\in\bbN^2\times(\bbN^n)^2}$ of functions on $\bbS^1$ by the Taylor expansion of $a$ near $z=\bz=\tau=\hbar=0$, for any $N\in\bbN$:

\begin{equation}\label{def:alpha}
a(z,t,\bz,\tau,\hbar)=\sum_{r=0}^N\sum\limits_{\substack{2p+|j|+|k|\\+2m=r}}\alpha_{pjkm}(t)\hbar^pz^j\bz^k\tau^m +\sum_{p=0}^{\frac{N+1}{2}}O\left(\hbar^p(|z|^2+|\tau|)^{\frac{N+1}{2}-p}\right)
\end{equation}
For any $r\in\bbN$, let us notice $(z,t,\bz,\tau,\hbar)\mapsto\sum_{\scriptstyle 2p+|j|+|k|+2m=r}\alpha_{pjkm}(t)\hbar^pz^j\bz^k\tau^m$ is the total symbol of a pseudodifferential operator $A_r$, which is PO($r$). And by \eqref{pi}, \eqref{autres} and \eqref{def:alpha} (see \cite{vgtp}): 
\be\begin{split}
\forall N\in\bbN, \ \left\|\left(A-\sum_{r=0}^NA_r\right)\mur\right\|&=\sum_{p=0}^{\frac{N+1}{2}}\hbar^pO\left(\left(|\mu\hbar|+|\nu\hbar|\right)^{\frac{N+1}2-p}\right)\\
&=O\left(\left(|\mu\hbar|+|\nu\hbar|\right)^{\frac{N+1}2}\right).
\end{split}
\ee
This concludes the proof. 
\end{proof}

The following lemma will be crucial for our computations.
\begin{lemma}\label{general}
Let  $F$ and $G$ be PO($r$) and PO($r'$) respectively then $\frac{[F,G]}{i\hbar}$ is PO($r+r'-2$).
\end{lemma}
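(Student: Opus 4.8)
The plan is to reduce everything to the level of total Weyl symbols and exploit the Moyal product structure, then bookkeep orders. Recall that if $F$ is PO($r$), its total Weyl symbol has the form $f=\sum_{2p+|j|+|k|+2m=r}\alpha_{pjkm}(t)\hbar^p z^j\bz^k\tau^m$, i.e.\ $f$ is built from the elementary symbols $z_i,\bz_i$ (each carrying ``weight'' $1$ in the sense of the grading $\deg z_i=\deg\bz_i=1$, $\deg\tau=2$, $\deg\hbar=2$) so that $F$ is a finite sum of monomials of total weight exactly $r$; likewise $g$ for $G$ has total weight $r'$. The Moyal bracket $\frac{1}{i\hbar}(f\star g-g\star f)$, where $\star$ is the Weyl product on $T^*(\bbR^n\times\bbS^1)$ with the canonical symplectic form in coordinates $(x,t,\xi,\tau)$, admits the standard expansion as an asymptotic series in $\hbar$ whose leading term is the Poisson bracket $\{f,g\}$ and whose higher terms involve $\hbar^{2s}$ times a sum of $(2s+1)$-fold mixed derivatives of $f$ and $g$ in the conjugate pairs $(x,\xi)$ and $(t,\tau)$.

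First I would set up the weight grading precisely: assign $\deg x_i=\deg\xi_i=\deg z_i=\deg\bz_i=1$, $\deg t=0$, $\deg\tau=2$, $\deg\hbar=2$, and observe that each $\alpha_{pjkm}(t)\hbar^p z^j\bz^k\tau^m$ appearing in $f$ has weight $r$ (the $t$-dependence of the coefficients is weightless). Then I would track what a single term $C_s\,\hbar^{2s}\sum \partial^{(\cdot)}f\,\partial^{(\cdot)}g$ of the Moyal bracket does to weights. Each $\partial_{x_i}$ or $\partial_{\xi_i}$ (equivalently $\partial_{z_i}$ or $\partial_{\bz_i}$) lowers weight by $1$; each $\partial_t$ lowers by $0$; each $\partial_\tau$ lowers by $2$. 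In the term $\hbar^{2s}\cdot(\text{a }(2s+1)\text{-fold derivative pairing})$, the derivatives are distributed so that the two factors together receive $2s+1$ derivatives, and the conjugate structure forces: if $k$ of these are $(x,\xi)$-type they come in a way contributing total weight loss $2s+1$ from the $x,\xi,\tau$ derivatives counted with the rule above (a $t$-derivative on one side is always paired with a $\tau$-derivative on the other, jointly costing weight $2$, matching one ``unit'' worth of $\hbar^2$; and each $(x,\xi)$ transposition costs $1$ on each side, i.e.\ $2$ jointly). Combined with the explicit $\hbar^{2s}$ (weight $2\cdot 2s$... ), the net weight of the $s$-th term is $r+r'-2$, independent of $s$ — the gain of $\hbar^2$ is exactly compensated by one fewer pair of spatial derivatives relative to... — this is precisely the content to verify. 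So $\frac{[F,G]}{i\hbar}$ is, term by term, a finite sum of PO($r+r'-2$) operators, hence PO($r+r'-2$); note $r+r'-2\geq 0$ since the bracket vanishes unless $r,r'\geq 1$ and in fact the Poisson bracket of two weight-$1$ monomials is a scalar, i.e.\ weight $0$, consistent with the formula.

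The cleanest way to organize the verification is to check it on generators and use that $\frac{[\cdot,\cdot]}{i\hbar}$ is a derivation in each slot: it suffices to confirm $\deg\frac{[A,B]}{i\hbar}=\deg A+\deg B-2$ when $A,B$ range over $\mathrm{Op}^W(z_i)=a_i$, $\mathrm{Op}^W(\bz_i)=a_i^*$, $D_t$, multiplication by $\alpha(t)$, and powers of $\hbar$, using the commutation relations \eqref{prop} ($[a_i,a_j^*]=\delta_{ij}\hbar$, all others among $a$'s vanishing, $[D_t,\alpha(t)]=-i\hbar\alpha'(t)$, $[D_t,a_i]=0$, $\hbar$ central). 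In each case the computation is immediate: $\frac{[a_i,a_i^*]}{i\hbar}=\frac{1}{i}$ has weight $1+1-2=0$; $\frac{[D_t,\alpha(t)]}{i\hbar}=-\alpha'(t)$ has weight $2+0-2=0$; etc. Then a Leibniz-rule induction on the number of factors in the monomials extends this to all PO($r$), PO($r'$), with the caveat that when the derivation hits a coefficient one must re-expand products like $\alpha'(t)\beta(t)$ as new $C^\infty(\bbS^1)$ coefficients, which changes nothing about the weight. The main obstacle is purely combinatorial bookkeeping — making the weight-counting in the Moyal expansion (or equivalently the iterated Leibniz rule) airtight, in particular handling the mixed $t$–$\tau$ derivatives correctly so that the $\tau$-grading $\deg\tau=2$ meshes with $\deg\hbar=2$; the Leibniz-on-generators route sidesteps the multi-index gymnastics and is the one I would actually write out.
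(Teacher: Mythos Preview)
Your proposal is correct, and the route you say you would ``actually write out'' --- checking the weight drop on generators $a_i,a_i^*,D_t,\alpha(t)$ and extending by the Leibniz rule --- is essentially the paper's proof. The paper formalizes this via two sub-lemmas: one (Lemma~\ref{monotoPO}) showing that an arbitrary ordered monomial $\alpha(t)\hbar^p b_1\cdots b_l D_t^m$ with $b_j\in\{a_i,a_i^*\}$ can be rewritten in Weyl-ordered form as a PO($r$) operator (this is the reordering step implicit in your weight bookkeeping, since $[a_i,a_i^*]=\hbar$ trades two weight-$1$ factors for one weight-$2$ factor), and another (Lemma~\ref{crochetmono}) computing the commutator of two such monomials explicitly and observing it is a sum of monomials of the correct order.

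Your Moyal-symbol approach is a genuine alternative and arguably cleaner: staying at the symbol level throughout, the bidifferential operator in each term of the Moyal bracket lowers the joint weight by $2(2s+1)$ while the prefactor $\hbar^{2s}$ restores $4s$, giving net weight $r+r'-2$ independent of $s$; since $f,g$ are polynomial in $(z,\bz,\tau)$ the expansion terminates and one reads off PO($r+r'-2$) directly from the definition, with no need for a separate reordering lemma. The paper's hands-on route has the compensating advantage that the explicit commutator formulas it derives (e.g.\ \eqref{decompcrochet2}) are reused later in the proof of Proposition~\ref{vgtp}.
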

\begin{proof}
The proof  of Lemma \ref{general} will be a direct consequence of the two following lemmas, whose proof will be given at the end of this proof.
\begin{lemma}\label{monotoPO}
Any monomial operator of order $r$, that is of the form $\alpha(t)\hbar^pb_1\dots b_lD_t^m$, where: 
\begin{itemize} \item for $j\in\llbracket1,l\rrbracket$, $b_j\in\{a_1,a_1^{*},\dots,a_n,a_n^*\}$ 
\item $2p+ l+2m=r$ 
\end{itemize} is PO($r$). 
\end{lemma}
\begin{lemma}\label{crochetmono}
If $F$ and $G$ are monomials of order $r$ and $r'$ respectively, then $\frac{[F,G]}{i\hbar}$ is PO($r+r'-2$)
\end{lemma}
Indeed, any PO($r$) is a finite sum of monomials of the same order, hence if $F$ and $G$ are PO($r$) and PO($r'$) respectively, then $\frac{[F,G]}{i\hbar}$ is a finite sum of quantities of type $\frac{[\widetilde{F},\widetilde{G}]}{i\hbar}$ where $\widetilde{F}$ and $\widetilde{G}$ are monomials of order $r$ and $r'$ respectively. Any of those quantities are PO($r+r'-2$) by Lemmas \ref{monotoPO} and \ref{crochetmono}, and a finite sum of PO($r+r'-2$) is PO($r+r'-2$). Lemma \ref{general} is proved. 
\end{proof}
Let us  prove now Lemmas \ref{monotoPO} and \ref{crochetmono}: 
\begin{proof}[Proof of Lemma \ref{monotoPO}]
Since for any $i,j\in\llbracket1,n\rrbracket$, $i\neq j$, $a_i$ and $a_i^*$ commute with both $a_j$ and $a_j^*$, it is sufficient to prove that any ordered product $b_1\dots b_l$, where $l\geq 1$ and for any $j\in\llbracket 1,l\rrbracket$, $b_j\in\{a_1,a_1^{*}\}$, is PO($r$). For any such ordered product, let us introduce the integer $k(b_1\dots b_l)=\sharp\{m\in\llbracket 1, l\rrbracket, b_m=a^*_1\}$. 

We will proceed by induction on $l$. Let us define for any positive integer $l$ the following assertion 

(A$_l$): 
``Any ordered product $b_1\dots b_l$, where for any $j\in\llbracket 1, l\rrbracket$,  $b_j\in\{a_1,a_1^{*}\}$, is the sum of the operator $\text{Op}^W(z_1^{l-k}\bz_1^k)$ (where $k=k(b_1\dots b_l)$ and of a linear combination of the operators $\hbar^p\text{Op}^W(z_1^j\bz_1^m)$ with $p\geq 1$, $2p+j+m=l$ and $j-m=l-2k$. ".

 If $l=1$, there is nothing to prove since $a_1=\text{Op}^W(z_1)$ and $a_1^*=\text{Op}^W(\bz_1)$.

If $l=2$, \begin{equation*}\begin{cases}a_1^2=\text{Op}^W(z_1^2)\\a_1^{*2}=\text{Op}^W(\bz_1^2)\\a_1a_1^*=P_1+\frac \hbar 2=\text{Op}^W(z_1\bz_1)+\frac\hbar 2\\ a_1^*a_1=\text{Op}^W(z_1\bz_1)-\frac{\hbar}2\end{cases}\end{equation*}
and therefore, the assertion is proved for $l=2$.

 Now, let $l$ be a positive integer, and let us assume (A$_k$) up to order $k=l$. Let $B=b_1\dots b_{l+1}$ be an ordered product, where for any $j\in\llbracket1, l+1\rrbracket$, $b_j\in\{a_1,a_1^{*}\}$.\\
If for any $j\in\llbracket1,l\rrbracket$, $b_j=b_{j+1}$, then $B=\text{Op}^W(z_1^{l+1})$ or $B=\text{Op}^W(\bz_1^{l+1})$. \\
Otherwise, one can assume that $b_1=a_1$, and that $j_0=\max\{j\in\llbracket 1,l+1\rrbracket,b_j=a_1\}$ satisfies: $1\leq j_0\leq l$. Then, we have: $[a_1^{j_0},a_1^*]=j_0\hbar a_1^{j_0-1}$, so that:
\be
b_1\dots b_{l+1}=a_1^{j_0}a_1^*b_{j_0+2}\dots b_{l+1}=a^*_1a_1^{j_0}b_{j_0+2}\dots b_{l+1}+\hbar j_0 a_1^{j_0-1}b_{j_0+2}\dots b_{l+1}
\ee
Therefore, if one sets $k:=k(b_1\dots b_{l+1})$, since $\binom{l+1}{k}=\binom{l}{k}+\binom{l}{k-1}$:
\be\label{eq:decoup}\begin{split}
\binom{l+1}{k}b_1\dots b_{l+1}=&\binom{l}{k}a_1^{j_0}a_1^*b_{j_0+2}\dots b_{l+1}+\binom{l}{k-1}a^*_1a_1^{j_0}b_{j_0+2}\dots b_{l+1}\\&+\hbar\binom{l}{k-1}j_0 a_1^{j_0-1}b_{j_0+2}\dots b_{l+1}
\end{split}
\ee

(A$_{l-1}$) gives us that $a_1^{j_0-1}b_{j_0+2}\dots b_{l+1}$ is a linear combination of the operators $\hbar^{p}\text{Op}^W(z_1^j\bz_1^m)$ with $2p+j+m=l-1$ and $j-m=l+1-2k$.

Let us now observe that 
\be
\binom{l+1}{k}Op^{W}(z^{l+1-k}\bz^k)=\binom{l}{k}a_1Op^{W}(z^{l-k}\bz^k)+\binom{l}{k-1}a^*_1Op^{W}(z^{l+1-k}\bz^{k-1})
\ee
so that  (A$_l$), for ordered products $a_{1}^{j_0-1}a_1^*b_{j_0+2}\dots b_{l+1}$ and $a_1^{j_0}b_{j_0+2}\dots b_{l+1}$, gives us, by equation that $\binom{l+1}{k}b_1\dots b_{l+1}$ is the sum of $\binom{l}{k}a_1Op^{W}(z^{l-k}\bz^k)+\binom{l}{k-1}a^*_1Op^{W}(z^{l+1-k}\bz^{k-1})=\binom{l+1}{k}Op^{W}(z^{l+1-k}\bz^k)$ and a linear combination of the operators $\hbar^p\text{Op}^W(z_1^j\bz_1^m)$ with $p\geq 1$, $2p+j+m=l+1$ and $j-m=l+1-2k$
\end{proof}

\begin{proof}[Proof of Lemma \ref{crochetmono}]
It is  sufficient  to remark that if $F$ and $G$ are of the form: 
\be\nonumber
F=\alpha(t)b_1\dots b_lD_t^m \text{ and } G=\beta(t)b'_1\dots b'_{l'}D_t^{m'}
\ee
where:
\begin{itemize}
\item $\alpha$ and $\beta$ are smooth 
\item $l+2m=r$, $l'+2m'=r'$
\item For $j\in\llbracket 1,l\rrbracket$, for $j'\in 1,l'\rrbracket$, $b_j,b'_{j'}\in\{a_1,a_1^{*}\}$
\end{itemize}
 then $\frac{[F,G]}{i\hbar}$ is a finite sum of monomials of order $r+r'-2$ since, by Lemma \ref{monotoPO}, each of them is PO($r+r'-2$).
With those assumptions on $F$ and $G$, we get: 
\be\begin{split}\label{decompcrochet1}
\frac{ [F,G]} {i\hbar}=&\frac{[\alpha(t)b_1\dots b_lD_t^m,\beta(t)b'_1\dots b'_{l'}D_t^{m'}]}{i\hbar}\\=&\alpha(t)\beta(t)\frac{[b_1\dots b_l,b'_1\dots b'_{l'}]}{i\hbar}D_t^{m+m'}+\alpha(t)b_1\dots b_l\frac{[D_t^{m},\beta(t)]}{i\hbar}b'_1\dots b'_{l'}D_t^{m'}\\-&\beta(t)b'_1\dots b'_{l'}\frac{[D_t^{m'},\alpha(t)]}{i\hbar}b_1\dots b_lD_t^m
\end{split}\ee
Therefore it is sufficient to prove that$\frac{[b_1\dots b_l,b'_1\dots b'_{l'}]}{i\hbar}$, $\frac{[D_t^{m},\beta(t)]}{i\hbar}$ and $\frac{[D_t^{m'},\alpha(t)]}{i\hbar}$ are respectively: PO($l+l'-2$), PO($2m-2$) and PO($2m'-2$) (with the convention that a PO($j$) with $j<0$ is $0$).\\
For the two last, it is quite obvious, since: 
\be
\frac{[D_t^{m},\beta(t)]}{i\hbar}=\sum_{k=0}^{m-1}\binom{m}{k}(i\hbar)^{m-k-1}\beta^{(m-k)}(t)D_t^k
\ee
Now, for $j\in\llbracket1,l'\rrbracket$, let us set $\epsilon_j=1$ if $b'_j=a^*_1$,  $\epsilon_j=-1$ otherwise. Since $[a_1,a_1^*]=\hbar$, we get: 
\be\begin{split}\nonumber
b_1\dots b_lb'_1\dots b'_{l'}=b'_1b_1\dots b_l b'_2\dots b'_{l'}&+\frac{\epsilon_1+1}{2}\hbar\sum_{\substack{k=1\\b_k=a_1}}^lb_1\dots b_{k-1}b_{k+1}\dots b_lb'_2\dots b'_{l'}\\
&+\frac{\epsilon_1-1}{2}\hbar\sum_{\substack{j=1\\b_k=a^*_1}}^lb_1\dots b_{k-1}b_{k+1}\dots b_lb'_2\dots b'_{l'}
\end{split}\ee
Hence by induction on $j\in\llbracket1,l'\rrbracket$: 
\be\begin{split}\label{decompcrochet2}
\frac{[b_1\dots b_l,b'_1\dots b'_{l'}]}{i\hbar}=&-i\sum_{j=1}^{l'}\frac{\epsilon_j+1}{2}\sum_{\substack{k=1\\b_k=a_1}}^lb'_1\dots b'_{j-1}b_1\dots b_{k-1}b_{k+1}\dots b_lb'_{j+1}\dots b'_{l'}\\
&-i\sum_{j=1}^{l'}\frac{\epsilon_j-1}{2}\sum_{\substack{k=1\\b_k=a^{*}_1}}^lb'_1\dots b'_{j-1}b_1\dots b_{k-1}b_{k+1}\dots b_lb'_{j+1}\dots b'_{l'}
\end{split}
\ee
The right-hand side of \eqref{decompcrochet2} is a finite sum of monomials of order $l+l'-2$, hence it is PO($l+l'-2$) by Lemma \ref{monotoPO}, and Lemma \ref{crochetmono} is proved.
\end{proof}

\begin{proposition}\label{vgtp}
Let $G$ be PO(r).
There exists $F$,  PO($r$), and $G_1=G_1(P_1,\dots,P_n,D_t,\hbar)$ such that: 
\be
\frac{[H_0(P, D_t),F]}{i\hbar}=G+G_1
\ee
Moreover,  $F$ is  symmetric if $G$ is symmetric, $G_1=0$  if $r$ is odd, and $G_1$ is an homogeneous polynomial function of total order $\frac{r}{2}$  if $r$ is even.
\end{proposition}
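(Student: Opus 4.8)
The plan is to solve the "cohomological equation" $\tfrac{[H_0(P,D_t),F]}{i\hbar}=G+G_1$ explicitly on the basis of monomials. Write $H_0 = \sum_i \theta_i P_i + D_t$, and note that by definition of PO($r$) it suffices to solve the equation when $G$ is a single monomial $G=\alpha(t)\hbar^p \mathrm{Op}^W(z^j\bz^k) D_t^m$ with $2p+|j|+|k|+2m=r$, since both the equation and the conclusion are linear in $G$ and a finite sum of PO($r$)'s (resp. homogeneous polynomials of order $r/2$) is again of that type. First I would compute the action of $\mathrm{ad}_{H_0}$ on such a monomial. Using the commutation relations \eqref{prop}, one has $[P_i, a_i] = -\hbar a_i$ and $[P_i, a_i^*]=\hbar a_i^*$, hence $\tfrac{1}{i\hbar}[\sum_i\theta_i P_i,\ \mathrm{Op}^W(z^j\bz^k)] = -i\big(\sum_i\theta_i(j_i-k_i)\big)\mathrm{Op}^W(z^j\bz^k)$; and since $D_t = -i\hbar\partial_t$ commutes with the $a_i,a_i^*$ and with $D_t^m$, $\tfrac{1}{i\hbar}[D_t,\alpha(t)] = -\alpha'(t)$ (up to the explicit sign/constant coming from our normalization $D_t=-i\hbar\partial_t$). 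Therefore
\be
\frac{[H_0(P,D_t),\ \alpha(t)\hbar^p\mathrm{Op}^W(z^j\bz^k)D_t^m]}{i\hbar}
= \Big(-\alpha'(t) - i\,c_{jk}\,\alpha(t)\Big)\hbar^p\mathrm{Op}^W(z^j\bz^k)D_t^m,
\ee
where $c_{jk} := \sum_{i=1}^n\theta_i(j_i-k_i)$. So $\mathrm{ad}_{H_0}/(i\hbar)$ acts on the coefficient function $\alpha$ as the first-order operator $L_{jk} := -\partial_t - i c_{jk}$ on $C^\infty(\bbS^1)$, monomial-block by monomial-block, preserving the order $r$.

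Next I would invert $L_{jk}$ on $C^\infty(\bbS^1)$. Expanding $\alpha$ in Fourier series, $L_{jk}$ acts on the $\ell$-th Fourier mode by multiplication by $-(2\pi i\ell + i c_{jk}) = -i(2\pi\ell + c_{jk})$. If $(j,k)$ is such that $c_{jk}\notin 2\pi\bbZ$ — in particular whenever $j\ne k$, by the non-degeneracy hypothesis of Definition \ref{nondeg} that $\theta_1,\dots,\theta_n,\pi$ are independent over $\bbQ$ — then $L_{jk}$ is invertible on all of $C^\infty(\bbS^1)$, and we set the corresponding block of $F$ to be $\alpha(t)/L_{jk}$ times $\hbar^p\mathrm{Op}^W(z^j\bz^k)D_t^m$, contributing nothing to $G_1$. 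If $j=k$ (so $c_{jk}=0$), then $L_{jk}=-\partial_t$ has one-dimensional kernel and cokernel spanned by the constants: we solve $-\partial_t\beta = \alpha - \bar\alpha$, where $\bar\alpha = \int_{\bbS^1}\alpha$, put that block of $F$ equal to $\beta(t)\hbar^p\mathrm{Op}^W(z^j\bz^j)D_t^m$, and absorb the obstruction $\bar\alpha\,\hbar^p\mathrm{Op}^W((z\bz)^j)D_t^m = \bar\alpha\,\hbar^p P^j D_t^m$ into $G_1$. Summing these blocks, $F$ is PO($r$) (the orders are unchanged throughout), $G_1 = G_1(P_1,\dots,P_n,D_t,\hbar)$ is a polynomial in $P,D_t$ (with possible $\hbar$ factors), and $G_1=0$ unless some block has $j=k$.

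It remains to track the three structural claims. For parity: a monomial $\hbar^p\mathrm{Op}^W(z^j\bz^k)D_t^m$ contributes to $G_1$ only if $j=k$, in which case $r = 2p+2|j|+2m$ is even; so if $r$ is odd, $G_1=0$. For homogeneity when $r$ is even: each surviving term of $G_1$ is $\bar\alpha\,\hbar^p P^j D_t^m$ with $2p+2|j|+2m = r$, i.e. $p + |j| + m = r/2$; assigning weight $1$ to each of $\hbar$, $P_i$ and $D_t$ (the convention already used for "homogeneous polynomial of order $r/2$" in Proposition \ref{W}, where $\widetilde W_{\le N}$ is built from exactly such $W_q$), every monomial in $G_1$ has total weight $r/2$, so $G_1$ is homogeneous of order $r/2$. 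For symmetry: the operators $P_i$, $D_t$ are self-adjoint and mutually commuting, so $G_1$ is automatically symmetric; and one checks $F$ is symmetric when $G$ is — on the $j=k$ blocks, $\mathrm{Op}^W(z^j\bz^j)=P^j$ is self-adjoint and solving $-\partial_t\beta = \alpha-\bar\alpha$ with $\alpha$ real produces real $\beta$, while on the $j\ne k$ blocks one pairs $(j,k)$ with $(k,j)$: symmetry of $G$ forces the coefficient of the $(k,j)$-block to be the complex conjugate (in the appropriate sense) of that of the $(j,k)$-block, and since $L_{kj} = \overline{L_{jk}}$ on Fourier modes, the same conjugacy relation is inherited by $F$, making $F$ symmetric.

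The one genuinely delicate point — the main obstacle — is the small-divisor/invertibility issue: one must verify that $c_{jk} = \sum_i\theta_i(j_i-k_i)$ never lands in $2\pi\bbZ$ for $j\ne k$, which is exactly where the Diophantine-type hypothesis "$\theta_1,\dots,\theta_n,\pi$ linearly independent over $\bbQ$" of Definition \ref{nondeg} is used, and to confirm that $L_{jk}^{-1}$ is bounded on $C^\infty(\bbS^1)$ (i.e. the inverse is again smooth), which holds because $1/|2\pi\ell+c_{jk}|$ grows at most linearly in $|\ell|$ so division by it preserves smoothness — here no Diophantine lower bound on $|2\pi\ell+c_{jk}|$ is even needed since there is no loss of derivatives, only rapid decay of Fourier coefficients must be preserved. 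The rest is the bookkeeping of orders, which is immediate from Lemma \ref{general} and Definition \ref{opog}.
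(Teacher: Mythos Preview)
Your approach is essentially the same as the paper's: reduce to monomials, compute $\mathrm{ad}_{H_0}/(i\hbar)$ block by block, invert the first-order operator on $C^\infty(\bbS^1)$ via Fourier series using the non-degeneracy hypothesis, and absorb the $j=k$ obstruction into $G_1$. The paper works with ordered products $b_1\cdots b_l$ rather than with $\mathrm{Op}^W(z^j\bz^k)$, but by Lemma~\ref{monotoPO} these span the same spaces, so the two presentations are equivalent.

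There is one inaccuracy worth fixing. You write the obstruction term as $\bar\alpha\,\hbar^p\mathrm{Op}^W((z\bz)^j)D_t^m = \bar\alpha\,\hbar^p P^j D_t^m$, but Weyl quantization is not multiplicative: $\mathrm{Op}^W\!\big(\prod_i p_i^{j_i}\big)$ is \emph{not} equal to $\prod_i P_i^{j_i}$ when $|j|\ge 2$ (already $\mathrm{Op}^W(p_i^2)=P_i^2+\hbar^2/4$). What is true, and what the paper invokes via Lemma~\ref{monotoPO}, is that $\mathrm{Op}^W(z^j\bz^j)$ is a polynomial in $P_1,\dots,P_n,\hbar$ homogeneous of total order $|j|$ (weight $1$ for each $P_i$ and each $\hbar$). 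With that correction your conclusion that $G_1=G_1(P,D_t,\hbar)$ is a homogeneous polynomial of order $r/2$ stands, and the rest of your bookkeeping (parity, symmetry, smoothness of $L_{jk}^{-1}$) is fine.
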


\begin{remark}\label{nonnul}
If $F=\sum_{2p+|j|+|k|+2m=r} \alpha_{pjkm}(t)\hbar^p\text{Op}^W(z^j\bz^k) D_t^m$, one can choose: 
\be
\int_{\mathbb{S}^1}\alpha_{pjjm}(t)dt=0      
\ee
Indeed, any $\text{Op}^W(z^j\bz^j)D_t^m$ commutes with $H_0(P,D_t,\hbar)$. It is the choice we will make through this article. 
\end{remark}

\begin{proof}[Proof of Proposition \ref{vgtp}]
Let us first assume that $G$ is a monomial of order $r$: $G=\beta(t)b_1\dots b_lD^m_t$ where: 
\begin{itemize}
\item $\alpha$ is smooth 
\item $l+2m=r$
\item For $j\in\llbracket1,l\rrbracket$,  $b_j\in\{a_1,a_1^{*},\dots,a_n,a_n^{*}\}$
\end{itemize}
and let us look for $F$ of the form: $F=\alpha(t)b_1\dots b_lD^m_t$.
We have:
\be\begin{split}
\frac{ [H_0,F]} {i\hbar}=&\frac{[H_0,\alpha(t)b_1\dots b_{l}D_t^{m}]}{i\hbar}\\
=&\alpha(t)\sum_{s=1}^n\theta_s\frac{[P_s,b_1\dots b_{l}]}{i\hbar}D_t^{m}+\frac{[D_t,\alpha(t)]}{i\hbar}b_1\dots b_{l}D_t^{m}\\
=&\alpha(t)\sum_{s=1}^n\theta_s\frac{[P_s,b_1\dots b_{l}]}{i\hbar}D_t^{m}+\alpha'(t)b_1\dots b_l D_t^m 
\end{split}\ee
If we set, for $s\in\llbracket 1,n\rrbracket$, $k_s=\sharp\{m\in\llbracket 1, l\rrbracket, b_m=a^{*}_{s}\}$ and $j_s=\sharp\{m\in\llbracket 1, l\rrbracket, b_m=a_s\}$, we deduce from \eqref{decompcrochet2} that: 
\be
\frac{[P_s,b_1\dots b_l]}{i\hbar}=i(j_s-k_s)b_1\dots b_l
\ee
Hence: 
\be
\frac{ [H_0,F]} {i\hbar}=i\sum_{s=1}^n\theta_s(j_s-k_s)\alpha(t)b_1\dots b_{l}D_t^{m}+\alpha'(t)b_1\dots b_l D_t^m 
\ee
%The problem: 
$\frac{ [H_0,F]} {i\hbar}=G$ admits a solution if there exists $\alpha$ such that:
\be
i\sum_{s=1}^n\theta_s(j_s-k_s)\alpha(t)+\alpha'(t)=\beta(t)
\ee 
If $(c_p(\alpha))_{p\in\bbZ}$ and $(c_p(\beta))_{p\in\bbZ}$ are the Fourier coefficients of $\alpha$ and $\beta$, it is sufficient that, for $p\in\bbZ$, $c_p(\alpha)$ is solution of: 
\be\label{condition1alpha}
i\left(\sum_{s=1}^n\theta_s(j_s-k_s)+2\pi p\right)c_p(\alpha)=c_p(\beta)
\ee
and 
\be\label{condition2alpha}
c_p(\alpha)\underset{p\rightarrow+\infty}{=}O\left(\frac{1}{|p|^{\infty}}\right)
\ee
If the $n$-tuples $j$ and $k$ are different, the non-degeneracy condition on the $\theta_i$s together with the fact that $c_p(\beta)\underset{p\rightarrow+\infty}{=}O\left(\frac{1}{|p|^{\infty}}\right)$ (because $\beta$ is smooth), gives the existence of $c_p(\alpha)$ satisfying \eqref{condition1alpha} and \eqref{condition2alpha}.\\
If $r$ is odd, $j$ and $k$ can't be equal, hence Proposition \ref{vgtp} is proved in this case ($r$ odd and $G$ monomial)\\
If $r$ is even, and $j=k$, there exists a family $(c_p(\alpha))_{p\in\bbZ^*}$ satisfying \eqref{condition1alpha} and \eqref{condition2alpha}. Hence, if $\alpha$ is the smooth function with Fourier coefficients $c_p(\alpha)$ for $p\neq0$ and $c_0(\alpha)=0$, we get: 
\be
\frac{ [H_0,F]} {i\hbar}=G+c_0(\beta)b_1\dots b_l D_t^m
\ee
And from the proof of Lemma \ref{monotoPO}, we know that $c_0(\beta)b_1\dots b_l D_t^m$ can be reordered as the sum: $G_1(P,D_t,\hbar):=c_0(\beta)\sum_{2p+2|k|=l}a_{p,k}\hbar^p P^kD_t^m$. Therefore, Proposition \ref{vgtp} is proved in the case where $r$ is even and $G$ is monomial.\\
The general case is easily deduced from the case where $G$ is monomial, since $G$ is a finite sum of monomials of the same order. \\
Also, the form of $F$ allows us to conclude immediately that $F$ is symmetric if $G$ is so.
\end{proof}
Now we have everything we need for the proof by induction of Proposition \ref{W}. 

\begin{proof}[Proof of Proposition \ref{W}] 

Microlocally near $x=\xi=\tau=0$,  $H(x,\hbar D_x)$ satisfies, in the sense of Definition \ref{sim},
 \be
H:=\opH\sim H_0(P_1,\dots,P_n,\hbar D_t)+\sum_{q\geq 3} H_q
,\ \ \ 
H_{q}:=\lfloor H(x,\hbar D_x)\rfloor_{q}
\ee

Let us set $W_{\leq 2}=0$, and construct by induction $(W_q)_{q\geq 3}$ and $(H^q)_{q\geq 3}$, such that: 
\begin{itemize}
\item for $q\geq 3$, $W_q$ is PO($q$) and $H^{q}$ is zero if $q$ is odd, an homogeneous polynomial function of total order $\frac{q}{2}$ if $q$ is even. 

 \item and for any $q\geq 3$: 
\be \nonumber\frac{i}{\hbar}[W_{q},H_0]+H_{q}+\left\lfloor\frac{i}{\hbar}[W_{\leq q-1},H-H_0]+\sum_{l\geq2}\frac{i^l}{\hbar^l l!} [\overbrace{W_{\leq q-1},\dots,W_{\leq q-1}}^{l \ \text{times}},H]\right\rfloor_{q}=H^{q}(P,D_t,\hbar)
 \ee
\end{itemize}
The existence of such a family is guaranteed by Proposition \ref{vgtp}. 
 
 Let us set, for any $N\geq 3$, $\W{N}:=\sum_{q=3}^NW_q+(|D_t|^2+\sum_{i=1}^nP_i)^{\frac{N+1}{2}}$.
 As for any $q\geq 2$ $H^{2q}$ is an homogeneous polynomial function of total order $q$, we can choose, by Borel's lemma, a smooth function $h$ such that, for any $N\geq 2$ and in a neighborhood of $p=\tau=0$: 
\be
\left\vert h(p,\tau,\hbar)-H_0(p,\tau)-\sum_{q=2}^{N} H^{2q}(p,\tau,\hbar)\right\vert=O\left((\vert p\vert+\vert\tau\vert+\vert\hbar\vert)^{N+1}\right)
\ee
We have, for any $N\geq 3$:
 \be\nonumber\begin{split}
  \U{N} H\Uinv{N}&\sim H+\frac{i}{\hbar}[\W{N},H]+\sum_{l\geq2}\frac{i^l}{\hbar^l l!} [\overbrace{\W{N},\dots,\W{N}}^{l \ \text{times}},H]\\
 &\sim H+\frac{i}{\hbar}[W_{\leq N},H_0]+\frac{i}{\hbar}[W_{\leq N},H-H_0]+\sum_{l\geq2}\frac{i^l}{\hbar^l l!} [\overbrace{\W{N},\dots,\W{N}}^{l \ \text{times}},H]\\
 &+\frac{i}{\hbar}[\W{N}-W_{\leq N},H]
  \end{split}\ee
Since the for any $q\leq N$, $W_{q}$ is PO($q$) and $H_0$ is PO($2$), Lemma \ref{general} gives us that: 
\be
\left\lfloor\frac{i}{\hbar}[W_{\leq N},H_0]\right\rfloor_q=\frac{i}{\hbar}[W_{q},H_0]
\ee
Since the expansion of $H-H_0$ in PO($r$) contains no term of order less or equal to $2$, Lemma \ref{general} also gives  for $q\leq N$:  
 \be
\left\lfloor\frac{i}{\hbar}[W_{\leq N},H-H_0]\right\rfloor_q=\lfloor\frac{i}{\hbar}[W_{\leq q-1},H-H_0]\rfloor_q
\ee
Lemma \ref{general} finally gives us, that since the expansion of $\opH$ in PO($r$) contains no term of order less or equal to $1$ and the one of $\W{N}$ no term of order less or equal to $2$ for $q\leq N$:
\be
\left\lfloor\sum_{l\geq2}\frac{i^l}{\hbar^l l!} [\overbrace{\W{N},\dots,\W{N}}^{l \ \text{times}},H]\right\rfloor_q=\left\lfloor\sum_{l\geq2}\frac{i^l}{\hbar^l l!} [\overbrace{W_{\leq q-1},\dots,W_{\leq q-1}}^{l \ \text{times}},H]\right\rfloor_q
\ee
and since the one of $\W{N}-W_{\leq N}$ contains no term of order less or equal to $N+1$:
\be
\lfloor\frac{i}{\hbar}[\W{N}-W_{\leq N},H]\rfloor_q=0
\ee
Therefore for any $q\leq N$: 
\be
\left\lfloor\U{N} H\Uinv{N}\right\rfloor_q=H^{q}(P,D_t,\hbar)=\left\lfloor h(P,D_t,\hbar)\right\rfloor_q
\ee
Finally Proposition \ref{propreduc} gives us: 
\be
\normleft \left(\U{N}H\Uinv{N}-h(P,D_t,\hbar)\right)\mur\normright=O\left( |\mu\hbar|+|\nu\hbar|)^{\frac{N+1}{2}}\right)
\ee
which concludes the proof. 
\end{proof}
 %%%%%%%%%%%%%%%%%%%%%%%%%%%%%%%%%%%%%%%%%%%%%%%%%%%%%%%%%%%%%%%%%%%%%%%%%%%%%%%%%%%%%%%%%
 %%%%%%%%%%%%%%%%%%%%%%%%%%%%%%%%%%%%%%%%%%%%%%%%%%%%%%%%%%%%%%%%%%%%%%%%%%%%%%%%%
 \subsection{Recovering the matrix elements from the Trace formula}\label{subsec:etape1}
  The next result is the first inverse result needed for the proof of Theorem \ref{flatmain}. 
 \begin{proposition}\label{etape1}
Let $O$ be a pseudodifferential operator whose principal symbol vanishes on $\gamma$. 

\begin{enumerate}

\item\label{firstpoint} There exists a smooth function $f$ vanishing at $(0,0,0)$ such that for any $N\geq3$: 

  \be
\mul \U{N} O\Uinv{N}\mur=f\left((\mu+\frac{1}{2})\hbar,2\pi\nu\hbar,\hbar\right)+O\left((\vert\mu\hbar\vert+\vert\nu\hbar\vert)^{\frac{N}{2}}\right)
  \ee
  
 \noindent  Moreover let, for any integer $l$, $\phi_l$ be a Schwartz function whose Fourier transform is compactly supported in $(l-1,l+1)$ and let $(a_j^l(O))_{l\geq0}$ provided by the trace formula \eqref{Sbojo}. Then

\item\label{secondpoint} The Taylor expansion of $f$ up to order $N$  is entirely determined by the family $(a_j^l(O))$, $0\leq j\leq N$, $l\in\bbN$.

\end{enumerate}
\end{proposition}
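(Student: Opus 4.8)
For part \eqref{firstpoint} I would argue by a direct computation of diagonal matrix elements. Working microlocally near $\gamma$, Proposition \ref{propreduc} lets me write $\U{N}O\Uinv{N}\sim\sum_{r}A_r$ with each $A_r$ a $\mathrm{PO}(r)$, say $A_r=\sum_{2p+|j|+|k|+2m=r}\alpha_{pjkm}(t)\hbar^p\,\text{Op}^{W}(z^j\bar z^k)D_t^m$, where the terms of order $r<N$ do not depend on $N$ (enlarging $N$ changes the conjugation only by a $\mathrm{PO}$ of order $\ge N$, by Lemma \ref{general}). Now $D_t$ acts on $|\mu,\nu\rangle$ by the scalar $2\pi\hbar\nu$; the $t$-integration against $|\mu,\nu\rangle(x,t)=e^{i2\pi\nu t}(\cdots)$ retains only $\int_{\bbS^1}\alpha_{pjkm}$; and $\text{Op}^{W}(z^j\bar z^k)$ shifts the $i$-th occupation number by $j_i-k_i$, so it contributes to the diagonal only when $j=k$, in which case the reordering established in the proof of Lemma \ref{monotoPO} expresses it through the $P_i$ and $\hbar$, with diagonal value a polynomial in the $(\mu_i+\tfrac12)\hbar$ and $\hbar$ that is homogeneous of weighted degree $|j|$. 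Hence $\langle\mu,\nu|A_r|\mu,\nu\rangle=f_r\big((\mu+\tfrac12)\hbar,2\pi\nu\hbar,\hbar\big)$ with $f_r$ polynomial, homogeneous of weighted degree $r/2$, vanishing for $r$ odd, and $f_0\equiv0$ since conjugation by $\U{N}$ preserves the value on $\gamma$ of the principal symbol, which vanishes by hypothesis. Borel's lemma then provides a smooth $f$ vanishing at the origin with $\sum_r f_r$ as Taylor expansion, and the announced remainder is precisely the error bound of Proposition \ref{propreduc} with the truncation taken just below order $N$.

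For part \eqref{secondpoint} the starting observation is that the trace in \eqref{gutzo} is invariant under conjugation, and that by the isolation hypothesis on the period of $\gamma$ the $\phi_l$-contribution is microlocalized near $\gamma$; there I replace $H(x,\hbar D_x)$ by its quantum Birkhoff normal form, conjugating $O$ by $\U{N'}$ with $N'\gg N$, so that by Proposition \ref{W} the truncation error is $O(\hbar^{N+1})$ and cannot affect $a^l_k(O)$ for $k\le N$. Since $h(P_1,\dots,P_n,D_t,\hbar)$ is diagonal in $\{|\mu,\nu\rangle\}$, the trace equals the lattice sum
\be\nonumber
\sum_{(\mu,\nu)\in\bbN^n\times\bbZ}\langle\mu,\nu|\U{N'}O\Uinv{N'}|\mu,\nu\rangle\;\phi_l\!\left(\frac{h\big((\mu+\tfrac12)\hbar,2\pi\hbar\nu,\hbar\big)-E}{\hbar}\right)
\ee
up to a negligible error, and by part \eqref{firstpoint} the matrix element is $f\big((\mu+\tfrac12)\hbar,2\pi\hbar\nu,\hbar\big)$, again up to a negligible error. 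Writing $\phi_l(s)=\tfrac1{2\pi}\int\hat\phi_l(\sigma)e^{is\sigma}\,d\sigma$ and using $h-E=\tau+\sum_i\theta_ip_i+O\big((|p|+|\tau|)^2\big)$, Poisson summation in $\nu$ together with $\mathrm{supp}\,\hat\phi_l\subset(l-1,l+1)$ collapses the $\sigma$-integral onto $\sigma=l$ --- this is how $\phi_l$ isolates the $l$-th iterate of $\gamma$ --- and an Euler--Maclaurin analysis of the remaining sum over $\mu$ near the corner $\mu=0$ yields the expansion \eqref{Sbojo}, each $a^l_k(O)$ being an explicit linear combination of the Taylor coefficients of $f$ at the origin, with coefficients depending only on $l$, on the $\theta_i$ and on the normal form $h$ (itself already determined by the trace formula without observable, \cite{vgtp}).

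The remaining, and I expect hardest, step is to invert this linear relation. I would order the Taylor monomials of $f$ by weighted degree and argue by induction on it: for fixed $l$, the monomial of lowest weighted degree not yet known is the one controlling the lowest coefficient $a^l_k(O)$ not yet used, and because $\theta_1,\dots,\theta_n$ and $\pi$ are rationally independent the frequencies $l\sum_i\theta_i(\mu_i+\tfrac12)$ occurring in the oscillatory sums separate the contributions of distinct monomials $p^\alpha\tau^\beta$, so that letting $l$ range over $\bbN$ provides enough independent equations. The delicate point is that the division by $\hbar$ in $(h-E)/\hbar$ mixes Taylor coefficients of $f$ of several different orders into a single $a^l_k(O)$; the induction must therefore be organized so that each new equation involves exactly one genuinely new coefficient, and verifying the non-vanishing of the corresponding pivots --- which is exactly where the non-degeneracy of $\gamma$ is used --- is the heart of the matter.
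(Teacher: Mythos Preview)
Your treatment of part \eqref{firstpoint} is correct and coincides with the paper's: decompose into $\mathrm{PO}(r)$'s, observe that only the constant-in-$t$, $j=k$ pieces survive on the diagonal and yield homogeneous polynomials of degree $r/2$ in $\big((\mu+\tfrac12)\hbar,2\pi\nu\hbar,\hbar\big)$, then invoke Borel. The $N$-independence of the low-order terms is exactly the argument via Lemma \ref{general} that the paper gives.

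For part \eqref{secondpoint} your reduction to a lattice sum and the Poisson summation in $\nu$ are also what the paper does. One discrepancy: the $\mu$-sum is not handled by Euler--Maclaurin. After Poisson in $\nu$ has pinned $\sigma=l$, the paper sums $\sum_{\mu\in\bbN^n}(\mu+\tfrac12)^k e^{il\theta\cdot(\mu+\tfrac12)}$ \emph{exactly} as $\big(\tfrac{-i}{l}\big)^{|k|}\partial_\alpha^k g(l,\theta)$, where $g(t,\alpha)=\prod_i e^{it\alpha_i/2}(1-e^{it\alpha_i})^{-1}$; the $\tau^m$ factor becomes $(-i\partial_t)^m$ acting on the same object. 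So each $a^l_j(O)$ determines, recursively in $j$, the quantity
\[
X_p(l)=\sum_{|k|+m\le p} b_{k,m,p-|k|-m}\Big[(-i\partial_t)^m\big(\tfrac{-i}{t}\partial_\alpha\big)^k g\Big](l,\theta),
\]
an explicit rational function of $x_i(l)=e^{il\theta_i/2}$.

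The genuine gap in your proposal is the inversion, which you rightly flag as the heart of the matter but leave as a heuristic (``frequencies separate monomials''). The paper's mechanism is concrete and different from what you sketch. Kronecker's theorem (this is where the rational independence of $\theta_1,\dots,\theta_n,\pi$ enters) says that $\{(x_1(l),\dots,x_n(l)):l\in\bbN\}$ is dense in $(\bbS^1)^n$, so knowing $X_p(l)$ for all $l$ amounts to knowing $X_p$ as a function of $(x_1,\dots,x_n)$. One then lets all $x_i\to 1$ at carefully ordered rates, e.g.\ $|1-x_i|=o(|1-x_{i+1}|^p)$; since the $(k,m)$-term has a pole of order $k_i+s_i+1$ in $(1-x_i)$, different multi-indices produce different orders of blow-up, and a lexicographic induction peels them off one at a time. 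The ``pivots'' you worry about are exactly the leading pole coefficients, which are manifestly nonzero. This asymptotic-near-the-diagonal argument replaces the vague frequency-separation picture; without it, it is not clear how the countably many values $X_p(l)$ determine the finitely many $b_{k,m,s}$, because the dependence on $l$ is through transcendental functions of $e^{il\theta_i}$ rather than through polynomial moments.
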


\begin{proof}
Let us first prove point \eqref{firstpoint}.
Let us consider a monomial $G=\alpha(t)b_1\dots b_lD^m_t$ where: 
\begin{itemize}
\item $\alpha$ is smooth 
\item $l+2m=r$
\item For $j\in\llbracket1,l\rrbracket$,  $b_j\in\{a_1,a_1^{*},\dots,a_n,a_n^{*}\}$
\end{itemize}
Let us set for $i\in\llbracket1,n\rrbracket$, $k_i=\sharp\{m\in\llbracket 1, l\rrbracket, b_m=a^*_i\}$\and $j_i=\sharp\{m\in\llbracket 1, l\rrbracket, b_m=a_i\}$.

If $j\neq k$ or $\alpha\notin\bbC$, then: $\mul G\mur=0$ for any $(\mu,\nu)\in\bbN^n\times\bbZ$.

If now $j=k$ and $\alpha\in\bbC$, then there exists complex numbers $\alpha_l$ ($0\leq l_i\leq j_i$ for $i\in\llbracket1,n\rrbracket$), such that: 
\be
G=\sum_{0\leq l_i\leq j_i}\alpha_l\hbar^{\vert l\vert}P_1^{j_1-l_1}\dots P_n^{j_n-l_n}D_t^m,\ \alpha_0=\alpha
\ee
Therefore for any $(\mu,\nu)\in\bbN^n\times\bbZ$: 
\be
\mul G\mur=\sum_{0\leq l_i\leq j_i}\alpha_l\hbar^{\vert l\vert}\left(\left(\mu+\frac{1}{2}\right)\hbar\right)^{j-l}(2\pi\nu\hbar)^m
\ee
Hence, if $G$ is PO($r$), then for any $(\mu,\nu)\in\bbN^n\times\bbZ$: 
\begin{itemize}
\item $\mul G\mur=0$ if $r$ is odd. 
\item If $r$ is even, there exists an homogeneous polynomial function $g$ of order $\frac{r}{2}$ such that: 
\be
\mul G\mur=g\left((\mu+\frac{1}{2})\hbar,2\pi\nu\hbar,\hbar\right)
\ee
\end{itemize}

By Proposition \ref{propreduc} and Borel's lemma, we get that that for any operator $A$ there exists a function $g$ such that for any $(\mu,\nu)\in\bbN^n\times\bbZ$: 
\be
\mul A\mur=g\left((\mu+\frac{1}{2})\hbar,2\pi\nu\hbar,\hbar\right)+O\left((\vert\mu\hbar\vert+\vert\nu\hbar\vert)^{\infty}\right)
\ee
Hence, the only point which remains to be proved, is that the function $f$ in point \eqref{firstpoint} does not depend on $N$. It is therefore sufficient to prove that for any $q\leq N-1$, 
\be\label{constantAPCR}
\left\lfloor\U{N} O\Uinv{N}\right\rfloor_q=\left\lfloor\U{q+1} O\Uinv{q+1}\right\rfloor_q
\ee
But \eqref{constantAPCR} is a direct consequence of Lemma \ref{general}. Indeed,
\be
 \U{N} O\Uinv{N}\sim O+\sum_{l\geq1}\frac{i^l}{\hbar^l l!} [\overbrace{\W{N},\dots,\W{N}}^{l \ \text{times}},O]\\
\ee
and since the principal symbol of $O$ vanishes on $\gamma$, Lemma \ref{general} gives us for any $l\geq 1$ and any $q\leq N-1$:
\be
\left\lfloor\frac{i^l}{\hbar^l l!} [\overbrace{\W{N},\dots,\W{N}}^{l \ \text{times}},O]\right\rfloor_q=\left\lfloor\frac{i^l}{\hbar^l l!} [\overbrace{\W{q+1},\dots,\W{q+1}}^{l \ \text{times}},O]\right\rfloor_q
\ee

Let us now move on to the proof of point \eqref{secondpoint}.

Since $\hat{\phi_l}$ is supported near a single period of the flow, one can 
 microlocalize the trace formula with observables near $\gamma$:

\be\label{tf}
    \text{Tr}\left(O\phi_l\left(\frac{H-E}{\hbar}\right)\right)=\text{Tr}\left(O\int_\bbR\hat{\phi_l}(t)\rho(P_1+\dots+P_n+|\zeta|)e^{it\frac{H-E}{\hbar}}dt\right)+O(\hbar^{\infty})
\ee
where $\rho\in C^\infty_0(\bbR)$ is compactly supported and $\rho=1$ in a neighborhood of $p=\tau=0$.
%%%%%%%%%%%%%%%%%%%%%%%%%%%%$
Therefore we can conjugate \eqref{tf} by the microlocally unitary operator  $\U{N}$:

\be\nonumber\begin{split}
   & \text{Tr}\left(O\phi_l\left(\frac{H-E}{\hbar}\right)\right)=\\
&=\text{Tr}\left((\U{N} O\Uinv{N}\int_\bbR\hat{\phi_l}(t)\rho(P_1+\dots+P_n+|\zeta|)e^{it\frac{\U{N}H\Uinv{N}-E}{\hbar}}dt\right)+O(\hbar^{\infty})
\end{split}
\ee
Thanks to Proposition \ref{W}, we can lighten the r.h.s. for any $(\mu,\nu)\in\mathbb{N}^n\times\bbZ$
\be
  \begin{split}
  &\int_\bbR\hat{\phi_l}(t)\rho(P_1+\dots+P_n+|\zeta|)e^{it\frac{\U{N}H\Uinv{N}-E}{\hbar}}dt\mur\\
 =&\left(\int_\bbR\hat{\phi_l}(t)\rho\left((|\mu|+\frac{n}{2}+|2\pi\nu|)\hbar\right)e^{it\frac{h((\mu+\frac{1}{2})\hbar,\nu\hbar,\hbar)-E+O(|\mu\hbar|+|\nu\hbar|)^{\frac{N+1}{2}})}{\hbar}}dt\right)\mur
  \end{split}
\ee

As $\hat{\phi_l}$ is smooth and compactly supported, together with the non-degeneracy condition on the $\theta_i$s, we can assure that if we choose a sufficiently small support for $\rho$, we have for any $\eta>0$:

\be\nonumber
\begin{split}
&\left(\int_\bbR\hat{\phi_l}(t)\rho\left((|\mu|+\frac{n}{2}+|2\pi\nu|)\hbar\right)e^{it\frac{h((\mu+\frac{1}{2})\hbar,\nu\hbar,\hbar)-E+O(|\mu\hbar|+|\nu\hbar|)^{\frac{N+1}{2}})}{\hbar}}dt\right)\mur\\
&=\left(\int_\bbR\hat{\phi_l}(t)\rho\left((|\mu|+\frac{n}{2}+|2\pi\nu|)\hbar^{\eta}\right)e^{it\frac{h((\mu+\frac{1}{2})\hbar,\nu\hbar,\hbar)-E+O(|\mu\hbar|+|\nu\hbar|)^{\frac{N+1}{2}})}{\hbar}}dt\right)\mur+O(\hbar^{\infty})
  \end{split}
\ee
Hence, choosing $\eta<\frac{1}{2}$: 
\be\nonumber
  \begin{split}
    &\text{Tr}\left(O\phi_l\left(\frac{H-E}{\hbar}\right)\right)+O(\hbar^{\infty})\\
=&\sum_{\mu,\nu}\mul \U{N} O\Uinv{N} \mur\times\int_{\bbR}\hat{\phi_l}(t)\rho\left((|\mu|+\frac{n}{2}+|\nu|)\hbar^{\eta}\right)e^{it(2\pi\nu+\theta.(\mu+\frac{1}{2}))}\dots\\
&\dots \exp\left(\frac{it}{\hbar}\sum_{1\leq q\leq N-2}H^{q}\left((\mu+\frac{1}{2})\hbar,\nu\hbar,\hbar\right)+O\left((|\mu|+|\nu|)^{\frac{N+1}{2}}\hbar^{\frac{N-1}{2}}\right)\right)dt\\
&=\sum_{\mu,\nu}\int_{\bbR}\hat{\phi_l}(t)\rho\left((|\mu|+\frac{n}{2}+|2\pi\nu|)\hbar^{\eta}\right)e^{it(2\pi\nu+\theta.(\mu+\frac{1}{2}))}\\
&\left(1+\sum_{i\geq1}^{\frac{N-1}{2}}\hbar^iQ_i(\mu+\frac{1}{2},\nu,t)\right)\times\sum_{p\geq 1}^{\frac{N+1}{2}}\sum_{|k|+m\leq p} b_{k,m,p-|k|-m}(\mu+\frac{1}{2})^k(2\pi\nu)^m\hbar^pdt +O(\hbar^{\frac{N+1}{2}})
  \end{split}
\ee
where for any $i\leq \frac{N-1}{2}$, $Q_i$ is a determined polynomial function,  of degree in  $\left(\mu+\frac 1 2,\nu\right)$ less or equal to $i+1$, which depends on the $H^{q}$s and the Taylor expansion of $\exp$, and the 
$b_{k,m,s}$ ($(k,m,s)\in\bbN^{n+2}\backslash\{0\}$) come from the Taylor expansion at $(0,0,0)$ of the function $f$ defined in the first point of Proposition \ref{etape1}, \textit{i.e.} for any $N\geq 1$: 
\be
f(x,y,z)=\sum_{1\leq |k|+m+s\leq N}b_{k,m,s} x^k y^m z^s+O\left(|x|+|y|+|z|)^{N+1}\right)
\ee
\ \\
 Now, let us set: 
    \be
    \forall t\in\bbR^*, \forall \alpha\in (\bbR\backslash\frac{2\pi}{t}\bbZ)^n, g(t,\alpha):=\frac{e^{i\frac{t}{2}(\alpha_1+\dots+\alpha_n)}}{\prod_i(1-e^{it\alpha_i})}
    \ee
    By the non-degeneracy condition on the $\theta_i$s, $g$ is well defined on the compact support of $\hat{\phi_l}$ around a single period, which is precisely $l$. It also implies that $\theta_i.\mu$ is bounded below by $C\vert\mu\vert$ (where $C>0$) as $\vert\mu\vert$ goes to $\infty$. \\
Therefore we get from the Poisson formula and the Riemann-Lebesgue lemma that the  quantity $X_p(l)$ below can be computed recursively on $p\leq \frac{N+1}{2}$ from the $a_j^l(O)$, $j=0,\dots,p$: 

\be\label{1stpart}\begin{split}
X_p(l)&=\sum_{|k|+m\leq p}b_{k,m,p-|k|-m}\left[\left(-i\frac{\partial}{\partial t}\right)^m\left(\hat{\phi_l}(t)\left(\frac{-i}{t}\right)^k\frac{\partial^k g}{\partial\alpha^k}(t,\alpha)\right)\right](l,\theta)\\
&=\sum_{|k|+m\leq p}b_{k,m,p-|k|-m}\left[\left(-i\frac{\partial}{\partial t}\right)^m\left(-i\frac{\partial}{t\partial\alpha}\right)^kg\right](l,\theta)\\
\end{split}\ee
since $\hat{\phi_l}$ is identically $1$ around $l$.\\
Now, let us set, for any $i\in\llbracket 1,n\rrbracket$, any $t\in\bbR$ and any $\alpha\in (\R\backslash\frac{2\pi}{t}\bbZ)^n$, $x_i(t,\alpha)=e^{i\frac{t\alpha_i}{2}}$. 
and also define holomorphic function $h$ on $\bbC\backslash\{-1,1\}$ by $h(z)=\frac{z}{1-z^2}$ for $z\in\bbC\backslash\{-1,1\}$. 
We have for any $k\in\bbN^n$:  
\be
\left(-i\frac{\partial}{t\partial\alpha}\right)^kg=\prod_{i=1}^{n}\left(-i\frac{\partial}{t\partial\alpha_i}\right)^{k_i}(h\circ x_i)
\ee

For any $i\in\llbracket 1,n\rrbracket$, an easy induction on $k_i\in\bbN$ leads to the following, since for any $z\in\bbC\backslash\{-1,1\}$, $h(z)=\frac{1}{2}\left(\frac{1}{1-z}-\frac{1}{1+z}\right)$, and $-i\frac{\partial x_i}{t\partial\alpha_i}=\frac{1}{2}x_i$:
\be
\left(-i\frac{\partial}{t\partial\alpha_i}\right)^{k_i}(h\circ x_i)=\frac{k_i!}{2^{k_i+1}}\left(\frac{x_i}{(1-x_i)^{k_i+1}}+\frac{x_i}{(1+x_i)^{k_i+1}}\right)
\ee
Now, since $-i\frac{\partial x_i}{\partial t}=\frac{\alpha_i}{ 2} x_i$, an induction on $s_i\in\bbN$ shows that: 
\be\label{2ndpart}
\left(-i\frac{\partial}{\partial t}\right)^{s_i}\left(-i\frac{\partial}{t\partial\alpha_i}\right)^{k_i}(h\circ x_i)=\frac{(k_i+s_i)!\alpha_i^{s_i}}{2^{k_i+s_i+1}}\left(\frac{x_i}{(1-x_i)^{k_i+s_i+1}}+\frac{x_i}{(1+x_i)^{k_i+s_i+1}}\right)
\ee
Let us  now introduce for any n-tuple $s$ such that $|s|=m$, the multinomial coefficient: \be\nonumber\binom{m}{s}=\frac{m!}{s_1!\dots s_n !}\ee
We have: 
\be\label{3rdpart}
\left(-i\frac{\partial}{\partial t}\right)^m\left(-i\frac{\partial}{t\partial\alpha}\right)^kg=\sum_{|s|=m}\binom{m}{s}\prod_{i=1}^n\left(-i\frac{\partial}{\partial t}\right)^{s_i}\left(-i\frac{\partial}{t\partial\alpha_i}\right)^{k_i}(h\circ x_i)
\ee
Let us use Kronecker theorem, whose hypothesis is precisely the non-degeneracy condition on the $\theta_i$s: for any $n$-tuple $(x_1,\dots,x_n)\in\mathbb{S}_1^n$, one can find a sequence of integers $(l_p)_{p\in\bbZ}$, such that:
\be\nonumber
\forall j\in\llbracket 1, n\rrbracket, \ x_j(l_p,\theta)\tendvers{p}{+\infty}x_j
\ee
Therefore, setting, for any $(x_1, \dots, x_n)\in(\mathbb{S}_1\backslash\{-1,1\})^n$ and $(k,m)\in\bbN^{n+1}$:
\be\label{new}
u^{(k,m)}=\sum_{|s|=m}\binom{m}{s}\prod_{i=1}^n\frac{(k_i+s_i)!\theta_i^{s_i}}{2^{k_i+s_i+1}}\left(\frac{x_i}{(1-x_i)^{k_i+s_i+1}}+\frac{x_i}{(1+x_i)^{k_i+s_i+1}}\right)
\ee

we have that \eqref{1stpart}, \eqref{2ndpart} and \eqref{3rdpart} together with Kronecker theorem allows us to conclude that 
$
X_p:=\sum_{|k|+m\leq p}b_{k,m,p-|k|-m}u^{(k,m)}
$ is determined by the $a_j^l(O)$, $j=0,\dots,p$. 

Hence, the only thing which remains to be proved is that, if one chooses  $x_i$ tending to $1$ in a way convenient to us, the $|u^{(k,m)}|$s will be ordered for $\ll$. More precisely, let  $x_i$ tend to 1 in such a way  that  
\be\label{limitxi}
  \forall i\in\llbracket 1,n-1\rrbracket, |1-x_i|=o\left(|1-x_{i+1}|^{p}\right)
\ee
we have 
that $s_1=m$ gives the leading order in \eqref{new} and therefore

\be
(1-x_1)^m u^{(k,m)}\sim C \prod_{i=1}^n\frac{1}{(1-x_i)^{k_i+1}}
\ee
for some $C>0$. Hence, if one sets $\widetilde{m}=(m,0,\dots,0)$: 
\be
  u^{(k,m)}=o\left(u^{(k',m')}\right)\ \text{if } k+\widetilde{m}<k'+\widetilde{m'}
\ee
where $<$ is the lexicographical order on $\bbN^{n}$. Therefore, for any $p\in\bbN$ and  $(k,m)\in\bbN^{n+1}$ such that $|k_0|+m_0\leq p$, the following quantity can be recursively determined from $X_p$: 
\be\label{Xkm}
X_{k_{0},m_{0}}=\sum_{k'+\widetilde{m'}=k+\widetilde{m}}b_{k,m,p-|k|-m}u^{(k,m)}
\ee
Reversing for example the roles of $i=1$ and $i=2$ in \eqref{limitxi}, and observing that $k_2+m\neq k_2'+m'$ if $k+\widetilde{m}=k'+\widetilde{m'}$ and $(k,m)\neq (k',m')$, one determines $b_{k,m,p-|k|-m}$ from \eqref{Xkm} recursively on $m$. 
Finally, each $b_{k,m,s}$ with $|k|+m+s\leq N$ is determined by the $a_j^l(O)$, with $j=0\dots N$ and $l\in\bbN$ and the point \eqref{secondpoint} is proved, which ends the proof of Proposition \ref{etape1}.

\end{proof}

\subsection{Recovering the Hamiltonian from matrix elements}\label{subsec:propp}

In order to finish the proof of Theorem \ref{flatmain} we will show how the knowledge of the diagonal matrix elements of a given known selfadjoint operator conjugated by a 
unitary one determines the latter (in  the framework of asymptotic expansion).

Let $\widetilde{W}_{\leq N}$ as in Proposition \ref{W} and $O_{mnp},O_p $ as in Theorem \ref{flatmain}.
By Proposition \ref{etape1}, there exists smooth functions $f_{mnp}$ and $f_{p}$ vanishing at $(0,0,0)$ if $(m,n)\neq (0,0)$ such that for any $N\geq3$: 

  \be\label{etape1mnp}
\mul \U{N} O_{mnp} \Uinv{N}\mur=f_{mnp}\left((\mu+\frac{1}{2})\hbar,2\pi\nu\hbar,\hbar\right)+O\left((\vert\mu\hbar\vert+\vert\nu\hbar\vert)^{\frac{N}{2}}\right)
  \ee
and \begin{equation}\label{etape1p}
\mul \U{N} O_{p} \Uinv{N}\mur=f_{p}\left((\mu+\frac{1}{2})\hbar,2\pi\nu\hbar,\hbar\right)+O\left((\vert\mu\hbar\vert+\vert\nu\hbar\vert)^{\frac{N}{2}}\right)
\end{equation}

\begin{proposition}\label{propp} 
The Taylor expansions, at the origin, of the functions $f_{mnp}$, $f_q$  up to order $N-1,\ N\geq 3,$ for  $(m,n,p,q)\in\bbN^{2n}\times \bbZ^2$ satisfying conditions

\begin{enumerate}
\item $0<\vert m\vert+\vert n\vert\leq N$
\item  $\forall j=1\dots n,\ m_j=0$\ \textbf{or} $ n_j =0$ 
%\item $s=1$ if $m=n=0$, otherwise $s=0$ 
\item $p\in\bbZ$, $q\in\bbZ^*$
\end{enumerate}

determine completely  $W_{\leq N}$.
\end{proposition}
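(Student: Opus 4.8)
The plan is to recover $W_{\leq N}=\sum_{3\leq q\leq N}W_q$ recursively on the PO-degree $q$, reading the coefficient functions of each $W_q$ off the matrix elements \eqref{etape1mnp}--\eqref{etape1p} (equivalently, off the Taylor jets of the $f_{mnp}$ and $f_p$) for the probe observables of Theorem \ref{flatmain}. Recall (Proposition \ref{W}) that a generic monomial of $W_q$ has the form $\alpha_{ajkb}(t)\,\hbar^{a}\,\text{Op}^{W}(z^{j}\bz^{k})\,D_t^{b}$ with $2a+|j|+|k|+2b=q$. First I would expand, in the sense of Definition \ref{sim}, $\U{N}O\Uinv{N}\sim O+\sum_{l\geq1}\tfrac{1}{l!}\big(\tfrac{i}{\hbar}\mathrm{ad}_{\W{N}}\big)^{l}O$, and combine it with Lemma \ref{general} and with the description of diagonal matrix elements from the proof of Proposition \ref{etape1}: the diagonal matrix element of a PO($r$) operator vanishes unless $r$ is even, in which case it is a homogeneous polynomial of degree $r/2$ in $\big((\mu+\tfrac12)\hbar,2\pi\nu\hbar,\hbar\big)$ whose coefficients are exactly the Fourier modes — selected by the $e^{-2i\pi pt}$ factors — of the ``diagonal'' monomials ($j=k$). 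Because Condition (2) makes the supports of $m$ and $n$ disjoint, $O_{mnp}$ has leading part $e^{-2i\pi pt}a^{m}(a^{*})^{n}$ with no Weyl-ordering correction at leading PO-order, and $O_p$ has leading part $e^{-2i\pi pt}D_t$, which is PO($2$). Hence in the expansion above the single commutator produces $\tfrac{i}{\hbar}[W_q,e^{-2i\pi pt}a^{m}(a^{*})^{n}]$, which by Lemma \ref{general} is PO($q+|m|+|n|-2$) and so visible at Taylor order $\tfrac12(q+|m|+|n|-2)$ of $f_{mnp}$; and $\tfrac{i}{\hbar}[W_q,e^{-2i\pi pt}D_t]$, PO($q$) (note that the monomials with $j=k$ have $q$ even), visible at Taylor order $q/2$ of $f_p$.

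Now the induction. Assume $W_{q'}$ is known for all $3\leq q'<q$, and fix a monomial $\alpha_{ajkb}(t)\hbar^{a}\text{Op}^{W}(z^{j}\bz^{k})D_t^{b}$ of $W_q$. If $j\neq k$, probe with $O_{mnp}$ for the choice $m=(k-j)^{+}$, $n=(j-k)^{+}$, $p=\ell$: then $|m|+|n|=\sum_i|j_i-k_i|\leq|j|+|k|\leq q\leq N$ (Condition (1)), $m$ and $n$ have disjoint supports (Condition (2)), $\ell\in\bbZ$ is free (Condition (3)), and the relevant Taylor order $\tfrac12(q+|m|+|n|-2)$ is $\leq N-1$. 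If $j=k$ (so $q$ is even), probe with $O_p$ for $p=\ell\in\bbZ^{*}$ (Condition (3)), the relevant Taylor order $q/2$ being $\leq N-1$. By Lemma \ref{general} every contribution to the corresponding $f$ at that Taylor order other than the single commutator with $W_q$ — namely the higher nested commutators $\tfrac1{l!}\big(\tfrac{i}{\hbar}\mathrm{ad}_{\W{N}}\big)^{l}O$ with $l\geq2$, the single commutators involving $W_{q'}$ with $q'<q$, and the commutators of $\W{N}$ with the explicitly prescribed lower-order Weyl-symbol corrections of the probe — involves only $W_{q'}$ with $q'<q$, hence is already computable from the induction hypothesis and the given data. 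So the data determines the diagonal matrix element of the single commutator, which a direct computation evaluates to a linear combination $\sum c_{\ell}(\alpha_{a'j'k'b'})\,\hbar^{a'}(2\pi\nu\hbar)^{b'}\widehat{P}_{j'k'}\!\big((\mu+\tfrac12)\hbar\big)$, the sum running over the monomials of $W_q$ in the same ``channel'' ($j'-k'=j-k$, resp.\ $j'=k'$), where $\widehat{P}_{j'k'}$ is an explicit polynomial whose leading term is $\prod_i\big((\mu_i+\tfrac12)\hbar\big)^{\min(j'_i,k'_i)}$ times a nonzero factor of complementary degree. Since distinct $(a',j',k',b')$ in a fixed channel give linearly independent such polynomials — separated by the power of $\hbar$, the degree in $\nu$, and the leading $(\mu_i+\tfrac12)\hbar$-monomial — this finite linear system inverts; and letting $\ell$ range over $\bbZ$ (resp.\ over $\bbZ^{*}$, the missing mode $c_0(\alpha_{ajjb})=0$ being the normalisation of Remark \ref{nonnul}) recovers all coefficient functions of $W_q$. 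This closes the induction and yields $W_{\leq N}$.

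The bookkeeping of PO-degrees is routine once Lemma \ref{general} is available; the real work — and the step I expect to be the main obstacle — is the explicit commutator computation and the triangularity claim. One must check that $\tfrac{i}{\hbar}[W_q,e^{-2i\pi\ell t}a^{m}(a^{*})^{n}]$ (resp.\ $\tfrac{i}{\hbar}[W_q,e^{-2i\pi\ell t}D_t]$) genuinely has a nonzero diagonal part proportional to $c_{\ell}(\alpha)$ — for the $D_t$-probe the leading term of the bracket cancels, which is exactly why the Fourier index $\ell$ must be nonzero there — and that the polynomials $\widehat{P}_{j'k'}$ attached to distinct monomials in a fixed channel are linearly independent, so that the linear system is genuinely invertible. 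Carrying this out for each $q\in\llbracket3,N\rrbracket$ yields the proposition.
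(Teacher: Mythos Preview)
Your proposal is correct and follows the same inductive architecture as the paper: recover $W_q$ degree by degree, using that the single commutator $\tfrac{i}{\hbar}[W_q,O_{mnp}]$ (resp.\ $\tfrac{i}{\hbar}[W_q,O_q]$) surfaces at a specific Taylor order of $f_{mnp}$ (resp.\ $f_q$), while all other contributions at that order are computable from the induction hypothesis. Your choice of probe $m=(k-j)^{+}$, $n=(j-k)^{+}$ and your handling of the diagonal monomials via $O_q$ with $q\in\bbZ^{*}$ (invoking Remark \ref{nonnul} for the missing zero mode) match the paper exactly.

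The one substantive difference is the \emph{separation mechanism} inside a fixed channel. You assert that the polynomials $\widehat P_{j'k'}$ are linearly independent ``by power of $\hbar$, degree in $\nu$, and leading $(\mu_i+\tfrac12)\hbar$-monomial''. The paper does not argue this way: instead it picks explicit sequences $\mu_1(\hbar)\ll\cdots\ll\mu_n(\hbar)\ll\nu(\hbar)\ll\hbar^{-1/(2l_0+3)}$ and shows that, along such a sequence, the $g_{jks}$ are \emph{asymptotically totally ordered} by the lexicographic order on $(\max(j_1,k_1),\dots,\max(j_n,k_n),s)$, then peels them off one at a time. This asymptotic argument is precisely what proves your linear-independence claim, and it is not as immediate as your sketch suggests: distinct $(j',k',s')$ in the same channel can and do share individual monomials (e.g.\ in dimension $n\geq2$, the term $(\mu\hbar)^{\max(j,k)-e_i}$ from one $(j,k)$ coincides with a term from another), so a naive ``distinct leading monomial'' argument fails. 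The paper also organises the $\hbar$-prefactor as an inner induction on $l_0$ rather than lumping all $a'$ together; this is because the $O(\hbar)$ Weyl-ordering corrections from the $a'=0$ layer land exactly where the $a'=1$ leading terms live, so one must clear the former before reading the latter. Your proposal would go through once you supply this asymptotic (or an equivalent) argument for the invertibility step you flagged as the main obstacle.
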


%%%%%%%%%%%%%%%%%%%%%%%%%%%%%%%%%%%%%%%%%%%%%%%%%%%%%%%%%%%%%%%%%%%%%%%%%%%%%%%%%%%%%%%
%%%
%%%%%%%%%%%%%%%%%%%%%%%%%%%%%%%%%%%%%%%%%%%%%%%%%%%%%%%%%%%%%%%%%%%%%%%%%%%%%%%%%%%%%%%

\begin{proof}[Proof of Proposition \ref{propp}]

   Let us write  
  \be\begin{split}
 W_N&=\sum\limits_{2l+|j|+|k|+2s=N} \alpha_{ljks}(t)\hbar^l\text{Op}^W(z^j\bz^{k})D_t^s\\
 &:=\sum\limits_{2l+|j|+|k|+2s=N}\sum_{d\in\bbZ} \alpha_{ljksd}\hbar^le^{i2\pi dt}\text{Op}^W(z^j\bz^{k})D_t^s
 \end{split}
 \ee
 where, for every $\alpha_{ljjs0}$ is chosen to be zero by the convention of remark \ref{nonnul}.

Since $W_2=0$  we can proceed by induction on $N\geq 3$: let's assume  $W_{\leq N-1}$ already determined. 

Let $(m,n,p,q)\in\bbN^{2n}\times \bbZ\times\bbZ^*$ be such that:  \begin{equation}\label{eq:condmn0}
0<\vert m\vert+\vert n\vert\leq N,\ \  \forall i\in\llbracket 1, n\rrbracket, \ m_in_i=0\end{equation}

 Let us also state the following lemma, whose proof will be given after the end of the present proof. 
 \begin{lemma}\label{equivalent} Let $(j,k,s,d)\in\bbN^{2n+1}\times\bbZ$, such that: $\vert j\vert +\vert k\vert +2s=N$.
 
 If $j+m=k+n$, then:
 \be\label{eq:commjks1}\begin{array}{r}
 \mul  [e^{i2\pi pt}\text{Op}^W(z^j\bz^{k})D_t^s,O_{mnp}]\mur=-\hbar g_{jks}\left(\left(\mu+\frac{1}{2}\right)\hbar,\nu\hbar\right)\\
 +O\left(\hbar^2(\vert\mu\hbar\vert+\vert\nu\hbar\vert)^{\frac{N+\vert m\vert+\vert n\vert}{2}-2}
 +\hbar(\vert\mu\hbar\vert+\vert\nu\hbar\vert)^{\frac{N+\vert m\vert+\vert n\vert-1}{2}}\right)
\end{array} \ee
 where:
  \be\nonumber
  g_{jks}\left(\left(\mu+\frac{1}{2}\right)\hbar,\nu\hbar\right)=(2\pi\nu\hbar)^{s}(\mu\hbar)^{\max(j,k)}\left(\sum_{i=1}^n\frac{k_im_i-j_in_i}{\mu_i\hbar}+\frac{ps}{\nu\hbar}\right)
  \ee
  and $\max(j,k)=(\max(j_i,k_i))_{1\leq i\leq n}$.

If $j+m\neq k+n$ or $d\neq p$, then:

\be\label{eq:commjks2}\begin{split}
 \mul  [e^{i2\pi dt}\text{Op}^W(z^j\bz^{k})D_t^s,O_{mnp}]\mur=& O\left(\hbar^2(\vert\mu\hbar\vert+\vert\nu\hbar\vert)^{\frac{N+\vert m\vert+\vert n\vert}{2}-2}\right)\\&+O\left(\hbar(\vert\mu\hbar\vert+\vert\nu\hbar\vert)^{\frac{N+\vert m\vert+\vert n\vert-1}{2}}\right)
 \end{split}
 \ee
 
We also have, if $j=k$: 

\begin{equation}\label{eq:commjks3}\begin{split}
\mul  [e^{i2\pi qt}\text{Op}^W(z^j\bz^k) D_t^s,O_{q}]\mur=-2\pi\hbar q(1+s) \left(\left(\mu+1/2\right)\hbar\right)^j(\nu\hbar)^s\\+O\left(\hbar^2(\vert\mu\hbar\vert+\vert\nu\hbar\vert)^{\frac{N -2}{2}}+\hbar(\vert\mu\hbar\vert+\vert\nu\hbar\vert)^{\frac{N+1}{2}}\right)
\end{split}
\end{equation} 
And if $j\neq k$ or $d\neq q$: 
\be\label{eq:commjks4}
 \mul  [e^{i2\pi dt}\text{Op}^W(z^j\bz^{k})D_t^s,O_{q}]\mur=O\left(\hbar^2(\vert\mu\hbar\vert+\vert\nu\hbar\vert)^{\frac{N -2}{2}}\right)+O\left(\hbar(\vert\mu\hbar\vert+\vert\nu\hbar\vert)^{\frac{N+1}{2}}\right)
 \ee
 \end{lemma}

By equation \eqref{etape1mnp}, the Taylor expansion  of function $f_{mnp}$ up to order $N-1$ determines modulo \scriptsize$O\left((|\mu\hbar|+|\nu\hbar|)^{N}\right)$\normalsize:
\be\label{W2N}
\mul \U{2N} O_{mnp}\Uinv{2N}\mur-  \mul   O_{mnp}\mur
\ee 
Since  $\W{2N}$ is a sum of polynomial operators of order greater that $3$, we get from Lemma \ref{general} that : 
\begin{equation}
\sum_{l\geq 2}\frac{i^l}{\hbar^l l!} \mul [\overbrace{\W{2N},\dots,\W{2N}}^{l \ \text{times}},O_{mnp}]\mur=O\left((|\mu\hbar|+|\nu\hbar|)^{\frac{N+\vert m\vert+\vert n\vert-1}{2}}\right)
\end{equation}
Hence, using the notations of Lemma \ref{equivalent},  \eqref{W2N} is equal,  modulo known terms and $O\left((|\mu\hbar|+|\nu\hbar|)^{\frac{N+\vert m\vert+\vert n\vert-1}{2}}\right)+O\left(\hbar(\vert\mu\hbar\vert+\vert\nu\hbar\vert)^{\frac{N+\vert m\vert+\vert n\vert}{2}-2}\right)$  to: 

 \be\label{eg2N}\sum_{\substack{|j|+|k|+2s=N+1\\j+m=k+n}} i\alpha_{0jksp} g_{jks}\left(\left(\mu+\frac{1}{2}\right)\hbar,\nu\hbar\right) 
\ee

Let us define the set $\Gamma=\{(j,k,s)\in\bbN^{2n+1}\ \vert \ |j|+|k|+2s=N,\ j+m=k+n\}$. \\
Let us choose $\mu_1(\hbar),\dots\mu_n(\hbar),\nu(\hbar)$ such that, as $\hbar$ tends to $0$:
 \be\label{ordremui1}
\nu^{\frac{N-2}{N-1}}\ll \mu_1\ll \dots\ll \mu_n\ll \nu\ll\hbar^{-\frac{1}{3}} 
 \ee 
 where $\ll$ is defined by $f\ll g\Leftrightarrow f\underset{\hbar\rightarrow 0}{=}{o}(g)$.
 
  Let us also define $i_0:=\min\{i\in\llbracket 1,n\rrbracket, m_i\neq n_i\}$ (it exists since $(m,n)\neq (0,0)$ and for any $i\in\llbracket 1,n\rrbracket$, $m_in_i=0$).
 Let us also remark $j_{i_0}n_{i_0}-k_{i_0}m_{i_0}$ never vanishes on $\Gamma$.  We have by \eqref{ordremui1} that, for $(j,k,s)\in\Gamma$,\\

 \be\label{equivg}
  g_{jks}\left(\left(\mu+\frac{1}{2}\right)\hbar,\nu\hbar\right) \underset{\hbar\rightarrow 0}{\sim}\frac{j_{i_0}n_{i_0}-k_{i_0}m_{i_0}}{\mu_{i_0}\hbar}(2\pi\nu\hbar)^{s}\prod_{i=1}^n(\mu_i\hbar)^{\max(j_i,k_i)}
  \ee 
  Let us now define a strict total order $\prec$ on $\Gamma$ by: 
  \begin{equation}\begin{array}{c}
  %&\forall (j,k,s)\in\Gamma, \ \forall (j',k',s')\in\Gamma, \ 
  (j,k,s)\prec (j',k',s') \\
  \Updownarrow\\ (\max(j_1,k_1),\dots,\max(j_n,k_n),s)<(\max(j'_1,k'_1),\dots,\max(j'_n,k'_n),s')
  \end{array}
  \end{equation}
  where $<$ is the lexicographical order on $\bbN^{n+1}$. $\prec$ is asymmetric since for $i\in\llbracket1,n\rrbracket$, the sign of $m_i-n_i$ determines whether $\max(j_i,k_i)$ is equal to $j_i$ or $k_i$. 
\eqref{ordremui1} and \eqref{equivg}  give  that: 
  
   \be
   (j,k,s)\prec (j',k',s') \Rightarrow g_{jks}\left(\left(\mu+\frac{1}{2}\right)\hbar,\nu\hbar\right){\ll} g_{j'k's'}\left(\left(\mu+\frac{1}{2}\right)\hbar,\nu\hbar\right)
   \ee
and for any $(j,k,s)\in \Gamma$: \be\nonumber O\left((|\mu\hbar|+|\nu\hbar|)^{\frac{N+\vert m\vert+\vert n\vert-1}{2}}\right)+O\left(\hbar(\vert\mu\hbar\vert+\vert\nu\hbar\vert)^{\frac{N+\vert m\vert+\vert n\vert}{2}-2}\right)\ll g_{jks}\left(\left(\mu+\frac{1}{2}\right)\hbar,\nu\hbar\right)\ee
  
 Therefore, the Taylor expansion up to order $N-1$ of the functions $f_{mnp}$ determines the coefficients $(\alpha_{0jksp})_{|j|+|k|+2s=N,j+m=k+n}$  by induction on $(\Gamma,<)$.
   
Let $(m,n,p)$ run over all the possible values in $\bbN^{2n}\times\bbZ$ while satisfying condition \eqref{eq:condmn0}.  We claim that one can determine every function $\alpha_{0jks}$ with $|j|+|k|+2s= N$ and $j\neq k$.
Indeed, for any $(j,k,s)\in \bbN^{2n+1}$ such that $|j|+|k|+2s=N$ and $j\neq k$, let us choose for any $i\in\llbracket 1,n\rrbracket$:  
\be
n_i=\max(j_i-k_i,0) \text{ and } m_i=\max(k_i-j_i,0)
\ee
then $j+m=k+n$ and $(m,n)\neq(0,0)$ while for any $i\in\llbracket 1,n\rrbracket$, $m_i=0$ or $n_i=0$. Finally, \be\nonumber \vert m\vert+\vert n\vert=\sum_{i=1}^n\vert j_i-k_i|\leq |j|+|k|\leq N\ee

Let us remark that condition $j\neq k$ is always satisfied if $N$ is odd and $|j|+|k|+2s=N$. If $N$ is even, the Taylor expansion up to order $ \frac{N}{2}$ of the function $f_{q}$ determines  modulo known terms and $O\left((|\mu\hbar|+|\nu\hbar|)^{\frac{N+2}{2}}\right)+O\left(\hbar(|\mu\hbar|+|\nu\hbar|)^{\frac{N-2}{2}}\right)$: 

 \be\sum_{2|j|+2s=N} i\alpha_{0jjsq} 2\pi q(1+s) \left(\left(\mu+1/2\right)\hbar\right)^j(\nu\hbar)^s
\ee

Let us choose $\mu_1(\hbar),\dots\mu_n(\hbar),\nu(\hbar)$ such that, as $\hbar$ tends to $0$:
 \be\label{ordremui2}
\nu^{\frac{N-2}{N}}\ll \mu_1\ll \dots\ll \mu_n\ll \nu\ll\hbar^{-\frac{1}{2}} 
 \ee  
We have, for any $(j,s,q)$ such that $2\vert j\vert+2s=N$: \be\nonumber  O\left((|\mu\hbar|+|\nu\hbar|)^{\frac{N+2}{2}}\right)+O\left(\hbar(|\mu\hbar|+|\nu\hbar|)^{\frac{N-2}{2}}\right)\ll \left(\left(\mu+1/2\right)\hbar\right)^j(\nu\hbar)^s \ee
Thus, every $\alpha_{0jjsq}$ is  determined by induction on the set $\{2\vert j\vert+2s=N\}$ ordered by the lexicographical order. Hence, letting $q$ run over $\bbZ^*$, we finally determined every $\alpha_{0jksd}$ with $\vert j\vert+\vert k\vert +2s=N$ and $d\neq 0$ if $j=k$, hence the principal symbol of $W_{N}$. 

Let us now choose $1\leq l_0< \frac{N}{2}$ and assume that we already determined the functions $\alpha_{ljks}$ with $2l+|j|+|k|+2s= N$ and $l<l_0$. 
 Let $(m,n,p,q)\in\bbN^{2n}\times \bbZ\times\bbZ^*$ be such that:  \begin{equation}\label{eq:condmn2}
0<\vert m\vert+\vert n\vert\leq N-2l_0,\ \  \forall i\in\llbracket 1, n\rrbracket, \ m_in_i=0\end{equation} 
The Taylor expansion of $f_{mnp}$ up to order $N-1-l_0$ determines 
$$\sum_{\substack{2l_0+|j|+|k|+2s=N\\j+m=k+n}} i\alpha_{l_0jksp}\hbar^{l_0} g_{jks}\left(\left(\mu+\frac{1}{2}\right)\hbar,\nu\hbar\right) $$

modulo known terms $O\left((|\mu\hbar|+|\nu\hbar|)^{\frac{N+\vert m\vert+\vert n\vert-1}{2}}\right)+O\left(\hbar^{l_0+1}(|\mu\hbar|+|\nu\hbar|)^{\frac{N-2l_0+\vert m\vert+\vert n\vert}{2}-2}\right)$.

Let us choose $\mu_1(\hbar),\dots\mu_n(\hbar),\nu(\hbar)$ such that, as $\hbar$ tends to $0$,
 \be\label{ordremui3}
\nu^{\frac{N-l_0-2}{N-l_0-1}}\ll \mu_1\ll \dots\ll \mu_n\ll \nu\ll\hbar^{-\frac{1}{2l_0+3}} 
 \ee 
 
 Then for any $(j,k,s)$ such that  $2l_0+|j|+|k|+2s=N$ and $j+m=k+n$, we have: $O\left((|\mu\hbar|+|\nu\hbar|)^{\frac{N+\vert m\vert+\vert n\vert-1}{2}}\right)+O\left(\hbar^{l_0+1}(|\mu\hbar|+|\nu\hbar|)^{\frac{N-2l_0+\vert m\vert+\vert n\vert}{2}-2}\right)\ll \hbar^{l_0}g_{jks}\left(\left(\mu+\frac{1}{2}\right)\hbar,\nu\hbar\right)$. Therefore, every $\alpha_{l_0jksp}$ with $2l_0+|j|+|k|+2s=N$ and $j+m=k+n$ is determined just like before. Letting $(m,n,p)$ run over all the possible values in $\bbN^{2n}\times\bbZ$ while satisfying \eqref{eq:condmn2}, we determined every $\alpha_{l_0jksp}$ with $2l_0+|j|+|k|+2s=N-1$ and $j\neq k$. The Taylor expansion of $f_q$ up to order $\frac{N}{2}$ determines the remaining $\alpha_{l_0jjsq}$, and finally, every function $\alpha_{l_0jks}$ where $(j,k,s)$ satisfies $l_0+|j|+|k|+2s=N$, which concludes our proof by induction.\end{proof}
 \begin{proof}[Proof of Lemma \ref{equivalent}]
 The principal symbol of $\frac{1}{i\hbar}[e^{i2\pi dt}\text{Op}^W{z^j\bz^k}D_t^s,O_{mnp}]$ is: 
 
 \be\begin{split}
 \sigma_{jkds}(z,t,\bz,\tau)=&\left\{e^{i2\pi dt} z^j\bz^k\tau^s, \mathcal{O}_{mnp}\right\}\\=&\left\{e^{i2\pi dt} z^j\bz^k\tau^s,e^{-i2\pi pt} z^m\bz^n\right\}\\&+O\left((\vert z\vert^2+\tau)^{\frac{\vert m\vert+\vert n\vert+N-1}{2}}\right)
 \end{split} \ee

Hence:  \be\label{eq:calculcrochet}\begin{split}
 \sigma_{jkds}(z,t,\bz,\tau)=&-i\sum_{i=1}^n\frac{\partial}{\partial z_i}(e^{i2\pi dt} z^j\bz^k\tau^s)\frac{\partial}{\partial \bz_i}(e^{-i2\pi pt} z^m\bz^n)\\
 &+i\sum_{i=1}^n\frac{\partial}{\partial \bz_i}(e^{i2\pi dt} z^j\bz^k\tau^s)\frac{\partial}{\partial z_i}(e^{-i2\pi pt} z^m\bz^n)\\
 &-\frac{\partial}{\partial \tau}(e^{i2\pi dt} z^j\bz^k\tau^s)\frac{\partial}{\partial t}(e^{-i2\pi pt} z^m\bz^n)\\
 &+O\left((\vert z\vert^2+\tau)^{\frac{\vert m\vert+\vert n\vert+N-1}{2}}\right)\\
 =&-ie^{i2\pi(d-p)t}z^{j+m}\bz^{k+n}\tau^{s}\left(\sum_ {i=1}^n\frac{j_in_i-k_im_i}{z_i\bz_i}- \frac{2\pi ps}{\tau}\right)\\
 &+O\left((\vert z\vert^2+\tau)^{\frac{\vert m\vert+\vert n\vert+N-1}{2}}\right)
 \end{split} \ee

Let us remark that if $j+m=k+n$, then $j+m=k+n=\max(j,k)$.
Let us also remark that if $j_in_i-k_im_i\neq 0$, then $j_i+m_i\neq 0$ and $k_i+n_i\neq 0$. Therefore, the last line in \eqref{eq:calculcrochet} can be reduced to a polynomial expression modulo $O\left((\vert z\vert^2+\tau)^{\frac{\vert m\vert+\vert n\vert+N-1}{2}}\right)$. 
 $\frac{1}{\hbar}[e^{i2\pi dt}\text{Op}^W{z^j\bz^k}D_t^s,O_{mnp}]$ has the same principal symbol as the polynomial operator obtained when replacing each $z_i$ by $a_i$, $\bz_i$ by $a_i^*$, $\tau$ by $D_t$ in this polynomial expression. Since the expansion in PO $\frac{1}{\hbar}[e^{i2\pi dt}\text{Op}^W{z^j\bz^k}D_t^s,O_{mnp}]$ starts at order $N+\vert m\vert+\vert n\vert-2$, we can hence conclude that the asymptotic expansions \eqref{eq:commjks1} and \eqref{eq:commjks2} are verified. 
 
Now, the principal symbol of $\frac{1}{i\hbar}[e^{i2\pi dt}\text{Op}^W{z^j\bz^k}D_t^s,O_{q}]$ is, modulo $O\left((\vert z\vert^2+\tau)^{\frac{N+1}{2}}\right)$: 
 
 \be\begin{split}
 \tilde{\sigma}_{jkds}(z,t,\bz,\tau)=&\left\{e^{i2\pi dt} z^j\bz^k\tau^s, \mathcal{O}_{q}\right\}\\=&\left\{e^{i2\pi dt} z^j\bz^k\tau^s,e^{-i2\pi qt} \tau\right\}\\
 =&\frac{\partial}{\partial t}(e^{i2\pi dt} z^j\bz^k\tau^s)\frac{\partial}{\partial \tau}(e^{-i2\pi qt} \tau)\\&-\frac{\partial}{\partial \tau}(e^{i2\pi dt} z^j\bz^k\tau^s)\frac{\partial}{\partial t}(e^{-i2\pi qt} \tau)\\
 =&i2\pi(d+sq)e^{i2\pi (d-q)t} z^j\bz^k\tau^s
 \end{split}
  \ee 
Hence, \eqref{eq:commjks3} and \eqref{eq:commjks4} are verified just as before. 

\end{proof}

Theorem \ref{flatmain} is, as it has already been said, a direct consequence of Propositions \ref{etape1} and \ref{propp}.

\subsection{``Bottom of a well"}\label{subsec:BW} In this subsection, we treat the ``Bottom of a well" analogs of Theorem \ref{main}, namely Theorems \ref{corbot} and \ref{hope2}.
The proof of Theorem \ref{corbot} is a line by line analog of Theorem \ref{main} \tcb{after noticing that the knowledge of the spectrum near the bottom determines  the left hand side of the trace formula}: we omit it here. However, Theorem \ref{hope2}, that needs less assumptions in the particular case of a Schrödinger operator, deserves a proper proof. 

\begin{proof}[Proof of Theorem \ref{hope2}]

In a system of Fermi coordinates, the (principal and total) symbol of our Schrödinger operator can be written as: 

\begin{equation}
H(x,\xi)=V(q_0)+\sum_{i=1}^n \theta_i \frac{x_i^2 +\xi_i^2}{2}+R(x),\ \ R(x)=O(x^3)
\end{equation}

Let  $H_0(x,\xi)=\sum_{i=1}^n \theta_i \frac{x_i^2 +\xi_i^2}{2}$.
Let us state the following lemma, which is a classical analog of Proposition \ref{vgtp} (we therefore omit its proof) and uses the hypothesis of rational independence of the $\theta_i$s.

\begin{lemma}\label{lem:vgtpclass}
Let $G\in \calC^{\infty}(T^*(\bbR^n),\bbR)$ be an homogeneous polynomial of degree $k\geq 3$. There exists a unique couple of functions $G_1\in \calC^{\infty}(\bbR^n,\bbR)$ and $F\in \calC^{\infty}(T^*(\bbR^n),\bbR)$ such that 
\be
 \forall (x,\xi)\in T^*(\bbR^n), \ \{H_0,F\}(x,\xi)=G(x,\xi)-G_1(p) 
\end{equation}
and $F$ is polynomial with no diagonal term when written as a function of $(z,\bz)$ (\emph{i.e.} of the form $z^l\bz^l$)  

Moreover:  

\begin{enumerate}
\item $F$ is an homogeneous polynomial of degree $k$  and is entirely determined by the extra-diagonal terms of  $G$, \emph{i.e.} of the form $z^l\bz^m$ ($l\neq m$) with $z=(x+i\xi)/\sqrt 2$ 
\item $G_1$  is an homogeneous polynomial of degree $\frac{k}{2}$ if $k$ is even, zero otherwise. Moreover, $G_1(z\bz)$ is equal to the sum of the diagonal terms of $G$. 
\end{enumerate}
\end{lemma}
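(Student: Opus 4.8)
The plan is to diagonalize the linear map $F\mapsto\{H_0,F\}$ in the complex coordinates $z_j=(x_j+i\xi_j)/\sqrt2$, in which $H_0=\sum_{j=1}^n\theta_j z_j\bz_j$. First I would record that $\{z_j,\bz_k\}=-i\delta_{jk}$ and deduce, for all multi-indices $l,m\in\bbN^n$,
\[
\{H_0,\,z^l\bz^m\}=i\,\langle\theta,l-m\rangle\,z^l\bz^m .
\]
Thus $\{H_0,\cdot\}$ preserves the finite-dimensional space $\calP_k$ of homogeneous polynomials of degree $k$ in $(x,\xi)$ and is diagonal in its monomial basis. The rational independence of $\theta_1,\dots,\theta_n$ says precisely that $\langle\theta,l-m\rangle=0$ forces $l=m$; hence $\ker\{H_0,\cdot\}\cap\calP_k$ is spanned by the \emph{diagonal} monomials $z^l\bz^l$ ($2|l|=k$), while $\{H_0,\cdot\}$ restricts to a bijection of the complementary \emph{extra-diagonal} subspace onto itself (and maps each extra-diagonal monomial to a nonzero scalar multiple of itself).

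Next I would solve the equation term by term. Writing $G=\sum_{|l|+|m|=k}c_{lm}z^l\bz^m$, reality of $G$ gives $c_{ml}=\overline{c_{lm}}$. Define
\[
G_1(p):=\sum_{2|l|=k}c_{ll}\,p^l,\qquad
F:=\sum_{\substack{|l|+|m|=k\\ l\neq m}}\frac{c_{lm}}{i\,\langle\theta,l-m\rangle}\,z^l\bz^m .
\]
Then $F$ is well defined (all denominators are nonzero), homogeneous of degree $k$, has no diagonal term by construction, and depends only on the extra-diagonal part of $G$; it is real because $\overline{c_{lm}/(i\langle\theta,l-m\rangle)}=c_{ml}/(i\langle\theta,m-l\rangle)$. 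Likewise $G_1$ is real since $c_{ll}\in\bbR$, is homogeneous of degree $k/2$ in $p$ (equivalently of degree $k$ in $(x,\xi)$), vanishes when $k$ is odd since then no $l$ satisfies $2|l|=k$, and by construction $G_1(z\bz)$ is the sum of the diagonal monomials of $G$. Substituting into the eigenvalue identity above yields $\{H_0,F\}=G-G_1(p)$, establishing existence together with all the stated structural properties.

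For uniqueness, suppose $(F,G_1)$ and $(F',G_1')$ both meet the conclusion. Then $\{H_0,F-F'\}=-(G_1-G_1')(p)$; the left side lies in the extra-diagonal subspace of $\calP_k$ (since $F-F'$ does, $\{H_0,\cdot\}$ preserving that subspace), while the right side is purely diagonal, so both vanish. Injectivity of $\{H_0,\cdot\}$ on the extra-diagonal subspace then gives $F=F'$, whence $G_1=G_1'$.

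Everything here is finite-dimensional linear algebra, so there is no analytic obstacle; the only point demanding care is fixing the sign and normalization conventions for $\{\cdot,\cdot\}$ and for $z_j$ so that the diagonal/extra-diagonal splitting and the reality assertions come out as claimed. It is worth pointing out that the lemma is exactly the principal-symbol shadow of Proposition \ref{vgtp}: the map $\frac{i}{\hbar}[H_0(P,D_t),\cdot]$ there is replaced by $\{H_0,\cdot\}$, the monomials $P^j$ by $p^j$, and the non-degeneracy hypothesis plays the identical role, which is why the proof can be safely left to the reader as the paper does.
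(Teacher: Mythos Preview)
Your proof is correct and is exactly the argument the paper has in mind: the paper omits the proof entirely, remarking only that the lemma is the classical analog of Proposition~\ref{vgtp}, and your diagonalization of $\{H_0,\cdot\}$ in the $(z,\bz)$ coordinates is precisely the principal-symbol reduction of that proposition's proof (with the ODE for the Fourier coefficients $c_p(\alpha)$ there collapsing, in the absence of the $t$ variable, to the scalar division by $i\langle\theta,l-m\rangle$ that you write). Your closing paragraph already identifies this correspondence.
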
 

Just like in the proof of Proposition \ref{W}, one shows recursively, using Lemma \ref{lem:vgtpclass}, the existence of a family of real numbers $(\alpha_{lm})_{l,m\in\bbN}$ 
such that if the functions $(F_N)_{N\geq 3}$ are defined for $N\geq 3$ by: 

\begin{equation}
F_N(z,\bz)=\sum_{\vert l\vert+\vert m\vert=N}\alpha_{lm} z^l\bz^m
\end{equation}
there exists homogeneous polynomials $H^i\in\calC^{\infty}(\bbR^n,\bbR)$ of degree $i$  satisfying, for $N\geq 3$: 

\begin{equation}\label{verif}
 H\circ\exp\chi_{F_{\leq N}}(x,\xi)=\sum_{i=1}^{\lfloor\frac{N}{2}\rfloor} H^{i}(p)+O((x,\xi)^{N+1})
\end{equation}
Here $p=p(x,\xi)=(\frac{x_i^2+\xi_i^2}{2})_{1\leq i\leq n}$,  $F_{\leq N}=\sum_{k=1}^N F_k$ and $\chi_{F_{\leq N}}$ is the vector field: 
\begin{equation}\label{chi}
\chi_{F_{\leq N}}=\sum_{i=1}^n\frac{\partial F_{\leq N}}{\partial\xi_i}\frac{\partial}{\partial x_i}-\frac{\partial F_{\leq N}}{\partial x_i}\frac{\partial}{\partial \xi_i}
\end{equation} 
$\sum_{i=1}^{+\infty} H^i$ (well defined modulo a flat function) is the classical Birkhoff normal form of $H$.

Let us also define for $k\in\bbN^n$, $\vert k\vert\geq 3$, $a_k=\frac{1}{k!}\frac{\partial^{\vert k\vert} R}{\partial x^k }(0)$. We observe that, for $k\in\bbN^n$: 

\begin{equation}
\begin{split}
x^{k}=\left(\frac{z+\bz}{\sqrt{2}}\right)^k
&=\frac{1}{\sqrt{2}^{\vert k\vert}}\sum_{\substack{(l,m)\in\bbN^n \\l+m=k}} \prod_{j=1}^n \binom{k_j}{m_j} \  z^l\bz^m
\end{split}
\end{equation}

Let us define $\mathcal{K}=\{k\in\bbN^n, \vert k\vert\geq 3\}\setminus 2\bbN^n$.
By lemma \ref{lem:vgtpclass}, there exists a unique homogeneous polynomial of degree $\vert k\vert\geq 3$ with no diagonal terms, such that:

\begin{equation}\label{defIk}
 \ \{H_0,I_k\}(x,\xi)=\left\{\begin{array}{ll} x^k &\text{ if } k\in\calK\\
 x^k - \frac{1}{\sqrt{2}^{\vert k\vert}}\prod_{j=1}^n \binom{k_j}{k_j/2} \  \vert z\vert^{k}&\text{ if } k\in 2\bbN^n\end{array}\right.
\end{equation}

Functions $(F_N)_{N\geq 3}$ and $(H^i)_{i\geq 1}$ are constructed recursively as follows:
let $N\geq 2$ and let us assume that we already constructed $F_3,\dots,F_N$ ($F_2=0$), and $H_1,\dots, H^{\lfloor \frac{N}{2}\rfloor}$ ($H_1(p)=\sum_{i=1}^n\theta_ip_i$). Let us set:  
\begin{equation}
  G_{N+1}(x,\xi)=H\circ\exp \chi_{F_{\leq N}}(x,\xi)-\sum_{i=1}^{\lfloor \frac{N}{2}\rfloor}H^{i}(p)+O(\norm{(x,\xi)}^{N+1})
\end{equation}
and defined $F_{N+1}$ and, if $N$ is odd, $H^{\frac{N+1}{2}}$ by Lemma \ref{lem:vgtpclass}:
\begin{equation}
 \{H_0,F_{N+1}\}(x,\xi)=\left\{\begin{array}{ll} G_{N+1}(x,\xi) &\text{ if } N \text{ is even}\\
 G_{N+1}(x,\xi)-H^{\frac{N+1}{2}}(p) &\text{ if } N \text{ is odd}\end{array}\right.
\end{equation}
We remark that, in our case, $(x,\xi)\mapsto G_{N+1}(x,\xi)-\sum_{\vert k\vert =N+1} a_kx^k$ is a sum of  terms that depend only on $F_{\leq N}$, $(H^i)_{1\leq i\leq \lfloor \frac{N}{2}\rfloor}$ and $(a_k)_{\vert k\vert\leq N}$. Therefore, we get by induction that the function(s): \begin{equation}\label{verif2}
F_{N+1}-\sum_{\vert k\vert =N+1} a_kI_k\ \text{ and when}\ N \ \text{ is odd,}\
H^{\frac{N+1}{2}}(p)-\sum_{\vert l\vert=\frac{N+1}{2}}\frac{a_{2l}}{2^{\vert l\vert}}\prod_{j=1}^n \binom{2l_j}{l_j} \   p^{l} 
\end{equation} depend only on $(a_k)_{\vert k\vert\leq N}$.

Now, let us define, for $k\in\bbN^n$, $(l_k,m_k)\in\bbN^{2n}$ by 
their components : for $i\in\llbracket 1,n\rrbracket$, $(l_k)_i=\lfloor \frac{k_i}{2}\rfloor$, $(m_k)_i=k_i-\lfloor \frac{k_i}{2}\rfloor$.
$k\mapsto (l_k,m_k)$ is a bijective correspondence between $\calK$ and the set $\Lambda$ defined by: \begin{equation}\Lambda=\{(l,m)\in\bbN^{2n}\ \vert\ m-l\in\{0,1\}^n\setminus\{0\}, \vert l\vert+\vert m\vert\geq 3\}\end{equation} 
 
 Moreover, for $k\in\calK$, $I_k$ is entirely determined by \eqref{defIk} 
and is equal  in $z,\bar z$ coordinates to:
\begin{equation}\label{eq:defIkimpair}
I_k(z,\bz)=\frac{1}{\sqrt{2}^{\vert k\vert}}\sum_{\substack{(l,m)\in\bbN^n \\l+m=k}}\frac{\prod_{j=1}^n \binom{k_j}{m_j}}{\theta.(l-m)} \  z^l\bz^m
\end{equation}

Therefore, if $k\in\calK,\ \vert k\vert=N+1$, we get by \eqref{verif2} that:
\begin{equation}\label{truc1}
\alpha_{l_km_k}-\frac{a_k}{\sqrt{2}^{\vert k\vert}}\frac{\prod_{j=1}^n \binom{k_j}{\lfloor k_j/2\rfloor}}{\theta.(l_k-m_k)}
\end{equation} 
depends only on $(a_k)_{\vert k\vert\leq N}$.

If now $k\in 2\bbN^n,\ \vert k\vert=N+1$, 
and if we write $H^{\frac{N+1}2}(p)=\sum_{\vert l\vert=\frac{N+1}{2}}b_lp^l$ we get by \eqref{verif2} that:
\begin{equation}\label{truc2}
b_{k/2}-\frac{a_{k}}{\sqrt{2}^{\vert k\vert}}\prod_{j=1}^n \binom{k_j}{k_j/2} \   
\end{equation}depends only on $(a_k)_{\vert k\vert\leq N}$. 

Therefore we get by \eqref{truc1} and \eqref{truc2} that  the family $(a_k)_{\vert k\vert=N+1}$ can be determined from the terms of order $N+1$ in the Taylor expansion  of the classical Birkhoff normal form, the family $(\alpha_{l_km_k})_{\vert k\vert =N+1}$ and the family $(a_k)_{\vert k\vert\leq N}$. 

So we just proved, by induction, that for any $N\geq 3$, $(a_k)_{\vert k\vert\leq N}$ is determined by the Taylor expansion of the classical Birkhoff normal form up to order $N$ and the family $(\alpha_{lm})_{(l,m)\in\Lambda,\vert l\vert+\vert m\vert\leq N}$.

As shown in \cite{vgtpau,gur}, the Taylor expansion of the classical Birkhoff normal form is determined by the spectrum of $H(x,\hbar D_x)$ in $[V(q_0),V(q_0)+\epsilon], \ \epsilon>0 $. 
\tcb{In fact it is obvious that one can take $\epsilon$ in the $\hbar$-dependent form given in Theorem \ref{hope2} (and Theorem \ref{corbot}) since the proof goes along the trace formula argument, and eigenvalues above this value of $\epsilon$ gives a $\hbar^\infty$ contribution to the trace formula.}

Moreover we have for $N\geq 2$ and $m\in\{0,1\}^n\setminus\{0\}$
\begin{equation*}\begin{split}
\mathcal{O}_{m0}\circ\exp\chi_{F_{\leq N+1}}(x,\xi)&=\mathcal{O}_{m0}(x,\xi)+\{F_{\leq N+1},\mathcal{O}_{m0} \}(x,\xi)+O((x,\xi)^{N+\vert m\vert })\\
&=z^m-\sum_{\substack{\\(l,k)\in\Lambda\\\vert l \vert+\vert k\vert= N+1\\ k-l=m}}\alpha_{lk}\vert z\vert^{2l}\sum_{i=1}^nk_im_i+\dots +O((x,\xi)^{N+\vert m\vert })
\end{split}
\end{equation*}
where $\dots$ stands for  extra-diagonal terms and terms which depends only on  $(\alpha_{lk})_{(l,k)\in\Lambda,\vert l\vert+\vert m\vert\leq N}$. 
Therefore, the diagonal matrix elements of an observable $O_{m0}$ 
is equal,  modulo terms  depending only on $(\alpha_{lk})_{(l,k)\in\Lambda,\vert l\vert+\vert k\vert\leq N}$, to
\begin{equation}
\sum_{\substack{\\(l,k)\in\Lambda\\\vert l \vert+\vert k\vert\leq N+1\\ k-l=m}}\alpha_{lk}\vert \mu\hbar\vert^{l}\sum_{i=1}^nk_im_i+O(\hbar) +O(\vert\mu\hbar\vert^{\frac{N+\vert m\vert }{2}})
\end{equation}

This shows, as in the proof of Theorem \ref{flatmain}, that the $\alpha_{lm},\ (l,m)\in\Lambda$, are all determined, so 
the full Taylor expansion of $R$, hence of $V$, near $q_0$, is completely determined. 
\end{proof}

\section{Explicit construction of Fermi coordinates}\label{bot}
In this section we prove Theorems \ref{mainFermi}, \ref{fermicorbot}, and \ref{hope2fermi}, 
using Lemmas  \ref{lem:diagsymp},  \ref{lem:invariantS} and \ref{lem:diagortho} on linear 
and bilinear 
algebra.
We start by the ``bottom of a well",  toy  model   for the periodic trajectory case.

\subsection{General ``Bottom of a well" case}

\begin{proof}[Proof of Theorem \ref{fermicorbot}]

 Let $(x,\xi)\in T^*(\bbR^n)$ be a system of Darboux coordinates centered at $z_0$. $d^2H_p(z_0)$ is a positive bilinear form on $T_{z_0}(T^*\calM)$, therefore, by lemma \ref{lem:diagsymp}, there exists a local change of variable $\phi$,  symplectic and linear  in the Darboux coordinates, such that: 
 \begin{equation}\label{eq:reducphi}H_p\circ\phi(x,\xi)=H_p(z_0)+\sum_{i=1}^n \theta_i \frac{x_i^2 +\xi_i^2}{2}+O(\norm{(x,\xi)}^3).
 \end{equation}

We will prove that the diagonal matrix elements of the family of pseudodifferential operators $\w^k$ in the system of eigenvectors corresponding to eigenvalues of $H(x,\hbar D_x)$ in $[H_p(z_0),H_p(z_0)+\epsilon(\hbar)]$   
provides an explicit construction of such a symplectomorphism $\phi$ (which is not unique). 

We first start with  the case where the family $(\mathcal\w^k)_{1\leq k\leq 2n^2+n}$ is realized by the example \eqref{defobq}.

Let $S$ be the matrix of $d\phi_{z_0}$ in the basis $(\frac{\partial}{\partial x_1},\frac{\partial}{\partial \xi_1}, \dots,\frac{\partial}{\partial x_n},\frac{\partial}{\partial \xi_n})$. We have for $(i,j)\in\llbracket 1,n\rrbracket^2$ and $s\in\{1,2,3\}$:%
\begin{equation}\label{verifclass}\begin{split}
 \calQ^s_{i,j}
 \circ\phi(x,\xi)&=\left(\sum_{k=1}^n S_{i^s,2k-1}x_k+S_{i^s,2k}\xi_k\right)\left(\sum_{k=1}^nS_{j^s,2k-1}x_k+S_{j^s,2k}\xi_k\right)
\\
&=\sum_{k,k'=1}^nS_{i^s,2k-1}S_{j^s,2k'-1}x_k x_{k'}+\sum_{k,k'=1}^nS_{i^s,2k}S_{j^s,2k'-1}\xi_k x_{k'}\\
&+\sum_{k,k'=1}^nS_{i^s,2k-1}S_{j^s,2k'}x_k \xi_{k'}+\sum_{k,k'=1}^nS_{i^s,2k}S_{j^s,2k}\xi_k \xi_{k'}\\
&=\sum_{k=1}^n\left[S_{i^s,2k-1}S_{j^s,2k-1}+S_{i^s,2k}S_{j^s,2k}\right]z_k \bz_{k}+R,
\end{split}
\end{equation}
where, for $(i,j)\in\llbracket 1,n\rrbracket^2$, $i^s=\left\{\begin{array}{ll}
2i-1 &\text{ if } s\in\{1,2\}\\
2i &\text{ if } s=3
\end{array}\right.$ and $j^s=\left\{\begin{array}{ll}
2j &\text{ if } s\in\{1,3\}\\
2j-1 &\text{ if } s=2
\end{array}\right.$,
and $R$ is a linear combination of  terms of the form $z_kz_{k'}$ ($(k,k')\in\llbracket 1,n\rrbracket$) and $z_k\bz_{k'}$ ($(k,k')\in\llbracket 1,n\rrbracket$, $k\neq k'$).

Let $A_\phi$ be any Fourier integral operator implementing locally  $d\phi_{z_0}$ and $\vert\mu\rangle$ defined by \eqref{mu}. The condition that $A_{\phi}^{-1}\vert\mu\rangle$ belongs to the spectral interval defined in Theorem \ref{fermicorbot} reads  as $\vert\mu\hbar\vert\leq\epsilon$. We get from \eqref{verifclass} that: 
\begin{equation}\label{verif3}
\langle\mu\vert A_{\phi}Q^s_{i,j}A_{\phi}^{-1}\vert\mu\rangle=\sum_{k=1}^n\left[S_{i^s,2k-1}S_{j^s,2k-1}+S_{i^s,2k}S_{j^s,2k}\right]\left(\mu_k+\frac{1}{2}\right)\hbar+O(\hbar)
\end{equation}
the term $O(\hbar)$ coming form the subsymbols contribution (let us recall we are microlocalized in a bounded neighborhood of $z_0$).
Therefore \eqref{verif3} for $\vert\mu\hbar\vert\leq\epsilon$ with the condition $\hbar=0(\epsilon)$ determine the values of  $S_{i,2k-1}S_{j,2k-1}+S_{i,2k}S_{j,2k}$  for $(i,j)\in\llbracket 1,2n\rrbracket^2$ and $k\in\llbracket 1,n\rrbracket$.

As  claimed by Lemma \ref{lem:invariantS}, the preceding quantities are independent of the choice of a symplectic matrix $S$ satisfying \eqref{eq:reducphi}. Since, as we already said, such a matrix $S$ is not unique, it is not possible to determine $S$ out of the preceding matrix elements. However, by Lemma \ref{lem:invariantS}, the family $(S_{i,2k-1}S_{j,2k-1}+S_{i,2k}S_{j,2k})_{(i,j)\in\llbracket 1,2n\rrbracket^2,k\in\llbracket 1,n\rrbracket}$ (determined by the preceding matrix elements) allows us to construct explicitly a suitable matrix $S$, hence a suitable symplectomorphism $\phi$.

This ends the proof in the case where the family $(\mathcal\w^k)_{1\leq k\leq 2n^2+n}$ is realized by the example \eqref{defobq}. Let us now consider the general case. The family of Hessian matrices $(d^2\mathcal\w^k(z_0))_{1\leq k\leq 2n^2+n}$, forms a basis of the space of $2n\times 2n$ symmetric matrices. Hence, each $d^2Q_{i,j}^s(z_0)$ for $(i,j)\in\llbracket 1,n\rrbracket^2$ and $s\in\{1,2,3\}$ is a linear combination of the matrices $d^2\mathcal\w^k(z_0)$, $1\leq k\leq 2n^2+n$. Since  $\mathcal\w^k(z_0)=\nabla\mathcal\w^k(z_0)=0$, there exists a family $(\lambda^k_{ijs})_{(i,j,s,k)\in\llbracket 1,n\rrbracket^2\times\{1,2,3\}\times\llbracket 1,2n^2+n\rrbracket}$ of complex numbers such that for any $(i,j,s)\in\llbracket 1,n\rrbracket^2\times\{1,2,3\}$: 
\begin{equation}
\calQ^s_{i,j}(x,\xi)=\sum_{m=1}^{2n^2+n}\lambda^k_{ijs}\mathcal\w^k(x,\xi)+O(\norm{(x,\xi)}^3)
\end{equation}
and therefore:
\begin{equation}
\langle\mu\vert A_{\phi}Q^s_{i,j}A_{\phi}^{-1}\vert\mu\rangle=\sum_{k=1}^{2n^2+n}\lambda^k_{ijs}\langle\mu\vert
A_{\phi}P^k A_{\phi}^{-1}\vert\mu\rangle+O(\hbar)+O(\vert\mu\hbar\vert^2)
\end{equation}
Hence, the family $(S_{i,2k-1}S_{j,2k-1}+S_{i,2k}S_{j,2k})_{(i,j)\in\llbracket 1,2n\rrbracket^2,k\in\llbracket 1,n\rrbracket}$ is determined just as before, and this ends the proof in the general case.
\end{proof}

\subsection{The``Schrödinger case"}

\begin{proof}[Proof of Theorem \ref{hope2fermi}] Let $x\in \bbR^n$ be \textbf{any} system of local coordinates centered at $q_0\in\calM$, and $(x,\xi)\in T^*(\bbR^n)$ the corresponding Darboux coordinates centered at $(q_0,0)\in T^*\calM$.  $d^2V(q_0)$ being a positive bilinear form on $T_{q_0}\calM$,  there exists, by Lemma \ref{lem:diagortho}, a local change of variable $u$, linear and  orthogonal in the Darboux coordinates, such that:  

\begin{equation}
V\circ u(x)=\frac{1}{2}\sum_{i=1}^n\theta_i^2 x_i^2+O(x^3)
\end{equation}
where the $\theta_i^2$s are the eigenvalues of $d^2V(q_0)$. 

Let us denote by $U$ the matrix of $du_{q_0}$ written in the basis $(\frac{\partial}{\partial x_1},\dots,\frac{\partial}{\partial x_n})$, and define a symplectomorphism $\phi$ locally by its expression in the Darboux coordinates:  $\phi(x,\xi)=(Ux,U\xi)$.

If $\phi_0$ is the symplectomorphism  sending $(x,\xi)$ to $(\frac{x_1}{\sqrt{\theta_1}},\dots,\frac{x_n}{\sqrt{\theta_n}},\sqrt{\theta_1}\xi_1,\dots,\sqrt{\theta_i}\xi_n)$, and $H$ is the (principal and total) symbol of the considered Schrödinger operator then: 
\begin{equation}\label{eq:diagHV}
H\circ\phi\circ\phi_0(x,\xi)=V(q_0)+\sum_{i=1}^n \theta_i \frac{x_i^2 +\xi_i^2}{2}+O(x^3)
\end{equation}
Just as in proof of Theorem \ref{fermicorbot}, the diagonal matrix elements of the family of the pseudodifferential operators $(Q^2_{ij})_{1\leq i,j\leq n}$ 
%$\w^k$
in the system of eigenvectors corresponding to eigenvalues of $H(x,\hbar D_x)$ in $[V(q_0),V(q_0)+\epsilon(\hbar)]$  determine the family $(U_{ik}U_{jk})_{1\leq i,j,k\leq n}$.
An orthogonal matrix $U$ such that \eqref{eq:diagHV} is verified is not unique, therefore it is not possible to determine the matrix $U$ from the preceding diagonal matrix elements. However, by Lemma \ref{lem:diagortho}, the  family $(U_{ik}U_{jk})_{1\leq i,j,k\leq n}$ does not depend on the suitable matrix $U$ (\emph{i.e.} orthogonal and satisfying \eqref{eq:diagHV}), and as we just saw it is determined by the preceding matrix elements. Therefore, one can determine the absolute values of the coefficients of any suitable matrix $U$, and also, for any $k\in\llbracket 1,n\rrbracket$, an index $i_k\in\llbracket 1,n\rrbracket$, such that $U_{i_kk}\neq 0$. The choice of the sign of $U_{i_kk}$ then determines the sign of every other coefficient of the $k$-th column. Therefore, one can determine the $2^n$ suitable matrices, corresponding to $n$ choices of signs, as claimed by Lemma \ref{lem:diagortho}. Choosing one of them determines (explicitly) a suitable symplectomorphism $\phi$.

\end{proof}

\subsection{The periodic trajectory case}

\begin{proof}[Proof of Theorem \ref{mainFermi}]
Let $X,H(x,\hbar D_x),E,\gamma$ be as in Theorem \ref{mainFermi}.
We first recall \cite{gu,vgtp,wei,ze1} that there exists a (non unique) symplectomorphism $\phi$ from a neighborhood of $\bbS^1$ in $T^*(\R^n\times \bbS^1)$ in a neighborhood of $\gamma$ in $T^{*}(X)$ such that
\begin{equation}\label{eq:H0phi} H_p\circ\phi(x,t,\xi,\tau)=H_0+H_2 \ \text{ and }\  \gamma(t)=\phi(0,t,0,0). \end{equation}

with $H_0$ and $H_2$ as in is defined as in \eqref{H_0} and \eqref{annulordre3}.
Expressing $\phi$ in a system a local coordinates $(x',\xi',t',\tau')$ near $\gamma$ such that $\gamma=\{x'=\xi'=\tau'=0\}$, one can assume that: 
\be\label{verif4} \phi(x,t,\xi,\tau)=\phi_S(x,t,\xi,\tau)=(S(t)(x,\xi),t,\tau+q_S(t,x,\xi))
\ee
Here, for any $t\in\bbS^1$,\ $S(t)$ is a linear symplectic change of variable (identified with its matrix in our system of coordinates), $q_S(t,\cdot,\cdot)$ is quadratic, $q_S(t,0,0)=0$ and
\begin{equation}\label{eq:qu}
dq_S=\left(\sum_{i=1}^{n}\dot{L}_{i+n}(t).(x,\xi) L_i(t)-\dot{L}_i(t).(x,\xi)  L_{i+n}(t) \right).(dx,d\xi)
\end{equation}
where for $i\in\llbracket 1,2n\rrbracket$ and $t\in\bbS^1$, $L_i(t)$ is the $i$-th line of the matrix $S(t)$, $\dot{L }$ the derivation with respect to $t$, and for two line vectors of size $2n$, $u.v$ is their canonical scalar product. 

For $(i,j)\in\llbracket 1,n\rrbracket^2$, $p\in\bbZ$ and $s\in\{1,2,3\}$, let $A_{S}$ be any Fourier integral operator implementing $\phi_S$. We have 
\begin{equation}
\mul A_{S} \w_p^k A_{S}^{-1}\mur=\sum_{k=1}^nc_p\left(S^{\sigma}_{i^s,2k-1}S^{\sigma}_{j^s,2k-1}+S^{\sigma}_{i^s,2k}S^{\sigma}_{j^s,2k}\right)
\left(\mu_k+\frac{1}{2}\right)\hbar+O(\hbar)
\end{equation}  
where $c_p(\cdot)$ maps a function to its $p$-th Fourier coefficient, $\sigma$ is the permutation defined by \eqref{eq:defsigma}, $S^{\sigma}$ is defined by conjugation by the permutation matrix associated to $\sigma$ just as in \eqref{eq:defSsigma}, and 
where, for $(i,j)\in\llbracket 1,n\rrbracket^2$, \\$i^s=\left\{\begin{array}{ll}
2i-1 &\text{ if } s\in\{1,2\}\\
2i &\text{ if } s=3
\end{array}\right.$, and  $j^s=\left\{\begin{array}{ll}
2j &\text{ if } s\in\{1,3\}\\
2j-1 &\text{ if } s=2
\end{array}\right.$. 

Now, just as in the proof of Proposition \ref{etape1}, the coefficients 
$\left(a_1^l(\w_p^k)\right)_{l\in\bbZ}$
determine $c_p\left(S^{\sigma}_{i^s,2k-1}S^{\sigma}_{j^s,2k-1}+S^{\sigma}_{i^s,2k}S^{\sigma}_{j^s,2k}\right)$ for any $k\in\llbracket 1,n\rrbracket$. 
Therefore, if the coefficients 
%$a^l_1(Q^s_{ijp})$ 
$a^l_1(\w_p^k)$
are given for any $l\in\bbZ$, $(i,j)\in\llbracket 1,n\rrbracket^2$, $p\in\bbZ$ and $s\in\{1,2,3\}$, then the functions
 \begin{equation}\label{eq:egAS} A_{i,j,k}:=S^{\sigma}_{i,2k-1}S^{\sigma}_{j,2k-1}+S^{\sigma}_{i,2k}S^{\sigma}_{j,2k}\end{equation} are determined for any $(i,j)\in\llbracket 1,2n\rrbracket^2$ and $k\in\llbracket 1,n\rrbracket$.

An easy adaptation of the proof of Lemma \ref{lem:invariantS} shows that, once the set of functions $(A_{i,j,k})_{(i,j)\in\llbracket 1,2n\rrbracket^2,k\in\llbracket 1,n\rrbracket}$ is given, one can construct explicitly a particular smooth function $\bbS^1\ni t\mapsto S_0(t)$ with values in the set of symplectic matrices, such that equality \eqref{eq:egAS} holds. We also get that any matrix $S^{\sigma}$ such that equality \eqref{eq:egAS} holds is related to $S_0$ by the equality $S^{\sigma}=S_0^{\sigma}U$ where $t\mapsto U(t)$ is a smooth function that takes its values in the set of block diagonal matrices whose diagonal blocks are $2$ by $2$ rotations.

Now let us consider this particular $S_0$ and let $U$ be any smooth function 
that takes his values in the set of block diagonal matrices whose diagonal blocks are $2$ by $2$ rotations. Let us finally define $S$ by the relation $S^{\sigma}=S_0^{\sigma}U$.
Since for any $t\in\bbS^1$, $q_{S}(t,\cdot,\cdot)$ is quadratic, we have: 
\begin{equation}\begin{split}
\mathcal\w_p^0\circ\phi_{S}(x,t,\xi,\tau)&=e^{-2i\pi pt}\tau+e^{-2i\pi pt}q_{S}(t,x,\xi)\\
&=e^{-2i\pi pt}\tau+e^{-2i\pi pt}\sum_{k=1}^n\left(\frac{\partial^2q_{S}}{\partial x_k^2}+\frac{\partial^2q_{S}}{\partial \xi_k^2}\right)(t)z_k\bz_k+R
\end{split}
\end{equation}
where $R$ is a linear combination of  terms of the form $e^{-2i\pi pt}z_kz_{k'}$ ($(k,k')\in\llbracket 1,n\rrbracket$) and $e^{-2i\pi pt}z_k\bz_{k'}$ ($(k,k')\in\llbracket 1,n\rrbracket$, $k\neq k'$).

Just as before, the coefficients %$\left(a_1^l(Q_p)\right)_{l,p\in\bbZ}$ 
$\left(a_1^l(\w_p^0)\right)_{l,p\in\bbZ}$ 
determine the family of  functions (of $t$ only) $\left(\frac{\partial^2q_{S}}{\partial x_k^2}+\frac{\partial^2q_{S}}{\partial \xi_k^2}\right)_{k\in\llbracket 1,n\rrbracket}$.

Now, we get from equation \eqref{eq:qu} that, for $k\in\llbracket 1,n\rrbracket$ and $t\in\bbS^1$: 
\begin{equation}\begin{split}
\frac{\partial^2q_{S}}{\partial x_k^2}(t)+\frac{\partial^2q_{S}}{\partial \xi_k^2}(t)&=\sum_{i=1}^n\dot{S}_{i+n,k}(t)S_{i,k}(t)+\dot{S}_{i+n,k+n}(t)S_{i,k+n}(t)\\  &-\sum_{i=1}^n\dot{S}_{i,k}(t)S_{i+n,k}(t)+\dot{S}_{i,k+n}(t)S_{i+n,k+n}(t)   \end{split}
\end{equation}
For $k\in\llbracket 1,n\rrbracket$ and $t\in\bbS^1$, let us denote by $U_k(t)=\begin{pmatrix}
\cos\theta_k(t)& -\sin\theta_k(t)\\
\sin\theta_k(t)& \cos\theta_k(t)
\end{pmatrix}$ the $k$-th diagonal block of $U(t)$. 
Then, for $j\in\llbracket 1,2n\rrbracket$, $k\in\llbracket 1,n\rrbracket$ and $t\in\bbS^1$:
\begin{equation}
\begin{pmatrix}
S_{j,k}(t)\\
S_{j,k+n}(t) 
\end{pmatrix}=\ ^tU_k(t)\begin{pmatrix}
S_{0,j,k}(t)\\
S_{0,j,k+n}(t) 
\end{pmatrix}
\end{equation}
Therefore: 
\begin{equation}\label{eq:lienS0}
\begin{pmatrix}
\dot{S}_{j,k}(t)\\
\dot{S}_{j,k+n}(t) 
\end{pmatrix}=\ ^tU_k(t)\begin{pmatrix}
\dot{S}_{0,j,k}(t)\\
\dot{S}_{0,j,k+n}(t) 
\end{pmatrix}+\ ^t\dot{U}_k(t)\begin{pmatrix}
S_{0,j,k}(t)\\
S_{0,j,k+n}(t) 
\end{pmatrix}
\end{equation}
Let us now observe that for $k\in\llbracket 1,n\rrbracket$, and any $t\in\bbS^1$:
\begin{equation}\label{eq:Ukdot}
\dot{U}_k(t)\ ^tU_k(t)=\begin{pmatrix} 0 &-1\\1&0\end{pmatrix}
\end{equation}
Therefore, since for any $k\in\llbracket 1,n\rrbracket$ and $t\in\bbS^1$ $U_k(t)$ is an orthogonal matrix and $S_0(t)$ is a symplectic matrix, we get from equations \eqref{eq:lienS0} and \eqref{eq:Ukdot}:\begin{equation}\label{eq:deterthetak} \begin{split}
\frac{\partial^2q_{S}}{\partial x_k^2}(t)+\frac{\partial^2q_{S}}{\partial \xi_k^2}(t)&=\frac{\partial^2q_{S_0}}{\partial x_k^2}(t)+\frac{\partial^2q_{S_0}}{\partial \xi_k^2}(t)\\
&+2\dot{\theta}_k(t)
\end{split}
\end{equation}
Since the function $t\mapsto \frac{\partial^2q_{S}}{\partial x_k^2}(t)+\frac{\partial^2q_{S}}{\partial \xi_k^2}(t)$ has been determined above, and the function $t\mapsto \frac{\partial^2q_{S_0}}{\partial x_k^2}(t)+\frac{\partial^2q_{S_0}}{\partial \xi_k^2}(t)$ is entirely determined by the explicitly constructed function $t\mapsto S_0$, equation \eqref{eq:deterthetak} then determines the function $\dot{\theta}_k$. Therefore, the function $t\mapsto U(t)$, hence the function $t\mapsto S^{\sigma}(t)$, is determined up to right multiplication by a constant block diagonal matrix $U_0$ whose diagonal block matrices are $2$ by $2$ rotations. It is now sufficient to observe that if two functions $t\mapsto S_1(t)$ and $t\mapsto S_2(t)$ are related by the equation: 
\begin{equation}
S_2^{\sigma}=S_1^{\sigma}U_0
\end{equation}
where $U_0$ is a constant matrix, then
\begin{equation}
\phi_{S_2}=\phi_{S_1}\circ\phi_{U_0^{\sigma^{-1}}}
\end{equation}
and, if $U_0$ is a constant block diagonal matrix  whose diagonal block matrices are $2$ by $2$ rotations:
\begin{equation}
H^{0}\circ\phi_{U_0^{\sigma^{-1}}}=H_0
\end{equation}
Finally, the choice of $U_0$ in the determination of $t\mapsto S(t)$ does not change the validity of equation \eqref{eq:H0phi} for $\phi=\phi_S$, and Theorem \ref{mainFermi} is proved. 
\end{proof}

\section{Classical analogs}\label{class}
In this section we prove a classical result, analog to our preceding quantum ones, and motivated by the following straightforward lemma. 

\begin{lemma}\label{four}
Let $O$ be a polynomial operator on $L^2(\R^n\times \bbS^1)$ whose Weyl symbol, expressed in polar and cylindrical coordinates is the function
$\mathcal O(A,\tau,\vp,\tau)$. Then
\be\label{four1}
\mul  O\mur=\int_{\mathbb T^n\times \bbS^1}\mathcal O(\mu\hbar,\nu\hbar,\vp,t)d\vp dt+O(\hbar).
\ee
where for any $j=1\dots n$, \ $x_j+i\xi_j=\sqrt{A_j}e^{i\vp_j}$.
\end{lemma}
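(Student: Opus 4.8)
The plan is to reduce \eqref{four1} to a computation on monomials and then to recognise that the two sides compute the same diagonal average. Since $O$ is a polynomial operator, by Definition \ref{opog} it is a finite linear combination of monomials $\alpha(t)\hbar^p\,\text{Op}^W(z^j\bz^k)D_t^m$ with $\alpha\in C^\infty(\bbS^1,\bbC)$, and both sides of \eqref{four1} are linear in $O$; hence it suffices to prove the identity for a single such monomial, i.e.\ to check that $\mul\alpha(t)\hbar^p\,\text{Op}^W(z^j\bz^k)D_t^m\mur$ coincides, up to $O(\hbar)$, with the average over $\mathbb T^n\times\bbS^1$ of the corresponding Weyl symbol $\alpha(t)\hbar^p z^j\bz^k\tau^m$ evaluated at $A=\mu\hbar$, $\tau=\nu\hbar$ (the $2\pi$ factors attached to the $\bbS^1$-variable playing no role for the statement).

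For the right-hand side, writing $z_i=\sqrt{A_i}e^{i\vp_i}$ gives $z^j\bz^k=\prod_{i=1}^n A_i^{(j_i+k_i)/2}e^{i(j_i-k_i)\vp_i}$, so $\int_{\mathbb T^n}z^j\bz^k\,d\vp$ vanishes unless $j=k$ and equals $A^j$ otherwise, while $\int_{\bbS^1}\alpha(t)\,dt$ is the zeroth Fourier coefficient $c_0(\alpha)$. Hence the symbol average, evaluated at $(A,\tau)=(\mu\hbar,\nu\hbar)$, equals $c_0(\alpha)\hbar^p(\mu\hbar)^j(\nu\hbar)^m$ when $j=k$ and $0$ otherwise; in particular it is itself $O(\hbar)$ as soon as $p\ge1$.

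For the left-hand side I would reuse the bookkeeping already present in the proof of Proposition \ref{etape1}. Since $D_t^m\mur=(2\pi\hbar\nu)^m\mur$ and multiplication by $\alpha(t)$ against the factors $e^{i2\pi\nu t}$ in $\mul$ and $\mur$ selects exactly $c_0(\alpha)$, one obtains $\mul\alpha(t)\hbar^p\,\text{Op}^W(z^j\bz^k)D_t^m\mur=c_0(\alpha)\hbar^p(2\pi\hbar\nu)^m\langle\mu|\text{Op}^W(z^j\bz^k)|\mu\rangle$. The operator $\text{Op}^W(z^j\bz^k)$ shifts each $\mu_i$ by $k_i-j_i$, so its diagonal element vanishes unless $j=k$; and (as in the proof of Lemma \ref{monotoPO}, assertion $(\mathrm A_l)$) $\text{Op}^W(z^j\bz^j)$ differs from the ordered product $a^{*j}a^{j}$ (in the multi-index notation $X^j=\prod_iX_i^{j_i}$) only by a linear combination of operators $\hbar^{p'}\text{Op}^W(z^{j'}\bz^{k'})$ with $p'\ge1$, which contribute $O(\hbar)$ on the diagonal; by \eqref{prop}, $\langle\mu|a^{*j}a^{j}|\mu\rangle=\prod_i\mu_i(\mu_i-1)\cdots(\mu_i-j_i+1)\hbar^{j_i}=(\mu\hbar)^j+O(\hbar)$ uniformly for $|\mu\hbar|$ bounded. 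Collecting these, $\mul O\mur=c_0(\alpha)\hbar^p(\mu\hbar)^j(2\pi\hbar\nu)^m\delta_{jk}+O(\hbar)$, which matches the right-hand side and proves \eqref{four1}.

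There is no genuine obstacle here — this is the elementary lemma announced in the text — the only point needing a little care being the conversion of the Weyl-ordered symbol $z^j\bz^k$ into ordered products of the $a_i$, $a_i^*$ together with the control of the attendant $\hbar$-corrections, which is precisely what assertion $(\mathrm A_l)$ in the proof of Lemma \ref{monotoPO} supplies; note also that the $\frac12$ in $P_i=a_i^*a_i+\frac{\hbar}{2}$ is harmless, being absorbed into the error term $O(\hbar)$ once $|\mu\hbar|+|\nu\hbar|$ stays bounded.
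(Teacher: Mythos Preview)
Your proof is correct and supplies exactly the details the paper omits: the lemma is stated in the text as ``straightforward'' and is not proved there, so your reduction to monomials together with the torus-average computation and the reordering argument from $(\mathrm A_l)$ in Lemma~\ref{monotoPO} is precisely the expected justification. The only loose end is the $2\pi$ factor in the $\tau$-variable (your left-hand side produces $(2\pi\nu\hbar)^m$ while the stated right-hand side is evaluated at $\tau=\nu\hbar$), but this is a notational inconsistency already present in the paper's statement rather than a flaw in your argument.
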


We concatenate analogs of Theorems \ref{mainFermi} and \ref{main} in the following

\begin{theorem}\label{classmain} 
Let $\gamma$ be a non-degenerate elliptic periodic trajectory of the Hamiltonian flow generated by a proper smooth Hamiltonian function $H$. 
Let  $\mathcal\w_p^k$ be functions satisfying in a local symplectic system of coordinates $(x,t,\xi,\tau)\in  T^*(\R^n\times \bbS^1)$ such that $\gamma=\{x=\xi=\tau=0\}$:
\begin{equation}\mathcal\w^0_p(x,t,\xi,\tau)=e^{-2i\pi pt}\tau \ \text{ and }\ \mathcal\w^k_p(x,t,\xi,\tau)=e^{-2i\pi pt}\uv^k(x,\xi),\ k=1,\dots, 2n^2+n\end{equation}
 with the property that 
\textbf{$\mathcal\uv^k(0)=\nabla\mathcal\uv^k(0)=0$ and the  Hessians  $d^2\mathcal\uv^k(0)$ are linearly independent}.

Let $\Phi$ be the formal (unknown a priori) symplectomorphism which leads to the Birkhoff normal form near $\gamma$ and $(A,\tau,\vp,t)$ the corresponding 
(formal and also unknown a priori) action-angle coordinates such that $\gamma=\{A=\tau=0\}$. Let us define near $A=\tau=0$ the following ``average" quantities
\be\label{matel}
\overline{\mathcal \w_p^k}(A,\tau):=\int_{\mathbb T^n\times \bbS^1}\mathcal \w_p^k\circ\Phi(A,\tau,\vp,t)d\vp dt.
\ee
 
Then the knowledge of $\nabla\overline{\mathcal{\w}_p^k}(0,0)$ for $k=1,\dots,2n^2+n$,
determines (in a constructive way) an explicit  system of Fermi coordinates near $\gamma$.
\vskip 0.5cm

Moreover, let now $(x,t,\xi,\tau)\in  T^*(\R^n\times \bbS^1)$ be \textbf{any} system of Fermi coordinates near $\gamma$ and let for $(m,n,p)\in\bbN^{2n}\times\bbZ$, $\calO_{mnp}$, $\calO_p$ be functions satisfying in a neighborhood of $\gamma$:  
\be
\left\{\begin{array}{rcl}
\mathcal{O}_{mnp}(x,t,\xi,\tau)&=&e^{-i2\pi pt}\prod_{j=1}^n\left(\frac{x_j+i\xi_j}{\sqrt{2}}\right)^{m_j}\left(\frac{x_j-i\xi_j}{\sqrt{2}}\right)^{n_j}+O\left(\vert A\vert+\vert\tau\vert)^{\frac{\vert m\vert+\vert n\vert+1}{2}}\right)\\
\mathcal{O}_{p}(x,t,\xi,\tau)&=&e^{-i2\pi pt}\tau+O\left(\vert A\vert+\vert\tau\vert)^{\frac{3}{2}}\right)
\end{array}\right.
\ee

Then the knowledge of the Birkhoff normal form near $\gamma$ and of the Taylor expansion at $A=\tau=0$ up to order $N$ of  $\overline{\mathcal O}_{mnp}$, $\overline{\mathcal O}_{q}$, defined as in \eqref{matel}  for
\begin{enumerate}
\item $0<\vert m\vert+\vert n\vert\leq N$
\item  $\forall j=1\dots n,\ m_j=0$\ \textbf{or} $ n_j =0$ 
\item $p\in\bbZ$, $q\in\bbZ^*$
\end{enumerate}
determines the Taylor expansion of the ``true" Hamiltonian $H$ up to the same order in the picked-up system of Fermi coordinates.
\end{theorem}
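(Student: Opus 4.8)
The plan is to carry out the purely classical counterparts of the arguments behind Theorems \ref{mainFermi} and \ref{flatmain}, Lemma \ref{four} being exactly the bridge: the average $\overline{\calO}(A,\tau)$ of \eqref{matel} is the $\hbar=0$ value of the function $f$ produced by Proposition \ref{etape1}, namely the angular average of the principal symbol of $\U{N}O\Uinv{N}$, which is $\calO\circ\Phi$ with $\Phi$ the classical normalizing symplectomorphism. Hence everything in Sections \ref{proof} and \ref{bot} transcribes upon replacing $\U{N}O\Uinv{N}$ by $\calO\circ\exp\chi_F$, the commutators $\frac1{i\hbar}[\,\cdot,\cdot\,]$ by Poisson brackets $\{\,\cdot,\cdot\,\}$, the quantum conjugation by $\Phi=\exp\chi_F$ with $F=\sum_{q\geq3}F_q$ ($F_q$ a $t$-dependent homogeneous polynomial in $(z,\bar z,\tau)$ built by the $t$-dependent classical analogue of Proposition \ref{W}, solvability being as in Proposition \ref{vgtp}), and the matrix elements $\mul\cdot\mur$ by $\int_{\mathbb T^n\times\bbS^1}(\,\cdot\,)\,d\vp\,dt$.

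For the first assertion I would follow the proof of Theorem \ref{mainFermi} line by line. Write $\Phi=\phi_S$ in a local symplectic chart as in \eqref{verif4}, $\Phi(x,t,\xi,\tau)=(S(t)(x,\xi),t,\tau+q_S(t,x,\xi))$, with $S(t)$ a loop of linear symplectic matrices and $q_S$ the quadratic shear \eqref{eq:qu}. Since $\uv^k(0)=\nabla\uv^k(0)=0$, the function $\mathcal\w_p^k\circ\Phi$ agrees, to second order in $(x,\xi)$ and first order in $\tau$, with the pull-back by $S(t)$ of a quadratic form times $e^{-2i\pi pt}$; integrating over the angles annihilates every monomial $z_jz_{j'}$, $\bar z_j\bar z_{j'}$ and $z_j\bar z_{j'}$ with $j\neq j'$, so that $\overline{\mathcal\w_p^k}(A,\tau)=\sum_{k'}c_p(\,\cdot\,)A_{k'}+O((|A|+|\tau|)^{3/2})$, with the extra term $\delta_{p0}\tau$ for $\mathcal\w_p^0$. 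Thus $\nabla\overline{\mathcal\w_p^k}(0,0)$ produces precisely the Fourier coefficients $c_p(S^{\sigma}_{i^s,2k-1}S^{\sigma}_{j^s,2k-1}+S^{\sigma}_{i^s,2k}S^{\sigma}_{j^s,2k})$ and $c_p(\partial^2_{x_k}q_S+\partial^2_{\xi_k}q_S)$ that are the input of the reconstruction in the proof of Theorem \ref{mainFermi}: one then invokes Lemmas \ref{lem:diagsymp} and \ref{lem:invariantS} to build an explicit loop $S_0(t)$, and uses the $\mathcal\w_p^0$ data to pin down the residual block-rotation through \eqref{eq:deterthetak}, yielding an explicit system of Fermi coordinates word for word as in Section \ref{bot}.

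For the second assertion I would induct on the total order $N$, combining the classical normal-form recursion (in the non-resonant gauge $\int_{\bbS^1}\alpha_{jjs}(t)\,dt=0$, so that $H\circ\Phi=h(A,\tau)$ is the given normal form) with the Poisson-bracket computation of Lemma \ref{equivalent} at $\hbar=0$, exactly as in the proof of Theorem \ref{hope2}. Assume $H$, hence $F=\sum F_q$ and its coefficients $\alpha_{jks}(t)$, are known up to order $N-1$. Expanding $\calO_{mnp}\circ\Phi=\calO_{mnp}+\{F,\calO_{mnp}\}+\half\{F,\{F,\calO_{mnp}\}\}+\cdots$ and averaging, the contributions of $F_{\leq N-1}$ are known and the leading new term is $\overline{\{F_N,\calO_{mnp}\}}$; by the classical version of Lemma \ref{equivalent} this average is a polynomial in $(A,\tau)$ whose coefficient of $A^{\max(j,k)}\tau^{s}$ is a nonzero multiple of $\alpha_{jksp}(\sum_i(j_in_i-k_im_i)/A_i-2\pi ps/\tau)$-type quantities, the nonvanishing of the relevant frequency being precisely the hypothesis that the $\theta_i$ and $\pi$ are independent over $\mathbb Q$. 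Taking $n_i=\max(j_i-k_i,0)$, $m_i=\max(k_i-j_i,0)$ for each target pair $(j,k)$ with $j\neq k$ (so that conditions (1) and (2) hold) and ordering the pairs lexicographically by $(\max(j_i,k_i))_i$ as in Proposition \ref{propp}, one reads off from the Taylor expansion up to order $N$ of $\overline{\calO}_{mnp}$ every extra-diagonal coefficient $\alpha_{jksp}$ with $|j|+|k|+2s=N$; the averages $\overline{\calO}_q$, $q\in\bbZ^*$, supply the remaining diagonal coefficients $\alpha_{jjsd}$ with $d\neq0$, while $\alpha_{jjs0}=0$ by gauge. Hence $F_N$ is fully determined, and then $H$ at order $N$ is an explicit affine function of the degree-$N$ part of the given normal form, of $F_N$, and of the already-known lower-order data, via the normal-form recursion. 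This closes the induction and recovers the full Taylor expansion of $H$ in the chosen Fermi coordinates.

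The main obstacle is the classical analogue of Lemma \ref{equivalent}: showing that after averaging over the angles the bracket $\{F_N,\calO_{mnp}\}$ leaves exactly one resonant monomial, with coefficient a nonzero multiple of the sought coefficient of $F_N$, together with the careful bookkeeping of which Fourier modes in $t$ are recovered by the $\calO_{mnp}$ (all of them, when $j\neq k$) versus by the $\calO_q$ (the nonzero ones, when $j=k$) versus set to zero by the non-resonant gauge (the zero mode on the diagonal). Everything else is a routine transcription of Sections \ref{proof} and \ref{bot} with the quantum layer removed, Lemma \ref{four} guaranteeing that the bookkeeping of orders is literally the leading part of the one already performed in Propositions \ref{etape1} and \ref{propp}.
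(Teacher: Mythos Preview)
Your proposal is correct and follows essentially the same approach as the paper: transcribe Sections \ref{proof} and \ref{bot} to the classical setting via Lemma \ref{four}, replacing conjugation by $\U{N}$ with composition by $\exp\chi_F$, commutators by Poisson brackets, and matrix elements by torus averages, then run the induction of Proposition \ref{propp} on the coefficients of $F_N$ using the classical analogue of Lemma \ref{equivalent} (which the paper writes out as equations \eqref{sigma11}--\eqref{sigma22}). One small clarification: the nonvanishing of the leading coefficient in the averaged bracket $\overline{\{F_N,\calO_{mnp}\}}$ is a purely combinatorial fact (that $j_{i_0}n_{i_0}-k_{i_0}m_{i_0}\neq 0$ on the relevant index set), not a consequence of the rational independence of the $\theta_i$ and $\pi$; the non-resonance hypothesis enters earlier, in the solvability of the homological equation that builds $F$ itself.
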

\begin{proof}
Let us first prove the second part of Theorem \ref{classmain}.
Let $(x,t,\xi,\tau)\in  T^*(\R^n\times \bbS^1)$ be a system of Fermi coordinates near $\gamma$. Let, for $N\geq 3$, $F_N$ be the principal symbol of $W_N$. With the notations of Proposition \ref{W} we write 
\begin{equation}
F_N(z,t,\bz,\tau)=\sum_{\vert j\vert+\vert k\vert+2s=N}\alpha_{0jks}(t)e^{2i\pi pt} z^j\bz^k\tau^s
\end{equation}
 Let $F$ satisfy
\begin{equation}
F\sim \sum_{N=3}^{+\infty} F_N
\end{equation}
With the notations of the proof of Proposition \ref{W}, we have: 
\begin{equation}
H\circ\exp(\chi_{F})(x,t,\xi,\tau)\sim h(p,\tau,0)
\end{equation}

Let $N\geq 3$, $(m,n,p,q)\in\bbN^{2n}\times \bbZ\times\bbZ^*$
and $(j,k,s)\in\bbN^{2n+1}$ satisfy: 
\begin{equation}\label{conditionmnpq}
0<\vert m\vert+\vert n\vert\leq N,\ \   \ m_in_i=0\ \forall i\in\llbracket 1, n\rrbracket, \ \vert j\vert+\vert k\vert +2s=N
\end{equation}
%Let $(j,k,s)\in\bbN^{2n+1}$ satisfy: $\vert j\vert+\vert k\vert +2s=N$
Then, as it has been seen in the proof of Lemma \ref{equivalent}, if $\sigma^1_{jks}$ and $\sigma^2_{jks}$ are the symbols of $\{\alpha_{0jks}(t)e^{2i\pi pt} z^j\bz^k\tau^s,\mathcal{O}_{mnp}\}$ and $\{\alpha_{0jks}(t)e^{2i\pi pt} z^j\bz^k\tau^s,\mathcal{O}_{q}\}$ respectively, we have, if $j+m=k+n$,  
\begin{equation}\label{sigma11}\begin{split}
\int_{\mathbb T^n\times \bbS^1}\sigma^1_{jks}(A,\tau,\vp,t)d\vp dt=& ic_p(\alpha_{0jks})A^{\max(j,k)}\tau^{s}\left(\sum_{i=1}^n\frac{k_im_i-j_in_i}{A_i}+ \frac{2\pi ps}{\tau}\right)\\&+O\left(\vert A\vert+\vert\tau\vert)^{\frac{N+\vert m\vert +\vert n\vert -1}{2}}\right)
\end{split}
\end{equation}
and if $j+m\neq k+n$,
\begin{equation}\label{sigma12}
\int_{\mathbb T^n\times \bbS^1}\sigma^1_{jks}(A,\tau,\vp,t)d\vp dt=O\left(\vert A\vert+\vert\tau\vert)^{\frac{N+\vert m\vert +\vert n\vert -1}{2}}\right)
\end{equation}
We also have, if $j=k$:
\begin{equation}\label{sigma21}
\int_{\mathbb T^n\times \bbS^1}\sigma^2_{jks}(A,\tau,\vp,t)d\vp dt=ic_q(\alpha_{0jks})2\pi q(1+s) A^{j}\tau^s +O\left(\vert A\vert+\vert\tau\vert)^{\frac{N+1}{2}}\right)
\end{equation}
and if $j\neq k$:
\begin{equation}\label{sigma22}
\int_{\mathbb T^n\times \bbS^1}\sigma^2_{jks}(A,\tau,\vp,t)d\vp dt=O\left(\vert A\vert+\vert\tau\vert)^{\frac{N+1}{2}}\right)
\end{equation}

Now, let us set $F_2=0$ and assume that the function $F_{\leq N-1}$ has been determined for some $N\geq 3$. Then for $l\geq 2$, $\{\overbrace{F_{\leq N},\{\dots,\{F_{\leq N}}^{l\ \text{times}},\mathcal{O}_{mnp}\}\}\}$ and $\{\overbrace{F_{\leq N},\{\dots,\{F_{\leq N}}^{l\ \text{times}},\mathcal{O}_{q}\}\}\}$ are determined modulo $O\left(\vert A\vert+\vert\tau\vert)^{\frac{N+\vert m\vert+\vert n\vert-1}{2}}\right)$ and $O\left(\vert A\vert+\vert\tau\vert)^{\frac{N+1}{2}}\right)$ respectively. 

Therefore, by \eqref{sigma11} and \eqref{sigma12}, $\overline{\calO}_{mnp}(A,\tau)$ is equal, modulo $O\left(\vert A\vert+\vert\tau\vert)^{\frac{N+\vert m\vert+\vert n\vert-1}{2}}\right)$ and known terms to:
\begin{equation}
\sum\limits_{\substack{\vert j\vert+\vert k\vert+2s=N\\j=m=k+n}}ic_p(\alpha_{0jks})A^{\max(j,k)}\tau^{s}\left(\sum_{i=1}^n\frac{k_im_i-j_in_i}{A_i}+ \frac{2\pi ps}{\tau}\right)
\end{equation}
and by \eqref{sigma21} and \eqref{sigma22}, $\overline{\calO}_{q}(A,\tau)$ is equal, modulo known terms and $O\left(\vert A\vert+\vert\tau\vert)^{\frac{N+1}{2}}\right)$ to:
\begin{equation}
\sum\limits_{2\vert j\vert+2s=N}ic_q(\alpha_{0jjs})2\pi q(1+s) A^{j}\tau^s
\end{equation}
Thus, just as in the proof of Proposition \ref{propp}, let $(m,n,p,q)\in \bbN^{2n}\times\bbZ\times\bbZ^*$ run over all possible values under condition \eqref{conditionmnpq}, we determine every function $\alpha_{0jks}$, $\vert j\vert+\vert k\vert+2s=N$, hence $F_{N}$, which concludes the proof of the second part of Theorem \ref{classmain}.

The proof of the first part of the Theorem follows the same strategy with respect to Theorem \ref{mainFermi} than the proof of the second part with respect to Proposition \ref{propp}. 
\end{proof}

The last result of this paper will be the classical analog of Theorems \ref{corbot} and \ref{hope2}. 

Let us remind, \cite{gur}, that in the case where $H(x,\xi)=\xi^2+V(x)$  the  classical normal form determines  the Taylor expansion of the potential when the latter is invariant by the symmetry $x_i\to -x_i$ for each $i\in\llbracket1,n\rrbracket$. In the general case the Taylor expansion of the averages, in the sense of \eqref{matel}, of a finite number of classical observables are necessary to recover the full potential.

Let us assume $H\in\calC^{\infty}(T^*\calM,\bbR)$ has a global minimum at $z_0\in T^*\mathcal M$, and let $d^2H_p(z_0)$ be the Hessian of $H$ at $z_0$. Let us define the matrix $\Omega$ defined by  $d^2H_p(z_0)(\cdot, \cdot)=:\omega_{z_0}(\cdot, \Omega^{-1}\cdot)$ where $\omega_{z_0}(\cdot,\cdot)$ is the canonical symplectic form of $T^*\mathcal M$ at $z_0$. The eigenvalues of $\Omega$ being purely imaginary, we denote them by $\pm i\theta_j$ with $\theta_j>0, \ j\in\llbracket1,n\rrbracket$. Let us assume that $\theta_j, j\in\llbracket1,n\rrbracket$ are rationally independent.

\begin{theorem}\label{corbotclass}

The statement of Theorem \ref{classmain} remains valid verbatim by replacing $\gamma$ by $z_0$ and ignoring the variables $t,\tau$.
\end{theorem}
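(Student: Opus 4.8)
The plan is to transcribe the proof of Theorem \ref{classmain} to the case where $\gamma$ is the single point $z_0$, dropping the variables $t$ and $\tau$ everywhere; the simplifications are that the average in \eqref{matel} becomes a pure angle average over $\mathbb T^n$, that the classical Birkhoff generator has no diagonal part, and hence that no analogue of the observables $\mathcal O_q$ is needed. (In Theorem \ref{classmain} the $\tau$-direction observables served only to recover the diagonal coefficients $\alpha_{0jjs}$ of the generator, whose mean in $t$ need not vanish; when $\gamma$ is a point there is no $t$-variable, the diagonal coefficients are constants, and Lemma \ref{lem:vgtpclass} forces them to be zero, so the observables $\mathcal O_{mn}$ alone suffice.)

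For the first assertion, take Darboux coordinates $(x,\xi)$ centered at $z_0$ and, by Lemma \ref{lem:diagsymp}, a linear symplectic map $\phi$ with $H\circ\phi(x,\xi)=H(z_0)+\sum_i\theta_i\frac{x_i^2+\xi_i^2}{2}+O(\norm{(x,\xi)}^3)$; the formal symplectomorphism $\Phi$ of the theorem has such a $\phi$ as $1$-jet, so its linear part is an admissible symplectic matrix $S$. Computing exactly as in \eqref{verifclass} and averaging over the angles — which annihilates every extra-diagonal monomial $z_kz_{k'}$ and $z_k\bz_{k'}$, $k\neq k'$ — one finds $\overline{\mathcal Q^s_{ij}}(A)=\sum_k\left(S_{i^s,2k-1}S_{j^s,2k-1}+S_{i^s,2k}S_{j^s,2k}\right)A_k+O(|A|^{3/2})$, so that $\nabla\overline{\mathcal Q^s_{ij}}(0)$ — equivalently the gradients at $0$ of the averages of a general admissible family $\w^k$, after writing each $\mathcal Q^s_{ij}$ in the basis of symmetric matrices supplied by the independent Hessians $d^2\w^k(z_0)$ — determines the full family $\left(S_{i,2k-1}S_{j,2k-1}+S_{i,2k}S_{j,2k}\right)_{i,j,k}$. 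By Lemma \ref{lem:invariantS} this family is independent of the (non-unique) admissible $S$ and provides an explicit admissible $S$, hence an explicit Fermi system, exactly as in the proof of Theorem \ref{fermicorbot}.

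For the second assertion, fix a Fermi system and run the classical Birkhoff normalization of the proof of Theorem \ref{hope2}, using Lemma \ref{lem:vgtpclass} with $H_0=\sum_i\theta_i\frac{x_i^2+\xi_i^2}{2}$: one obtains $F\sim\sum_{N\geq3}F_N$ with $F_N=\sum_{|j|+|k|=N,\,j\neq k}\alpha_{jk}z^j\bz^k$ (no diagonal term) and the given normal form $h(p)$ such that $H\circ\exp(\chi_F)\sim h$. One then argues by induction on $N$: if $F_{\leq N-1}$ is known, all multiple Poisson brackets of order $\geq2$ of $F_{\leq N}$ with $\mathcal O_{mn}$ are known modulo $O(|A|^{(N+|m|+|n|-1)/2})$, so by the bracket computation of Lemma \ref{equivalent}, with the $t$, $\tau$ and $p$ dependence suppressed, $\overline{\mathcal O_{mn}}(A)$ equals, modulo known terms and $O(|A|^{(N+|m|+|n|-1)/2})$,
\[
\sum_{\substack{|j|+|k|=N\\ j+m=k+n}} i\,\alpha_{jk}\,A^{\max(j,k)}\sum_{i=1}^n\frac{k_im_i-j_in_i}{A_i},
\]
a genuine polynomial in $A$ since $\max(j_i,k_i)\geq1$ whenever $k_im_i-j_in_i\neq0$. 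Letting $(m,n)$ range over all pairs with $0<|m|+|n|\leq N$, $m_in_i=0$, and using for each prescribed $j\neq k$ the choice $m_i=\max(k_i-j_i,0)$, $n_i=\max(j_i-k_i,0)$ — for which $j+m=k+n$ and $\sum_i(k_im_i-j_in_i)/A_i$ is a nonzero combination of the $1/A_i$ — one orders the leading monomials by the lexicographic order on $(\max(j_1,k_1),\dots,\max(j_n,k_n))$ as the order $\prec$ in the proof of Proposition \ref{propp}, and recovers every $\alpha_{jk}$ with $|j|+|k|=N$ by induction along this order. This determines $F_{\leq N}$, whence $H=h\circ\exp(-\chi_{F_{\leq N}})+O(\norm{(x,\xi)}^{N+1})$ gives the Taylor expansion of $H$ to order $N$ in the chosen Fermi coordinates; letting $N\to\infty$ concludes.

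The main difficulty is bookkeeping rather than conceptual: one must verify that, with the $t$ and $\tau$ directions removed, the constraint $m_in_i=0$ together with the absence of diagonal terms in $F$ still exposes every coefficient $\alpha_{jk}$ through the triangular extraction along $\prec$, and that the terms fed in as ``known'' at each inductive step are genuinely reconstructible from the given normal form $h$ and from $F_{\leq N-1}$. No idea beyond the proofs of Theorems \ref{classmain}, \ref{fermicorbot} and \ref{hope2} is required.
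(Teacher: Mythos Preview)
Your proposal is correct and follows exactly the approach the paper indicates: the authors state that the proof of Theorem \ref{corbotclass} is an easy adaptation of the proofs of Theorems \ref{classmain} and \ref{corbot} and omit it entirely, and you have supplied precisely those omitted details. Your observation that the diagonal coefficients of the generator vanish in the bottom-of-well case (by Lemma \ref{lem:vgtpclass}, since with no $t$-variable the zero-mean convention of Remark \ref{nonnul} becomes an honest vanishing), so that no analogue of the $\mathcal O_q$ observables is required, is the one point where the transcription is not purely mechanical, and you have handled it correctly.
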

Let us now enunciate the classical analog of Theorem \ref{hope2} in the case of a Schrödinger operator with potential $V\in\calC^{\infty}(\calM,\bbR)$: 
\begin{theorem}\label{hopeclass}
Let $q_0$ be a global non-degenerate minimum of $V$ on $\mathcal M$. Let us assume that the square roots of the eigenvalues of $d^2V(q_0)$
 are linearly independent over the rationals.

Let $\mathcal\w^k, k=1\dots\frac{n(n+1)}2,$ be  smooth functions on $\mathcal M$ such that
\textbf{$\mathcal\w^k(q_0)=\nabla\mathcal\w^k(q_0)=0$ and the  Hessians  $d^2\mathcal\w^k(q_0)$ are linearly independent}
(an example of such potentials is the family $\mathcal Q_{ij}(x)=x_ix_j$ in a 
local system of coordinates centered at $q_0$).

Then the knowledge of the Birkhoff normal form near $q_0$ and of the Taylor expansion at $A=0$ up  of the  (finite number of) ``average"  $\overline{\mathcal\w^k}(A)$
determines (in a constructive way) an explicit system of Fermi coordinates.

\vskip 0.5cm
Moreover, 
 let $(x,\xi)\in T^*\R^n$ be any system of Fermi coordinates centered at $(q_0,0)$ and $\mathcal O_{m0}$ defined in Theorem \ref{hope2}.

Then the knowledge  of the Taylor expansion at $A=0$ up to order $N\geq 3$ of the  (finite number) ``average" quantities $\overline{\mathcal O_{m0}}$ as in 
\eqref{matel}

together with the Birkhoff normal form itself, determines the Taylor expansion up to order $N$ of $V$ at $q_0$ in the picked-up system of coordinates. 
 
\end{theorem}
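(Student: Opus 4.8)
The plan is to follow the same inverse-spectral strategy as in the proof of Theorem~\ref{hope2}, but with the classical ``average'' quantities $\overline{\mathcal O_{m0}}$ playing the role that the quantum diagonal matrix elements $\langle\mu|O_{m0}|\mu\rangle$ played there, the bridge being Lemma~\ref{four}, which says that $\langle\mu|O|\mu\rangle=\int_{\mathbb T^n}\mathcal O(\mu\hbar,\varphi)\,d\varphi+O(\hbar)$, i.e.\ the leading symbol of the matrix element is exactly the angular average of the Weyl symbol. So every step in the proof of Theorem~\ref{hope2} that uses matrix elements of $O_{m0}$ in the normal-form eigenbasis translates, upon dropping the $O(\hbar)$ corrections and the $t,\tau$ variables, into a statement about the Taylor expansion of $\overline{\mathcal O_{m0}}(A)$ at $A=0$.

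Concretely, I would first invoke the classical Birkhoff normal form construction already set up in the proof of Theorem~\ref{hope2}: Lemma~\ref{lem:vgtpclass} produces, recursively in $N$, homogeneous polynomials $F_N(z,\bz)=\sum_{|l|+|m|=N}\alpha_{lm}z^l\bz^m$ (no diagonal terms) and $H^i(p)$ such that $V(q_0)+H_0+R\big)\circ\exp\chi_{F_{\le N}}=\sum_{i\le \lfloor N/2\rfloor}H^i(p)+O((x,\xi)^{N+1})$. The same computation as there, using the functions $I_k$ of \eqref{defIk} and the bijection $k\mapsto(l_k,m_k)$ between $\mathcal K$ and $\Lambda$, shows that the Taylor coefficients $a_k=\frac{1}{k!}\partial^k R(0)$ of $R$ are recovered order by order from the Taylor expansion of the classical Birkhoff normal form together with the coefficients $(\alpha_{lm})_{(l,m)\in\Lambda}$; cf.\ \eqref{truc1}, \eqref{truc2}. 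The input ``knowledge of the Birkhoff normal form near $q_0$'' is given by hypothesis, so it remains only to recover the $\alpha_{lm}$ from the $\overline{\mathcal O_{m0}}$.

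For that last point I would repeat the final display of the proof of Theorem~\ref{hope2}: for $m\in\{0,1\}^n\setminus\{0\}$,
\begin{equation*}
\mathcal O_{m0}\circ\exp\chi_{F_{\le N+1}}(x,\xi)=z^m-\sum_{\substack{(l,k)\in\Lambda\\|l|+|k|=N+1,\ k-l=m}}\alpha_{lk}|z|^{2l}\Big(\sum_{i=1}^nk_im_i\Big)+\cdots+O((x,\xi)^{N+|m|}),
\end{equation*}
where $\cdots$ are extra-diagonal terms (which integrate to zero over the angles) and terms depending only on $(\alpha_{lk})$ with $|l|+|k|\le N$. Averaging over $\mathbb T^n$ via Lemma~\ref{four} kills the extra-diagonal and the $z^m$ pieces and leaves, for the Taylor coefficient of $\overline{\mathcal O_{m0}}(A)$ of order $\tfrac{N+1}{2}$ in $A$, exactly $-\sum_{(l,k)\in\Lambda,\,|l|+|k|=N+1,\,k-l=m}\alpha_{lk}A^l(\sum_i k_im_i)$ modulo known lower-order data. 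Running $m$ over $\{0,1\}^n\setminus\{0\}$ and inducting on $N$, one solves for all $\alpha_{lk}$ with $(l,k)\in\Lambda$, since (as in Theorem~\ref{hope2}) each $(l,k)\in\Lambda$ is hit by the unique $m=k-l$. Then feed these $\alpha_{lk}$ back into \eqref{truc1}--\eqref{truc2} to get the $a_k$, hence the Taylor expansion of $V$.

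The first part (construction of Fermi coordinates from $\nabla\overline{\mathcal w^k}(0)$, resp.\ $\overline{\mathcal w^k}(A)$) follows the proof of Theorem~\ref{hope2fermi} the same way: by Lemma~\ref{four} the gradient at the origin of $\overline{\mathcal Q^2_{ij}\circ\Phi}$ reproduces the family $(U_{ik}U_{jk})$ of products of entries of the diagonalizing orthogonal matrix $U$, which Lemma~\ref{lem:diagortho} shows to be independent of the (nonunique) choice of $U$ and sufficient to construct one explicitly; the general case reduces to this by writing each $d^2\mathcal Q^2_{ij}(q_0)$ as a linear combination of the linearly independent $d^2\mathcal w^k(q_0)$. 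I expect the only real subtlety — and the point deserving care rather than difficulty — to be the bookkeeping that at each stage the ``$\cdots$'' and $O(\hbar)$-free error terms genuinely involve only previously determined data, exactly as was already checked in the proof of Theorem~\ref{hope2}; since Lemma~\ref{four} makes the classical averages the exact principal symbols of the quantum matrix elements, the argument is a verbatim transcription with the $O(\hbar)$ remainders and the $t,\tau$-dependence simply absent, so no new obstacle arises.
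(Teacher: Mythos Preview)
Your proposal is correct and follows exactly the approach the paper itself indicates: the paper does not give a separate proof of Theorem~\ref{hopeclass} but simply states that, in the line of the proof of Theorem~\ref{classmain}, it is an easy adaptation of the proof of Theorem~\ref{hope2} (and of Theorem~\ref{hope2fermi} for the first part), and omits the details. Your outline supplies precisely those details, running through the same classical Birkhoff construction, the same identities \eqref{truc1}--\eqref{truc2}, and the same extraction of the $\alpha_{lk}$ from the averaged observables. One minor remark: in the purely classical argument you do not actually need to invoke Lemma~\ref{four} to perform the averaging step---the vanishing of the extra-diagonal terms under $\int_{\mathbb T^n}\,d\varphi$ is just the definition \eqref{matel} of $\overline{\mathcal O_{m0}}$; Lemma~\ref{four} is the conceptual bridge explaining why the classical statement is the principal-symbol shadow of Theorem~\ref{hope2}, not a tool in the classical proof itself.
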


 In the line of the proof of  Theorem \ref{classmain} the proofs of Theorem \ref{hopeclass} and \ref{corbotclass}  are easy adaptations of the proofs of Theorem \ref{hope2} and \ref{corbot}. We omit them here.

\vskip 0.6cm

\begin{appendix}

\section{Lemmas on linear and bilinear algebra}

\begin{lemma}\label{lem:diagsymp}
Let $q$ be a positive quadratic form on $\bbR^{2n}$. Then there exists a canonical endomorphism $\phi$ on $\bbR^{2n}$, and a $n$-tuple of positive real numbers $(\lambda_1,\dots,\lambda_n)$, defined as the imaginary part of the eigenvalues of positive imaginary part of the endomorphism defined by: 
\begin{equation}
\langle\cdot;a(\cdot)\rangle_q=\omega(\cdot,\cdot)
\end{equation}
where $\langle\cdot;\cdot\rangle_q$ be the scalar product associated to $q$ and $\omega$ the canonical symplectic form on $\bbR^{2n}$, and such that: 

\begin{equation}\label{eq:SAS}
\forall (x,\xi)\in\bbR^{2n}, \ q(\phi(x,\xi))=\sum_{i=1}^n \lambda_i (x_i^2+\xi_i^2)
\end{equation}
Moreover, if the real numbers $\lambda_1,\dots,\lambda_n$ are pairwise different, and $\phi'$ is an endomorphism of $\bbR^{2n}$. Then $\phi'$ is canonical and satisfies \eqref{eq:SAS} if and only there exists an orthogonal isomorphism $u$ on $\bbR^{2n}$ whose restriction to the plane spanned by $(\frac{\partial}{\partial x_i},\frac{\partial}{\partial \xi_i})$ (for any $i\in\llbracket 1,n\rrbracket$) is a rotation, such that $\phi'=\phi\circ u$. 

\end{lemma}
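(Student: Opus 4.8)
The plan is to prove this as an explicit version of Williamson's symplectic diagonalization theorem, keeping enough control on the construction to pin down the residual ambiguity.

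The key object is the endomorphism $a$ of $\bbR^{2n}$ from the statement, characterized by $\langle x,a(y)\rangle_q=\omega(x,y)$ for all $x,y$. Since $\omega$ is antisymmetric and nondegenerate and $\langle\cdot,\cdot\rangle_q$ is symmetric and positive definite, the first thing to record is that $a$ is invertible and \emph{$q$-skew-adjoint}, i.e. $\langle a x,y\rangle_q=-\langle x,a y\rangle_q$. Hence $-a^2$ is $q$-self-adjoint and positive definite, so it is $\langle\cdot,\cdot\rangle_q$-orthogonally diagonalizable with positive eigenvalues; writing them (with multiplicity) as $\lambda_1^2,\dots,\lambda_n^2$, $\lambda_j>0$, recovers exactly the $\lambda_j$ of the statement, namely the imaginary parts of the eigenvalues of $a$ with positive imaginary part. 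The eigenspaces of $-a^2$ then give an $a$-invariant, $q$-orthogonal splitting of $\bbR^{2n}$.

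For existence I would work on each eigenspace $W$ of $-a^2$, say with eigenvalue $\lambda^2$: the operator $J:=\lambda^{-1}a|_W$ satisfies $J^2=-\mathrm{Id}$ and is $q$-orthogonal, so $(W,\langle\cdot,\cdot\rangle_q,J)$ is a Hermitian vector space (in particular $\dim W$ is even) on which $\omega$ is a nonzero constant multiple of the fundamental form $\langle\cdot,J\cdot\rangle_q$. Choosing a $\langle\cdot,\cdot\rangle_q$-unitary basis of $W$ adapted to $J$ yields real linear coordinates $(y,\eta)$ on $W$ in which $q$ is the standard Euclidean form and $\omega$ is a nonzero multiple of $\sum_i dy_i\wedge d\eta_i$; a diagonal rescaling of these coordinates then normalizes $\omega$ to standard while turning $q$ into $\sum_i\lambda(y_i^2+\eta_i^2)$. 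Assembling these coordinates over all eigenspaces of $-a^2$ produces a canonical endomorphism $\phi$ of $\bbR^{2n}$ with $q\circ\phi(x,\xi)=\sum_i\lambda_i(x_i^2+\xi_i^2)$, which is \eqref{eq:SAS}; every step being explicit, $\phi$ is constructed, not merely shown to exist.

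For the uniqueness clause, assume the $\lambda_i$ pairwise distinct, let $\phi'$ be canonical with $q\circ\phi'(x,\xi)=\sum_i\lambda_i(x_i^2+\xi_i^2)=:q_0(x,\xi)$, and set $u:=\phi^{-1}\circ\phi'$. Then $u$ is canonical and, since $q\circ\phi=q\circ\phi'=q_0$, one gets $q_0\circ u=q_0$; so $u$ preserves both $\omega$ and $q_0$. Hence $u$ commutes with the endomorphism $b_0$ that is to the pair $(q_0,\omega)$ what $a$ is to $(q,\omega)$, because $b_0$ is built naturally out of data preserved by $u$. A direct computation shows $b_0$ acts on the coordinate plane $P_i=\mathrm{span}(\partial_{x_i},\partial_{\xi_i})$ as $\lambda_i^{-1}$ times the rotation by $\pi/2$, so its eigenvalues $\pm i\lambda_i^{-1}$ form $n$ pairwise distinct conjugate pairs; its only invariant $2$-planes are therefore the $P_i$, and $u$ must preserve each $P_i$. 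On $P_i$, $u$ preserves $q_0|_{P_i}$, a positive multiple of the Euclidean form, and $\omega|_{P_i}=dx_i\wedge d\xi_i$, so $u|_{P_i}$ is an orientation-preserving Euclidean isometry, i.e. a rotation; thus $u$ is block diagonal with $2\times 2$ rotation blocks, it is orthogonal and restricts to a rotation on each $P_i$, and $\phi'=\phi\circ u$. Conversely any such $u$ visibly preserves $\omega$ and $q_0$, so $\phi\circ u$ is canonical and satisfies \eqref{eq:SAS}, giving the equivalence. The main obstacle is precisely this uniqueness part: showing cleanly that $\mathrm{Sp}(2n,\bbR)\cap O(q_0)$ reduces, under the distinctness hypothesis, to the torus of block rotations — the reduction via commutation with $b_0$ and the orientation argument ($SO(2)$ rather than $O(2)$) are where care is needed — whereas the existence part is routine constructive linear algebra and the identification of the resulting $\lambda_i$ with those in the statement is bookkeeping.
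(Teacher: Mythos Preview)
Your approach is essentially the paper's: both observe that $a$ is $q$-skew and pass to a $q$-orthonormal basis adapted to $a$ (you via the spectral theorem for the positive self-adjoint operator $-a^2$ and the complex structure $J=\lambda^{-1}a|_W$, the paper by directly invoking the normal form for $q$-antisymmetric operators), then rescale by $\sqrt{\lambda_j}$ to obtain a symplectic basis; the uniqueness arguments are likewise parallel, both identifying the coordinate planes with the eigenspaces of $a^2$ and concluding that the residual freedom on each plane is $\mathrm{Sp}(2,\bbR)\cap O(2)=SO(2)$. One bookkeeping slip: with your convention that the eigenvalues of $-a^2$ are $\lambda_j^2$, the rescaling that normalizes $\omega$ to standard actually turns $q$ into $\sum_j\lambda_j^{-1}(x_j^2+\xi_j^2)$, not $\sum_j\lambda_j(x_j^2+\xi_j^2)$ --- the $\lambda_j$ appearing in \eqref{eq:SAS} are in fact the \emph{reciprocals} of the positive imaginary parts of the eigenvalues of $a$ (the lemma's own statement is imprecise on this point, as the paper's proof, where $\lambda_j a(u_j)=-v_j$, makes clear).
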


\begin{proof}[Proof of Lemma \ref{lem:diagsymp}]
$a$ is antisymmetric with respect to $q$, and therefore there exists a $q$-orthonormal basis of $\bbR^{2n}$ $(u_1,\dots,u_n,v_1,\dots,v_n)$ and a $n$-tuple of positive real numbers $(\lambda_1,\dots,\lambda_n)$ such that, for $j\in\llbracket 1,n\rrbracket$: 
\begin{equation}
\lambda_ja(u_j)=- v_j \text{ and } \lambda_j a(v_j)=u_j
\end{equation}

Now let us set, for $j\in\llbracket 1,n\rrbracket$: 
\begin{equation}
\tilde{u}_j=\sqrt{\lambda_j}u_j \text{ and } \tilde{v}_j=\sqrt{\lambda_j}v_j
\end{equation}

Then, $(\tilde{u}_1,\dots,\tilde{u}_n,\tilde{v}_1,\dots,\tilde{v}_n)$ is a $q$-orthogonal basis of $\bbR^{2n}$ satisfying, for $j\in\llbracket 1,n\rrbracket$, $q(\tilde{u}_j)=\lambda_j$ and $q(\tilde{v}_j)=\lambda_j$, and the preceding properties together with \eqref{eq:SAS} implies that it is also a symplectic basis, which concludes the proof of the first part of Lemma \ref{lem:diagsymp}.

To prove the second part of Lemma \ref{lem:diagsymp}, let us consider another symplectic and orthogonal basis $(u'_{1},\dots,u'_n,v'_1,\dots,v'_n)$ where, for  $j\in\llbracket 1,n\rrbracket$, the $q$-norm of $u'_j$ and $v'_j$ is $\lambda_j$. 
Then, by \eqref{eq:SAS}, for any $j\in\llbracket 1,n\rrbracket$, $a(u'_j)$ is orthogonal to any vector of the basis but $v'_j$ and $\langle v'_j,a(u'_j)\rangle_q=w(v'_j,u'_j)=-1$, therefore $\lambda_j a(u'_j)=-v'_j$, and by the same argument, $\lambda_j a(v'_j)=u'_j$.

Therefore, the plane spanned by $(u_j,v_j)$ and the plane by $(u'_j,v'_j)$  are both the kernel of $a^2+\lambda_j^2$ (2-dimensional since we made the additional assumption the $\lambda_i$s are pairwise different). Therefore, if $\phi$ and $\phi'$ are the endomorphisms which send the canonical basis of $\bbR^{2n}$ to basis $(\tilde{u}_1,\dots,\tilde{u}_n,\tilde{v}_1,\dots,\tilde{v}_n)$ and basis $(u'_{1},\dots,u'_n,v'_1,\dots,v'_n)$ respectively, then one can considerer the restriction to any plane spanned by $(\frac{\partial}{\partial x_i},\frac{\partial}{\partial \xi_i})$ (for any $i\in\llbracket 1,n\rrbracket$) is an orthogonal symplectomorphism from the plane to itself, that is a rotation. 

\end{proof}

Let $\sigma$ be the permutation of $\llbracket 1,2n\rrbracket$ defined by: 
\begin{equation}\label{eq:defsigma}
\forall i\in\llbracket 1,2n\rrbracket,\ \sigma(i)=\left\{\begin{array}{ll}
2i-1\ &\mbox{ si }  i\leq n\\
2(i-n)\ &\mbox { si } i\geq n+1 
\end{array}\right.
\end{equation}
and $M_\sigma$ be the associated permutation matrix (\emph{i.e.} for any $(i,j)\in\llbracket 1,2n\rrbracket^2$, $(M_\sigma)_{ij}=\delta_{\sigma(i),j}$.

Now, let us set, for any matrix $S\in\mathcal{M}_{2n}(\bbR)$: 
\begin{equation}\label{eq:defSsigma}
S_{\sigma}=M_{\sigma}^{-1}SM_{\sigma}.
\end{equation} 

Let us also, for $(i,k)\in\llbracket 1,2n\rrbracket\times \llbracket 1,n\rrbracket$,  denote by $L_{S,i,k}$ the vector of $\bbR^2$ defined by $L_{S,i,k}=\left(\begin{array}{ll}
(S_\sigma)_{i,2k-1}\\(S_\sigma)_{i,2k}
\end{array}\right)\in\bbR^2$. Then, for  $(i,k)\in\llbracket 1,n\rrbracket^2$, $\mathfrak{s}_{i,k}$ will be the  matrix  of size $2$ whose first line is $^tL_{S,2i-1,k}$ and second line $^t L_{S,2i,k}$.

\begin{lemma}\label{lem:invariantS} Let $A$ be a positive matrix of size $2n$. 
Let $\mathcal{S}$ be the  (non-empty by lemma \ref{lem:diagsymp}) set of symplectic matrices satisfying \begin{equation}\label{eq:SHS}
\ ^tS A S=\left(\begin{array}{c|c}
D_{\lambda}&0\\ \hline
0&D_{\lambda}
\end{array}\right)
\end{equation}
where $D_{\lambda}$ is the diagonal matrix with $(\lambda_1,\dots,\lambda_n)$ as $n$-tuple of positive diagonal elements, which we assume pairwise different. 
Then: 
\begin{enumerate}
\item The family $(\langle L_{S,i,k};L_{S,j,k}\rangle)_{i,j)\in\llbracket 1,2n\rrbracket^2,k\in\llbracket 1,n\rrbracket}$ is independent of matrix $S\in\mathcal{S}$. 

\item Once the preceding invariants of  $\mathcal{S}$ given, one can construct explicitly a particular matrix of $\mathcal{S}$ (hence all of them by Lemma \ref{lem:diagsymp}).
\end{enumerate}
\end{lemma}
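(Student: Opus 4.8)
The plan is to separate the two assertions and treat each by exploiting the fact that the only ambiguity in the choice of $S \in \mathcal{S}$ is right multiplication, after conjugation by $M_\sigma$, by a block-diagonal matrix whose diagonal blocks are $2\times 2$ rotations (this is exactly the content of the second part of Lemma \ref{lem:diagsymp}, applied to the positive quadratic form $q$ whose Gram matrix is $A$, since the $\lambda_i$ are pairwise distinct). Concretely, if $S,S' \in \mathcal{S}$ then $S'_\sigma = S_\sigma \cdot R$ where $R = \mathrm{diag}(R_1,\dots,R_n)$ with each $R_k \in SO(2)$. For assertion (1), I would observe that the vector $L_{S',i,k} \in \bbR^2$, being the pair of entries $((S'_\sigma)_{i,2k-1},(S'_\sigma)_{i,2k})$, equals $\ ^tR_k \, L_{S,i,k}$: indeed right multiplication by $R$ acts on the columns $2k-1,2k$ of $S_\sigma$ by the block $R_k$, hence on the pair $L_{S,i,k}$ (read off the $i$-th row) by $\ ^tR_k$. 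Since $R_k$ is orthogonal, $\langle L_{S',i,k};L_{S',j,k}\rangle = \langle \ ^tR_k L_{S,i,k}; \ ^tR_k L_{S,j,k}\rangle = \langle L_{S,i,k};L_{S,j,k}\rangle$, so the whole family $(\langle L_{S,i,k};L_{S,j,k}\rangle)_{(i,j,k)}$ is independent of the representative, which is assertion (1).

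For assertion (2), the task is: given the Gram data $g_{ijk} := \langle L_{S,i,k};L_{S,j,k}\rangle$ (for $(i,j)\in\llbracket 1,2n\rrbracket^2$, $k\in\llbracket 1,n\rrbracket$, valid for \emph{some} and hence every $S\in\mathcal{S}$), reconstruct explicitly one matrix $\widetilde S_\sigma$ whose rows, grouped in the $k$-th pair of columns, have exactly these inner products — then $\widetilde S := M_\sigma \widetilde S_\sigma M_\sigma^{-1}$ will lie in $\mathcal S$. The key remark is that the $2n$ vectors $(L_{S,i,k})_{1\le i\le 2n}$ all live in the two-dimensional plane $\bbR^2$, so the $2n\times 2n$ Gram matrix $G^{(k)} := (g_{ijk})_{i,j}$ is symmetric positive semidefinite of rank (exactly) $2$ — the rank is $2$ rather than less because $S$ is invertible, so for each $k$ the $n\times 2$ block consisting of rows $2i-1,2i$ and columns $2k-1,2k$ (which is $\mathfrak{s}_{i,k}$ stacked over $i$) together with... actually more simply: columns $2k-1,2k$ of $S_\sigma$ are linearly independent, which forces $\mathrm{rk}\, G^{(k)} = 2$. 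I would then simply take a Cholesky/eigenvalue-type factorization $G^{(k)} = \ ^t\!B^{(k)} B^{(k)}$ with $B^{(k)}$ a $2\times 2n$ matrix of rank $2$, read off from an explicit procedure (e.g. pick two rows of $G^{(k)}$ that are independent, and express every $L_{S,i,k}$ as the corresponding linear combination, after normalizing the first one). Declaring the $i$-th row of $\widetilde S_\sigma$, in columns $2k-1,2k$, to be the $i$-th column of $B^{(k)}$, for all $k$, assembles an explicit matrix $\widetilde S_\sigma$; finally one checks $\widetilde S \in \mathcal S$.

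The main obstacle — and the only place needing genuine care — is verifying that the matrix $\widetilde S$ so assembled is actually symplectic and satisfies \eqref{eq:SHS}, rather than just having the right row-inner-products column-block by column-block. The point is that the Gram data $g_{ijk}$ already encode enough: the symplectic condition $\ ^t\!\widetilde S J \widetilde S = J$ and the diagonalization condition \eqref{eq:SHS} can both be rewritten, after conjugating by $M_\sigma$, as conditions on the quantities $\langle L_{\widetilde S,i,k};L_{\widetilde S,j,k}\rangle$ and on cross-terms $\langle L_{\widetilde S,i,k};L_{\widetilde S,j,k'}\rangle$ for $k\neq k'$; but since each $L_{\widetilde S,i,k}$ lies in the same $2$-plane as $L_{S,i,k}$ up to an orthogonal change of basis of $\bbR^2$ (namely $B^{(k)}$ differs from the $S$-data by a $2\times 2$ orthogonal matrix, as two rank-$2$ square roots of the same Gram matrix differ by an orthogonal factor), all these inner products, including the cross-terms, agree with those of the original $S$ — here one uses that $\langle L_{\widetilde S,i,k};L_{\widetilde S,j,k'}\rangle$ for $k\neq k'$ is \emph{not} constrained by $G^{(k)}$ alone, so one must instead argue directly that any $\widetilde S$ with the correct $G^{(k)}$ for every $k$ is of the form $M_\sigma S_\sigma R M_\sigma^{-1}$ for a block rotation $R$, by reconstructing $\widetilde S_\sigma$ column-pair by column-pair and matching against $S_\sigma$. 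Once that structural identification is in place, $\widetilde S \in \mathcal S$ follows from $S \in \mathcal S$ and the invariance of \eqref{eq:SHS} under right block-rotation, completing assertion (2).
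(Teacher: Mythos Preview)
Your treatment of assertion~(1) is correct and is exactly the paper's argument: Lemma~\ref{lem:diagsymp} forces any two $S,S'\in\mathcal S$ to satisfy $S'_\sigma=S_\sigma\,\mathrm{diag}(R_1,\dots,R_n)$ with $R_k\in SO(2)$, whence $L_{S',i,k}={}^tR_k\,L_{S,i,k}$ and the inner products are preserved.

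Assertion~(2), however, has a genuine gap. You correctly observe that two rank-$2$ factorizations $G^{(k)}={}^tB^{(k)}B^{(k)}$ differ by a left $O(2)$ factor. But you then slide from ``orthogonal'' to ``rotation'' without justification: you write that any $\widetilde S$ with the correct Gram data for every $k$ is of the form $M_\sigma S_\sigma R\,M_\sigma^{-1}$ for a \emph{block rotation} $R$. This is false. The Gram matrices $G^{(k)}$ are blind to the orientation of the column pair $(2k-1,2k)$, so your $\widetilde S_\sigma$ differs from $S_\sigma$ by $\mathrm{diag}(O_1,\dots,O_n)$ with $O_k\in O(2)$, not $SO(2)$. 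If even one $O_k$ is a reflection, $\widetilde S$ is \emph{not} symplectic: the symplectic condition ${}^t\widetilde S_\sigma J_\sigma\widetilde S_\sigma=J_\sigma$, read on the $k$-th diagonal block, says $\sum_{i=1}^n\det(\widetilde{\mathfrak s}_{i,k})=1$, and a reflection in block $k$ flips every $\det(\mathfrak s_{i,k})$, turning the sum into $-1$. So the symplectic condition is a determinant (orientation) constraint, not an inner-product constraint, and cannot be recovered from the $G^{(k)}$ alone.

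What is missing is exactly the mechanism the paper supplies. From the Gram data one builds, for each $k$, \emph{two} candidate column-pairs related by a reflection (the paper does this concretely by fixing $v_{i_kk}$ of the correct norm, solving a two-equation system for $v^{\pm}_{j_kk}$, and then determining all other $v^{\pm}_{ik}$ linearly). This produces $2^n$ candidate matrices $T_A$ indexed by $A\subset\llbracket1,n\rrbracket$. The paper then invokes Lemma~\ref{lem:sympconstr} (the identity $\sum_i\det(\mathfrak b_{i,k})=1$ for symplectic $B$) to show that at most one $T_A$ is symplectic, and exhibits the correct $A$ by matching against an abstract $S\in\mathcal S$. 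Your Gram-factorization viewpoint is perfectly fine as a way to organize the reconstruction, but you must add this orientation-selection step; without it the construction does not land in $\mathcal S$.
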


\begin{proof}[Proof of Lemma \ref{lem:invariantS}]
Let us first prove the first point.  
Let $(S,T)\in\mathcal{S}^2$. By Lemma \ref{lem:diagsymp}, there exist $n$ matrices belonging to $SO_2(\bbR)$ and denoted by $O_1,\dots,O_n$, such that: 
\begin{equation}\label{eq:StildeS}
T_{\sigma}=S_{\sigma}\left(\begin{array}{ccc}
 O_1   & &\\ 
&\ddots & \\
& & O_n \end{array}\right) =\left(\begin{array}{ccc}
 \mathfrak{s}_{1,1}O_1   &\cdots &\mathfrak{s}_{1,n}O_n\\ 
\vdots & & \vdots \\
\mathfrak{s}_{n,1}O_1 & \cdots & \mathfrak{s}_{n,n}O_n \end{array}\right)
\end{equation}  

and \eqref{eq:StildeS} is equivalent to: 
\begin{equation}
\forall (i,k)\in\llbracket 1,2n\rrbracket\times \llbracket 1,n\rrbracket, L_{T,i,k}=\ ^t O_k L_{S,i,k}
\end{equation}

Hence  $(\langle L_{S,i,k};L_{S,j,k}\rangle)_{i,j)\in\llbracket 1,2n\rrbracket^2,k\in\llbracket 1,n\rrbracket}$
does not depend on the matrix $S\in\mathcal{S}$ and the first point of Lemma \ref{lem:invariantS} is proven.  

Now, let $S\in\mathcal{S}$, and let $(a_{ijk})_{(i,j)\in\llbracket 1,2n\rrbracket^2,k\in\llbracket 1,n\rrbracket}$ be the family defined by: 
\begin{equation}
\forall (i,j)\in\llbracket 1,2n\rrbracket^2, \ \forall k\in\llbracket 1,n\rrbracket, a_{ijk}=\langle L_{S,i,k};L_{S,j,k}\rangle
\end{equation}
Let us assume that this family is given. 
Two vectors $u$ and $v$ of $\bbR^2$ are independent if and only if: $\langle u;v\rangle^2 < \langle u;u\rangle \langle v;v\rangle$.
Since matrix $S$ is invertible, on can choose, for any $k\in\llbracket 1,n\rrbracket$, a couple of indices $(i_k,j_k)\in \llbracket 1,2n\rrbracket^2$ such that: 
\begin{equation}\label{eq:lib}
a_{i_kj_kk}^2 < a_{i_ki_kk} a_{j_kj_kk} 
\end{equation}
Let $k\in\llbracket 1,n\rrbracket$
Let us choose  a vector $v_{i_{k}k}$, whose norm is $\sqrt{a_{i_{k}i_kk}}>0$.
The following system of equations with unknown $v\in\bbR^2$: \begin{equation}\label{eq:detervjkk}
\left\{\begin{array}{lll}
\langle v_{i_kk};v\rangle&=a_{i_kj_kk}\\
\langle v;v\rangle&=a_{j_kj_kk} 
\end{array}\right.
\end{equation}
admits exactly two solutions (by \eqref{eq:lib}), denoted by $v^{+}_{j_kk}$ et $v^{-}_{j_kk}$ obtained from one another by orthogonal symmetry $R_k$ of axis the line spanned by $v_{i_kk}$.

Let us set $v^{-}_{i_kk}=v^{+}_{i_kk}=v_{i_kk}$. 
Since the families $(v^{+}_{i_kk},v^{+}_{j_kk})$ et  $(v^{-}_{i_kk},v^{-}_{j_kk})$ are two basis of $\bbR^2$, for any $i\in\llbracket 1,2n\rrbracket\setminus \{i_k,j_k\}$, each one of the two systems: 
\begin{equation}\label{eq:detervik}
\left\{\begin{array}{ll}
\langle v_{i_kk};v\rangle&=a_{i_kik}\\
\langle v^{+}_{j_kk};v\rangle&=a_{j_kik} 
\end{array}\right.  \ \text{ et }\ \left\{\begin{array}{ll}
\langle v_{i_kk};v\rangle&=a_{i_kik}\\
\langle v^{-}_{j_kk};v\rangle&=a_{j_kik} 
\end{array}\right. 
\end{equation}
admits exactly one solution denoted respectively by $v^{+}_{ik}$ and $v^{-}_{ik}$, and satisfying relation $v^{-}_{ik}=R_kv^{+}_{ik}$.

We are now able to construct $2^n$ matrices $(T_A)_{A\in\mathcal{P}(\llbracket 1,n\rrbracket)}$ defined, for $A\in\mathcal{P}(\llbracket 1,n\rrbracket)$, by: 
\begin{equation}\label{eq:defTA}
\forall (i,k)\in\llbracket 1,2n\rrbracket\times \llbracket 1,n\rrbracket, L_{T_A,i,k}=\left\{\begin{array}{ll}
v^{+}_{ik}\ \text{ if } \ k\in A\\
v^{-}_{ik} \ \text{ if else}
\end{array}\right. 
\end{equation}

In order to prove the second point of Lemma \ref{lem:invariantS}, it is sufficient to prove the following assertions: 

\begin{enumerate}
\item There exists at least one set $A\in\calP(\llbracket 1,n\rrbracket)$, such that: $T_A\in \calS$. 
\item There is at most one set $A\in\calP(\llbracket 1,n\rrbracket)$, such that $T_A$ is symplectic (and $A$ is determined by family $(a_{ijk})_{(i,j)\in\llbracket 1,2n\rrbracket^2,k\in\llbracket 1,n\rrbracket}$)
\end{enumerate}

Indeed, once those two assertions proved, there will be exactly one set $A\in\calP(\llbracket 1,n\rrbracket)$ such that $T_A$ is symplectic, and it will be an element of $\calS$, constructed from the values of family $(a_{ijk})_{(i,j)\in\llbracket 1,2n\rrbracket^2,k\in\llbracket 1,n\rrbracket}$ only.

Let us prove the first assertion. Let, for any $k\in\llbracket 1,n\rrbracket$, $O_k$ be the unique element of $SO_2(\bbR)$ such that $L_{S,i_k,k}=O_k v_{i_kk}$ (where $S$ is a particular matrix of $\calS$).

The system \eqref{eq:detervjkk} is equivalent to: 

\begin{equation}
\left\{\begin{array}{lll}
\langle L_{S,i_k,k} ;O_k v\rangle&=a_{i_kj_kk}\\
\langle O_k v;O_k v\rangle&=a_{j_kj_kk} 
\end{array}\right.
\end{equation}

which admits exactly two solutions: $v^{+}_{j_kk}$ et $v^{-}_{j_kk}$. Hence, for any $k\in\llbracket 1,n\rrbracket$: 

\begin{equation}
L_{S,j_k,k}=O_kv^{+}_{j_kk} \ \text{ or} \ L_{S,j_k,k}=O_kv^{-}_{j_kk}
\end{equation}

Let us define the set $A$ by: \begin{equation}A=\{k\in\bbN\ \vert L_{S,j_k,k}=O_kv^{+}_{j_kk}\}\end{equation}
Since each system \eqref{eq:detervik} admit a unique solution, we obtain:  

\begin{equation}\begin{split}
\forall (i,k)\in\llbracket 1,2n\rrbracket\times \llbracket 1,n\rrbracket, L_{S,i,k}&=\left\{\begin{array}{ll}
O_k v^{+}_{ik}\ \text{ if } \ k\in A\\
O_k v^{-}_{ik} \ \text{ if else}
\end{array}\right. \\
&=O_kL_{T_{A},i,k}\end{split}
\end{equation}
that is: 
\begin{equation}
T_{A,\sigma}=S_{\sigma}\left(\begin{array}{ccc}
 O_1   & &\\ 
&\ddots & \\
& & O_n \end{array}\right) 
\end{equation}
and $T_A\in\calS$ by Lemma \ref{lem:diagsymp}. 

In order to prove the second assertion, let us use the following lemma: 
\begin{lemma}\label{lem:sympconstr}
For any symplectic matrix $B$ of size $2n$, we have: 
\begin{equation}
\forall k\in\llbracket 1,n\rrbracket, \ \sum_{i=1}^n \det (\mathfrak{b}_{i,k})=1
\end{equation}
\end{lemma}
If $A_1$ and $A_2$ are two parts of $\llbracket 1,n\rrbracket$, we get from \eqref{eq:defTA} and  relation $v^{-}_{ik}=R_kv^{+}_{ik}$ that: 
\begin{equation}
\forall (i,k)\in\llbracket 1,2n\rrbracket\times \llbracket 1,n\rrbracket, L_{T_{A_2},i,k}=\left\{\begin{array}{ll}
R_k L_{T_{A_1},i,k}\ &\text{ if } \ k\in A_1\Delta A_2\\
 L_{T_{A_1},i,k} \ &\text{ if else}
\end{array}\right. 
\end{equation} 
where $A_1\Delta A_2$ is the symmetric difference of $A_1$ and $A_2$: $A_1\Delta A_2=(A_1\setminus  A_2)\cup (A_2\setminus A_1)$. Hence: 

\begin{equation}
\forall k\in\llbracket 1,n\rrbracket, \ \sum_{i=1}^n \det \left((\mathfrak{t}_{A_2})_{i,k}\right)=\epsilon_k\sum_{i=1}^n \det \left((\mathfrak{t}_{A_1})_{i,k}\right)
\end{equation}
where, for $k\in\llbracket 1,n\rrbracket$, $\epsilon_k=-1$ if $k\in A_1\Delta A_2$, $\epsilon_k=1$ if else. Since $A_1\Delta A_2=\emptyset$ if and only if $A_1=A_2$, there exists at most one part $A$ of $\llbracket 1,n\rrbracket$ such $T_A$ is symplectic. 
The second assertion, hence the second point of Lemma \ref{lem:sympconstr}, is proven.
\end{proof}

\begin{proof}[Proof of Lemma \ref{lem:sympconstr}]
Since $B$ is a symplectic matrix, matrix $B_{\sigma}$ satisfies: 
\begin{equation}\label{eq:signeL} 
^tB_{\sigma}J_\sigma B_{\sigma}=J_\sigma
\end{equation}

It is sufficient, for $k\in\llbracket 1,n\rrbracket$, to read equality \eqref{eq:signeL} at line $2k$ and column  $2k-1$ to obtain: 

\begin{equation}
\sum_{i=1}^n \det (\mathfrak{b}_{i,k})=1
\end{equation}
\end{proof}

\begin{lemma}\label{lem:diagortho}
Let $A\in\calM_n(\bbR)$ be a positive matrix whose eigenvalues are pairwise different.
Let $D$ a diagonal matrix, similar to $A$. Then there exists exactly $2^n$ orthogonal matrices conjugating $A$ to $D$,
 and they are obtained  one from another by  a possible  change of the sign of each column. 
\end{lemma}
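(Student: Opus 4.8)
The statement to prove is Lemma \ref{lem:diagortho}: given a positive symmetric matrix $A\in\calM_n(\bbR)$ with pairwise distinct eigenvalues, and $D$ a diagonal matrix similar to $A$, there are exactly $2^n$ orthogonal matrices conjugating $A$ to $D$, obtained from one another by sign changes of columns.

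\textbf{Plan.} The plan is to use the spectral theorem together with the genericity hypothesis that the eigenvalues are distinct. First I would note that since $A$ is symmetric and positive, it is orthogonally diagonalizable: there exists an orthogonal $P$ with $^tPAP = D'$ where $D'$ is diagonal with the eigenvalues of $A$ on the diagonal. Since $D$ is similar to $A$, it has the same eigenvalues with the same multiplicities (here all multiplicities are one); so $D$ is obtained from $D'$ by a permutation $\pi$ of the diagonal entries, $D = {}^tM_\pi D' M_\pi$ for the corresponding permutation matrix $M_\pi$ (which is orthogonal). Replacing $P$ by $PM_\pi$, I may assume without loss of generality that $D' = D$, i.e. that we have one fixed orthogonal $P_0$ with $^tP_0 A P_0 = D$.

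\textbf{Key step: classifying all solutions.} Suppose $Q$ is any orthogonal matrix with $^tQAQ = D = {}^tP_0AP_0$. Then setting $R = P_0^{-1}Q = {}^tP_0 Q$, which is orthogonal, we get $^tR D R = D$, i.e. $R$ commutes with $D$ (using $R^{-1} = {}^tR$). Now comes the place where distinctness of eigenvalues is essential: a matrix commuting with a diagonal matrix with pairwise distinct diagonal entries must itself be diagonal. Being simultaneously diagonal and orthogonal forces each diagonal entry of $R$ to be $\pm 1$. Hence $R$ ranges exactly over the $2^n$ diagonal sign matrices $\mathrm{diag}(\epsilon_1,\dots,\epsilon_n)$, $\epsilon_i\in\{\pm1\}$, and conversely every such $R$ gives a valid $Q = P_0 R$ since $^t(P_0R) A (P_0R) = {}^tR D R = D$. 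Therefore the solution set is exactly $\{P_0 R : R = \mathrm{diag}(\epsilon_1,\dots,\epsilon_n),\ \epsilon_i = \pm 1\}$, which has $2^n$ elements (distinct because $P_0$ is invertible), and right-multiplication by such an $R$ is precisely the operation of changing the sign of each column of $P_0$ independently. This proves both the count and the description of how the solutions are related.

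\textbf{Expected obstacle.} There is no serious obstacle here; the lemma is elementary linear algebra. The only point requiring a little care is the reduction at the start (handling the permutation relating $D$ and the ``standard'' diagonalization of $A$), and making sure one states clearly that it is the hypothesis of \emph{pairwise distinct} eigenvalues — and not merely $A$ being symmetric — that pins the commutant of $D$ down to the diagonal matrices, hence the answer to exactly $2^n$ rather than something larger (a product of orthogonal groups on the eigenspaces). I would write the argument in roughly the order above: spectral theorem, reduce to $D$ itself, study $R = {}^tP_0 Q$, observe $R$ commutes with $D$, conclude $R$ is a diagonal sign matrix, and read off the conclusion.
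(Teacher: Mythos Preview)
Your proof is correct and follows essentially the same approach as the paper: fix one orthogonal diagonalizer, show that any other differs from it by an orthogonal matrix commuting with $D$, and use distinctness of eigenvalues to conclude this must be a diagonal sign matrix. The only difference is that you spell out the permutation reduction explicitly, whereas the paper simply asserts the existence of an orthogonal $Q_1$ with $^tQ_1AQ_1=D$ and proceeds directly.
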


\begin{proof}[Proof of Lemma \ref{lem:diagortho}]
 As $A$ is positive, there exists an orthogonal matrix $Q_1$ such that: 
\begin{equation}
Q_1^{-1}AQ_1=\ ^tQ_1AQ_1=D
\end{equation}
Let $Q_2\in GL_n(\bbR)$. Then $Q_2$ is orthogonal and satisfies: $Q_2^{-1}AQ_2=D$ if and only if  $Q_2^{-1}Q_1$ is an orthogonal matrix which commutes to $D$, that is, because the diagonal elements of $D$ are pairwise different,  if and only if $Q_2^{-1}Q_1$ is an orthogonal diagonal matrix. Finally, $Q_2$ is orthogonal and satisfies: $Q_2^{-1}AQ_2=D$ if and only if $Q_2^{-1}Q_1$ is diagonal and its elements belong to $\{-1,1\}$, that is if  $Q_2$ is obtained from $Q_1$ by a possible change of the sign of each column.  
\end{proof}

\tcb{\section{Realizing the Poincaré angles}\label{deuxpi}}

\subsection{The periodic trajectory case}
In this section we indicate how different systems of Fermi coordinates and different Birkhoff normal forms exist for any realization of the Poincaré angles as real numbers and we show how those normal forms are linked to each other. Thus, our results are independent of this ambiguity.
\begin{proposition}\label{periodcbot}
Under the hypothesis of Theorem \ref{main},  the knowledge of the coefficients of the trace formula determines the quantities $e^{i\theta_i},i=1\dots n$. Moreover, let us denote by $\mathcal B_{\theta_1,\dots,\theta_n}(x,\xi,\tau)= E+\sum_{i=1}^n\theta_i \frac{x_i^2+\xi_i^2}{2}+\tau+O((x^2+\xi^2+\vert \tau\vert)^2)$ the Birkhoff normal form of $H_p$ associated to a given choice of angles $\theta_i$. For $k\in\bbZ^n$, let $h_k(x,\xi)=\sum \pi k_i(x_i^2+\xi_i^2)$ and let $\Phi_k$ be the symplectomorphism defined, with the notation of \eqref{chi}, by
\be
\Phi_k(x,\xi,t,\tau)=\left(\exp(t\chi_{h_k})(x,\xi),t,\tau+\pi\sum k_i(x_i^2+\xi_i^2)\right)
\ee
Then  $\mathcal B_{\theta_1,\dots,\theta_n}\circ\Phi=\mathcal B_{\theta_1+2k_1\pi,\dots,\theta_n+2k_n\pi}$
\end{proposition}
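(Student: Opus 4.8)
The plan is to prove two things: first that the trace coefficients determine the $e^{i\theta_i}$, and second that the two Birkhoff normal forms corresponding to angle-choices differing by $2\pi k$ are conjugated by the explicit symplectomorphism $\Phi_k$. The first assertion is essentially quoted from \cite{fried,gu}: the linearized Poincar\'e map of $\gamma$ has eigenvalues $e^{\pm i\theta_i}$, and it is a classical fact that the subprincipal/principal terms in the Gutzwiller expansion \eqref{boj} involve the stability angles only through $e^{i\theta_i}$ (via factors like $\det(I-P_\gamma^r)^{-1/2}$ and the monodromy phases). So I would simply recall this and note that the ambiguity is precisely the lift of $e^{i\theta_i}$ to $\theta_i\in\bbR$.

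For the second assertion, the key computation is to check that $\Phi_k$ has the three required properties: (i) it is a symplectomorphism of $T^*(\bbR^n\times\bbS^1)$ fixing $\gamma=\{x=\xi=\tau=0\}$; (ii) it is well-defined on $\bbS^1$, i.e. periodic in $t$; (iii) it conjugates $\mathcal B_{\theta_1,\dots,\theta_n}$ to $\mathcal B_{\theta_1+2k_1\pi,\dots,\theta_n+2k_n\pi}$. For (i), $\exp(t\chi_{h_k})$ is the time-$t$ flow of the harmonic oscillator Hamiltonian $h_k(x,\xi)=\sum_i\pi k_i(x_i^2+\xi_i^2)$, which acts on the $i$-th symplectic plane as the rotation by angle $2\pi k_i t$; combined with the shift $\tau\mapsto\tau+h_k(x,\xi)$ this is exactly the time-$t$ map of the ``suspended'' flow generated by $\tau+h_k(x,\xi)$ on $T^*(\bbR^n\times\bbS^1)$, hence symplectic, and it visibly fixes the zero section. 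For (ii), since each $k_i\in\bbZ$, the rotation by $2\pi k_i t$ returns to the identity at $t\in\bbZ$, so $\exp(t\chi_{h_k})$ is $1$-periodic in $t$ (recall the convention that the period of $\gamma$ is $1$), and the $\tau$-shift is $t$-independent; thus $\Phi_k$ descends to the cylinder.

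For (iii), I would compute directly. Write $z_i=(x_i+i\xi_i)/\sqrt2$, so $p_i=\frac{x_i^2+\xi_i^2}{2}=|z_i|^2$. Under $\exp(t\chi_{h_k})$ each $z_i$ is multiplied by $e^{2\pi i k_i t}$, hence $p_i$ is invariant, while $\tau\mapsto\tau+\sum_i 2\pi k_i p_i/2=\tau+\pi\sum_i k_i(x_i^2+\xi_i^2)$ — matching the stated formula for $\Phi_k$. Now apply $\mathcal B_{\theta_1,\dots,\theta_n}$: the quadratic-and-higher part $\sum_i\theta_i p_i + (\text{higher order in }p,\tau)$ composed with $\Phi_k$ changes $\tau$ into $\tau+\pi\sum_j k_j(x_j^2+\xi_j^2)=\tau+\sum_j 2\pi k_j p_j$, so $E+\sum_i\theta_i p_i+\tau+O(\cdots)$ becomes $E+\sum_i(\theta_i+2\pi k_i)p_i+\tau+O(\cdots)$. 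The higher-order remainder, being a function of $(p,\tau)$ only (this is what it means to be in Birkhoff normal form), composes to another function of $(p,\tau+\sum_j 2\pi k_j p_j)$, which is again $O((x^2+\xi^2+|\tau|)^2)$, i.e. of the allowed remainder type; and since the normal form on the right-hand side is itself characterized up to such remainders, one gets $\mathcal B_{\theta_1,\dots,\theta_n}\circ\Phi_k=\mathcal B_{\theta_1+2k_1\pi,\dots,\theta_n+2k_n\pi}$.

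The main obstacle is purely bookkeeping: one must be careful that ``Birkhoff normal form'' here means the formal object determined modulo flat functions, so the identity in (iii) is an identity of formal power series in $(p,\tau)$ (or, equivalently, of Taylor expansions at $\gamma$ to all orders); and one must check that the remainder terms are genuinely functions of $p$ and $\tau$ alone so that the conjugation by $\Phi_k$ — which only reshuffles these invariants — stays within the normal-form class. Given the explicit action of $\Phi_k$ on $(z,\bar z,\tau)$ computed above, this is straightforward, and I would present it as a short direct verification rather than invoking any uniqueness machinery.
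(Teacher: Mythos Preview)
Your approach matches the paper's: cite Fried for the recovery of $e^{i\theta_i}$, check that $\Phi_k$ is symplectic and $1$-periodic in $t$, verify its action on the quadratic part, and observe that the higher-order remainder stays in normal form. Two points, however, need correction.

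First, your justification of symplecticity is wrong as stated. The time-$s$ flow of the Hamiltonian $\tau+h_k(x,\xi)$ sends $(x,\xi,t,\tau)$ to $(\exp(s\chi_{h_k})(x,\xi),\,t+s,\,\tau)$: it translates $t$ and leaves $\tau$ fixed, which is not $\Phi_k$. What actually happens (and what the paper means by ``can be checked directly'') is that the $t$-dependent rotation $(x,\xi)\mapsto\exp(t\chi_{h_k})(x,\xi)$ alone is \emph{not} symplectic on $T^*(\bbR^n\times\bbS^1)$---pulling back $\omega$ produces an extra term $dh_k\wedge dt$---and the shift $\tau\mapsto\tau+h_k(x,\xi)$ alone is not symplectic either, producing exactly the opposite term; their composition $\Phi_k$ therefore satisfies $\Phi_k^*\omega=\omega$. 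This is a two-line computation in the $(z,\bar z)$ coordinates you already introduced.

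Second, despite your closing remark, you cannot avoid a uniqueness (or covariance) argument for the higher-order terms. You correctly show that $\mathcal B_{\theta}\circ\Phi_k$ is a function of $(p,\tau)$ with leading part $E+\sum_i(\theta_i+2\pi k_i)p_i+\tau$; but $\mathcal B_{\theta+2\pi k}$ is by definition the \emph{specific} normal form obtained from $H_p$ with that choice of angles, and to conclude equality you need to know that any two symplectic reductions of $H_p$ to a function of $(p,\tau)$ with that leading part coincide. The paper phrases this as ``the algorithmic constructions of the two normal forms are covariantly conjugated by $\Phi_k$'', which is exactly the same content. Under the non-resonance hypothesis this uniqueness is standard, so the fix is simply to invoke it rather than claim to bypass it.
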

\begin{proof}
The first part of the assertion belongs to Fried \cite{fried}. The fact that $\Phi_k$ is a symplectomorphism can be checked directly. Moreover one sees immediately that it conjugates the quadratic in $(x,\xi)$/linear in $\tau$ part of  $\mathcal B_{\theta_1,\dots,\theta_n}$ to the one of $\mathcal B_{\theta_1+2k_1\pi,\dots,\theta_n+2k_n\pi}$. 
Moreover $\mathcal B_{\theta_1,\dots,\theta_n}\circ\Phi_k$ is a function of $\tau$ and $x_i^2+\xi_i^2$ only and it is easy to verify that the algorithmic constructions of the two normal forms are covariantly conjugated by $\Phi_k$.
Therefore, it is equal to $\mathcal B_{\theta_1+2k_1\pi,\dots,\theta_n+2k_n\pi}$.
\end{proof}
\subsection{The ``bottom of the well" case}
\begin{proposition}\label{poincbot}
Under the hypothesis of Theorems \ref{corbot} and \ref{hope2}, the knowledge of the spectrum of $H(x,\hbar D_x)$ in $[H_p(z_0),H_p(z_0)+\epsilon],\ \hbar=o(\epsilon),$ determine the $\theta_i$s up to permutation.
\end{proposition}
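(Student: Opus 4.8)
The plan is to reduce the statement to a spectral counting argument based on the semiclassical Weyl law near the bottom of the well. First I would recall that the spectrum of $H(x,\hbar D_x)$ in the window $[H_p(z_0),H_p(z_0)+\epsilon]$ with $\hbar=o(\epsilon)$ is governed, to leading order, by the quadratic part of $H_p$ at $z_0$, namely $H_p(z_0)+\sum_{i=1}^n\theta_i\frac{x_i^2+\xi_i^2}{2}$. The harmonic-oscillator model attached to this quadratic part has eigenvalues $H_p(z_0)+\sum_{i=1}^n\theta_i(\mu_i+\tfrac12)\hbar$, $\mu\in\bbN^n$, and by the usual semiclassical estimates (as used in the proof of Theorem \ref{hope2}, where the Birkhoff normal form is shown to be reconstructible from exactly this spectral data) the eigenvalues of $H(x,\hbar D_x)$ in the window agree with these up to $O(\hbar^{3/2})$, hence up to errors negligible at the scale $\hbar$. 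So the proposition amounts to the purely elementary claim: the collection of real numbers $\{\sum_{i=1}^n\theta_i(\mu_i+\tfrac12):\mu\in\bbN^n\}$, as a multiset, determines the $n$-tuple $(\theta_i)$ up to permutation.

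The key steps, in order, would be: (1) extract from the spectrum the smallest gap above the ground state energy, which under rational independence of the $\theta_i$ equals $\hbar\min_i\theta_i$ (attained at a single $i$, say $i=1$), so $\theta_1$ is determined; (2) having determined $\theta_1$, remove from the multiset all energies of the form (ground state) $+\,\theta_1 k\hbar$, $k\in\bbN$, i.e. the "tower" generated by the first oscillator mode, and repeat the argument on the reduced multiset to peel off $\theta_2$, and so on inductively; (3) observe that rational independence guarantees there are no accidental coincidences, so at each stage the minimal remaining gap is unambiguously one of the $\theta_i\hbar$ and the peeling is well defined, which yields $(\theta_i)$ up to the order in which they are peeled, i.e.\ up to permutation. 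One can phrase this more robustly via the generating function: the number of lattice points $\mu\in\bbN^n$ with $\sum\theta_i(\mu_i+\tfrac12)\le s$ has, as $s\to\infty$, an asymptotic expansion whose leading term is $\frac{s^n}{n!\,\theta_1\cdots\theta_n}$ and whose subleading structure encodes the symmetric functions of the $\theta_i$; since the $\theta_i$ are the roots of a polynomial determined by these symmetric functions, they are recovered up to permutation. Either formulation works; I would present the inductive peeling since it is the most transparent and directly "constructive."

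I would also note, as in the remark following the trace-formula discussion, that the window can be taken in the $\hbar$-dependent form $[H_p(z_0),H_p(z_0)+\epsilon(\hbar)]$ with $\hbar=o(\epsilon)$ precisely because contributions of eigenvalues above $H_p(z_0)+\epsilon$ are $O(\hbar^\infty)$ in the relevant trace/counting asymptotics, so the hypothesis is exactly what is needed to see enough of the bottom of the well.

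The main obstacle I anticipate is not the combinatorial core — that is genuinely elementary given rational independence — but rather making precise that the $O(\hbar^{3/2})$ (and higher normal-form) corrections to the eigenvalues do not interfere with reading off the gaps $\theta_i\hbar$: one must check that the spacing between the $\theta_i$'s (which are fixed positive reals) dominates the corrections uniformly on the window, i.e.\ that one is comparing quantities of size $\hbar$ against perturbations of size $\hbar^{3/2}$ or smaller, which is where the hypothesis $\hbar = o(\epsilon)$ together with the boundedness of the number of relevant $\mu$'s is used. This is the same type of estimate already invoked in the proof of Theorem \ref{hope2}, so I would simply cite that argument rather than redo it.
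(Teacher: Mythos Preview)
Your proposal is correct and follows essentially the same route as the paper: both first invoke the quantum Birkhoff normal form to identify the low-lying eigenvalues with $H_p(z_0)+\sum_i\theta_i(\mu_i+\tfrac12)\hbar$ up to corrections that are $o(\hbar)$, and then recover the $\theta_i$ by an inductive peeling argument on the resulting lattice of values, using rational independence to rule out coincidences. The paper's peeling is phrased via the sets $\Lambda_k$ rather than your ``remove the tower and look at the next gap,'' but these are equivalent; your alternative generating-function remark is extra and not needed.
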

\begin{proof}
By the quantum normal Birkhoff form construction we know that the bottom part of the spectrum is $\{H_p(z_0)+\sum\limits_{i=1}^n\theta_i(\mu_i+1/2)\hbar+O(\hbar^2),
 %\mu\in\bbN^n, 
 |\mu\hbar|=O(\epsilon)\}$. Therefore, the bottom part of the spectrum determines the set $\Lambda=\{\sum\limits_{i=1}^n\theta_i(\mu_i+1/2), \mu\in\bbN^n\}$. 
Let us now assume that the $\theta_i$s are arranged in increasing order.  $\theta_1/2$ is then equal to the minimum of $\Lambda$. By induction, if $\theta_1,\dots,\theta_k$ are known for some $k$, $1\leq k<n$, let us define $\Lambda_k=\{\sum_{i=1}^k\theta_i(\mu_i+1/2)\}$. Let us set $\lambda_{k+1}:=\min\Lambda\cap\Lambda_k^{c}$. We easily see that $\theta_{k+1}=2\lambda_{k+1}-\sum_{i=1}^k\theta_i$, which concludes the proof. 
\end{proof}

\end{appendix}

\end{document}